\newcommand{\B}{\{0,1\}}
\newcommand{\GHD}{\mathrm{GHD}}
\newcommand{\UDISJ}{\text{\sc UDisj}}
\newcommand{\eGHD}{\overline{\GHD}}
\newcommand{\logGamma}[1]{\mathrm{\log\widetilde{\gamma}_2}}
\newcommand{\dom}[1]{\mathrm{dom}}
\renewcommand{\Re}{\mathbb{R}}
\DeclareMathOperator{\adeg}{\mathsf{\widetilde{deg}}}
\DeclareMathOperator{\arank}{\widetilde{rank}}
\DeclareMathOperator{\rank}{\mathrm{rank}}
\DeclareMathOperator{\DISJ}{\mathsf{DISJ}}
\DeclareMathOperator{\Equality}{\mathsf{EQ}}
\newcommand{\inbrace}[1]{\left \{ #1 \right \}}
\newcommand{\inparen}[1]{\left ( #1 \right )}
\newcommand{\abs}[1]{\lvert #1 \rvert}
\newcommand{\ESetInc}{\mathsf{ESetInc}}
\newcommand{\SetInc}{\mathsf{SetInc}}
\newcommand{\bit}{\set{0,1}}
\newcommand{\domain}{\mathcal{X}}
\newcommand{\sign}{\mathrm{sign}}
\newcommand{\CC}{\mathrm{CC}}
\newcommand{\defemph}[1]{\textbf{\emph{#1}}}
\newtheorem{theorem}{Theorem}[section]
\newtheorem{lemma}[theorem]{Lemma}
\newtheorem{definition}[theorem]{Definition}
\newtheorem{remark}[theorem]{Remark} 
\newtheorem{fact}[theorem]{Fact}
\newtheorem{conjecture}[theorem]{Conjecture}
\title{Quantum and Classical Communication Complexity of Permutation-Invariant Functions}
\date{}
\author{Ziyi Guan\thanks{Partially supported by the Ethereum Foundation.}\\ 	EPFL, Lausanne, Switzerland\\ \textit{ziyi.guan@epfl.ch} 
\And
Yunqi Huang\\University of Technology Sydney, Sydney, Australia\\
\textit{	yunqi.huang@student.uts.edu.au}
\And
	Penghui Yao\thanks{Supported by the National Natural Science Foundation of China under Grant
62332009 and Grant 12347104, the Innovation Program for
Quantum Science and Technology under Grant 2021ZD0302901, the National Natural Science Foundation of China (NSFC)/Research
Grants Council of Hong Kong (RGC) Joint Research Scheme under
Grant 12461160276, the Natural Science Foundation of Jiangsu
Province under BK20243060, and the New Cornerstone Science
Foundation.}\\ State Key Laboratory for Novel Software Technology, New Cornerstone Science Laboratory, Nanjing University, China \\ Hefei National Laboratory, Hefei 230088, China \\ \textit{phyao1985@gmail.com} 
\And
Zekun Ye\thanks{Supported by the National Natural Science Foundation of China under Grant
62332009 and Grant 12347104, the Innovation Program for
Quantum Science and Technology under Grant 2021ZD0302901, the National Natural Science Foundation of China (NSFC)/Research
Grants Council of Hong Kong (RGC) Joint Research Scheme under
Grant 12461160276, the Natural Science Foundation of Jiangsu
Province under BK20243060, and the New Cornerstone Science
Foundation.}\\ State Key Laboratory for Novel Software Technology, New Cornerstone Science Laboratory, Nanjing University, China\\ \textit{yezekun@smail.nju.edu.cn} 
}
\begin{document}
\maketitle

\begin{abstract}
This paper gives a nearly tight characterization of the quantum communication complexity of permutation-invariant Boolean functions. With such a characterization, we show that the quantum and randomized communication complexity of permutation-invariant Boolean functions are quadratically equivalent (up to a polylogarithmic factor {of the input size}). Our results extend a recent line of research regarding query complexity 
to communication complexity, showing symmetry prevents exponential quantum speedups. Furthermore, we show that the Log-rank Conjecture holds for any non-trivial total permutation-invariant Boolean function. Moreover, we establish a relationship between the quantum/classical communication complexity and the approximate rank of permutation-invariant Boolean functions. This implies the correctness of the Log-approximate-rank Conjecture for permutation-invariant Boolean functions in both randomized and quantum settings (up to a polylogarithmic factor {of the input size}).
\end{abstract}


\section{Introduction}

Exploring quantum advantages is a key issue in the realm of quantum computing. Numerous work focuses on analyzing and characterizing quantum advantages, such as  \cite{ATYY17,BGKT19, GS20,CCHL21,Kallaugher21,YZ22}. It is known that quantum computers {can} demonstrate a potential exponential speedup compared with classical computers, such as Simon's problem \cite{simon1994power} and integer factoring  \cite{shor1994algorithms}, for which quantum algorithms exploit the internal structures of the problems (e.g. fast Fourier transform). However, for some highly unstructured problems, such as the unstructured search \cite{Grover96} and collision problems \cite{AS04}, quantum speedups are at most polynomial. In light of the aforementioned phenomenon, Aaronson and Ambainis \cite{AA14} asked: \emph{How much structure is needed for huge quantum speedups?}

Regarding the above {question}, there are two major directions to explore the structure needed for quantum speedups in the query model, a complexity model commonly used to describe quantum advantages. On the one hand, Aaronson and Ambainis \cite{AA14} conjectured the acceptance probability of a quantum query algorithm to compute a Boolean function can be approximated by a classical deterministic algorithm with only a polynomial increase in the number of queries, which is one of most important conjecture in the field of Boolean analysis. On the other hand,  
Watrous conjectured that quantum and randomized query complexities are polynomially equivalent for any permutation-invariant function \cite{AA14}. Along this direction, Aaronson and Ambainis \cite{AA14} initiated the study on the quantum speedups of permutation-invariant functions for query complexity. They demonstrated that any (partial) function that is invariant under full symmetry does not exhibit exponential quantum speedups, thereby resolving the Watrous conjecture.  
Furthermore, Chailloux \cite{Cha19} expanded upon their work by providing a tighter bound and removing a technical constraint. Recently, Ben-David, Childs, Gilyén, Kretschmer, Podder, and Wang \cite{BCG+20} further proved that hypergraph symmetries in the adjacency matrix model allow at
most a polynomial separation between randomized and quantum query complexities.
All the above results demonstrated that symmetries break exponential quantum speedups in the query model.

The study of the roles of ``structure'' in quantum speedups has obtained considerable attention in the query model, which leads us to consider whether we can derive similar results in other computation models. {In this paper, we study the communication complexity model introduced by Yao~\cite{yao1979some}. In the model of communication complexity, the inputs are distributed among two separated parties, each party is assumed to be computationally unbounded. The communication complexity studies the minimum number of bits the players need to exchange to achieve a task. In the model of quantum communication complexity~\cite{yao1993quantum}, the players are allowed to exchange quantum messages. Quantum communication complexity comes to attention as it is also extensively used to demonstrate quantum advantages.} Furthermore, while the exponential gap between quantum and classical communication models has been shown in many works \cite{raz1999exponential,bar2008exponential,GKKRW07,GP08,Montanaro11}, 
there are also some problems in communication models that demonstrate at most polynomial quantum speedups, such as the Set-Disjointness problem \cite{razborov2003quantum} and the (gap) Hamming-Distance-Problem \cite{HSZZ06,She12,Vid12,CR12}.   
Therefore, it is intriguing to consider how much structure is needed for significant quantum speedups in the communication complexity model. More specifically, would symmetry also break quantum exponential advantages in the communication complexity model? In this paper, we investigate a variant of the Watrous conjecture concerning the quantum and randomized communication complexities of permutation-invariant functions  (\Cref{conjecture:cc_version_watrous}). Briefly, a permutation-invariant Boolean function is a function that is invariant under permutations of its inputs. We provide the formal definition below. 
\begin{definition}[Permutation-invariant (PI) functions~\cite{GKS16}]
\label{def:PI}
A (total or partial) function $f:\{0,1\}^n \times \{0,1\}^n \to \{-1,1,*\}$ is permutation-invariant if for all $x,y \in \{0,1\}^n$, and every bijection $\pi:\inbrace{0,...,n-1} \to \inbrace{0,...,n-1}$, $f(x^{\pi}, y^{\pi}) = f(x,y)$, where $x^\pi$ satisfies that $x^{\pi}_{(i)} = x_{\pi(i)}$ for any $i \in \inbrace{0,...,n-1}$.
\end{definition}
\begin{remark}\label{remark:and}
    {For permutation-invariant functions, the function value depends only on the joint type of the input. Specifically, any permutation-invariant function $f$ in \Cref{def:PI} depends only on $|x|, |y|$ and $|x \land y|$. 
    Here $|\cdot|$ is the Hamming weight of the binary string, i.e., the number of 1's in the string, and $|x \land y| $ is the number of $i$ such that $x_i = y_i = 1$.} 
\end{remark}

\begin{conjecture}[Communication complexity version of the Watrous Conjecture]\label{conjecture:cc_version_watrous} 
{Fix $m \in \mathbb{Z^+}$.} For any permutation-invariant function $f:\inbrace{0,1,...,m}^n \times \inbrace{0,1,...,m}^n \rightarrow \inbrace{-1,1,*}$, $R(f) \le Q(f)^{O(1)}$, where $R(f)$ and $Q(f)$ are the randomized and quantum communication complexities of $f$, respectively.
\end{conjecture}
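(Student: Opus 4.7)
The plan is to lift the Aaronson--Ambainis strategy~\cite{AA14} for the query-complexity version of the Watrous conjecture to the communication setting, replacing their query lower bounds for symmetric functions with communication-complexity lower bounds for primitives like $\GHD$ and $\DISJ$. The starting observation is that a permutation-invariant $f:[m]^n\times[m]^n\to\inbrace{-1,1,*}$ depends only on the joint type matrix $T(x,y)\in\mathbb{Z}_{\geq 0}^{m\times m}$ whose $(a,b)$-entry equals $|\{i:x_i=a,\,y_i=b\}|$. Hence $f(x,y)=g(T(x,y))$ for some partial function $g$ defined on an $(m^2-1)$-dimensional discrete simplex of size $n$, and computing $f$ reduces to locating $T(x,y)$ within the $\pm 1$-regions of $g$ via distributed statistics on the coordinates.

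The next step is to introduce a scalar complexity measure $\delta(f)$ equal to the minimum $\ell_1$-distance between a $-1$-histogram and a $+1$-histogram of $g$; intuitively this quantifies how finely $g$ resolves the simplex and hence how hard $f$ is. I would then aim to establish matching bounds in terms of $\delta(f)$: a classical upper bound $R(f)=\tilde{O}(n^2/\delta(f)^2)$ by having Alice and Bob use shared randomness to subsample coordinates and then estimate each of the $m^2$ entries of $T(x,y)$ in parallel through Hamming-distance-type protocols, and a quantum lower bound $Q^*(f)=\Omega(n/\delta(f))$ by embedding a gap-Hamming-distance instance at the scale set by $\delta(f)$ along the direction realising it and invoking the lower bounds of Chakrabarti--Regev~\cite{CR12}, Sherstov~\cite{She12} and Vidick~\cite{Vid12}. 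Combining the two inequalities yields $R(f)\leq Q^*(f)^2\cdot\poly{\log n}$, confirming the conjecture up to a logarithmic factor as promised in the abstract.

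The hard part will be the quantum lower bound in the partial-function regime: the don't-care pattern can in principle hide the $\delta$-gap, so it is not a priori clear that every small-$\delta(f)$ instance genuinely embeds a $\GHD$-hard sub-instance. Overcoming this appears to require a structural dichotomy showing that either a permutation-invariant $f$ contains a fully defined hard core equivalent to $\GHD$ at the critical scale, or $f$ can be decided cheaply by classical statistics on $T(x,y)$; this is the communication-world analogue of the careful case analysis used by Aaronson--Ambainis~\cite{AA14} and Chailloux~\cite{Cha19} at the query level, and I expect it to be the most delicate ingredient. A secondary, more routine issue is absorbing the $\poly{m}$ overhead needed to coordinate $m^2$ parallel subprotocols, which should fold into the polylogarithmic factor via standard sketching reductions.
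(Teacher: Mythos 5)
First, note carefully what the paper actually does with this statement: it is labelled a \emph{conjecture}, and the paper only establishes it for the Boolean alphabet $m=2$ (\Cref{th:main}, $R(f)\le\widetilde{O}(Q^*(f)^2)$ for $f:\{0,1\}^n\times\{0,1\}^n\to\{-1,1,*\}$). The general $[m]^n$ case is explicitly listed as an open problem in \Cref{sec:con} (``Permutation invariance over higher alphabets''). So your plan is attempting an open strengthening of what the paper shows, which is fine in spirit; but even restricted to $m=2$, where it can be checked against \Cref{th:main}, the measure you propose does not work.

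The crux is the choice of complexity measure. You take $\delta(f)$ to be the minimum $\ell_1$-distance between a $-1$ and a $+1$ type histogram and then posit $Q^*(f)=\Omega(n/\delta(f))$ and $R(f)=\widetilde{O}(n^2/\delta(f)^2)$. The lower bound is simply false. Specialise to $m=2$ and sparse set disjointness: take $|x|=|y|=\sqrt{n}$ and ask whether $|x\land y|=0$ or $|x\land y|\ge 1$. The type vectors $(|x\land y|,|x\land\bar y|,|\bar x\land y|,|\bar x\land\bar y|)$ on the two sides of the jump differ in $\ell_1$ by a constant, so $\delta(f)=O(1)$, and your claim reads $Q^*(f)=\Omega(n)$. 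But Razborov's bound (and \Cref{th:main} here) gives $Q^*=\Theta(n^{1/4})$ for this instance, because only $\sqrt{n}$ of the coordinates are ``live.'' The paper's measure $m(f)=\max\sqrt{n_1n_2}/g$, with $n_1,n_2$ the two smallest of $\{a-c,c,b-c,n-a-b+c\}$, exists precisely to repair this: it scales the gap $g$ by the geometry of the type simplex at the jump rather than by the ambient $n$. A pure $\ell_1$-gap quantity cannot see this distinction, so both the lower bound and, consequently, the claimed chain $R\le\widetilde{O}(n^2/\delta^2)\le\widetilde{O}(Q^{*2})$ break down. Separately, the paper does not embed $\GHD$ directly; it reduces to $\ESetInc$ via the parameter-juggling in \Cref{le:SecInc,le:qcc_lowbound1,le:qcc_lowbound2} and then applies the pattern-matrix method with Paturi's approximate-degree bound (\Cref{lemma:partial}). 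That reduction sequence is where the $\sqrt{n_1n_2}$ scaling is actually produced, and it is the ingredient your sketch is missing; the structural dichotomy you flag as ``the hard part'' is indeed where the work lies, and a GHD embedding alone will not supply the position-dependent correction.
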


Furthermore, we study the Log-rank Conjecture proposed by Lovasz and Saks \cite{LS88}, a long-standing open problem in communication complexity. Despite its difficulty on total functions~\cite{10.1145/2629598,10.1145/2724704, 10.1145/3406325.345099}, the conjecture has been shown for several subclasses of total permutation-invariant Boolean functions~\cite{BW01} and $\mathrm{XOR}$-symmetric functions \cite{ZS09}. Lee and Shraibman \cite{LS09an} further proposed the  Log-approximate-rank Conjecture, stating that the randomized communication and the logarithm of the approximate rank of the input matrix are polynomially equivalent. However, this conjecture was later proven false \cite{10.1145/3396695}, even for its quantum analogue~\cite{8948623,8948635}.

In this paper, we investigate both conjectures for permutation-invariant functions.

\begin{conjecture}[Log-rank Conjecture for permutation-invariant functions] 
For any total permutation-invariant function $f:\B^n \times \B^n \rightarrow \inbrace{-1,1}$, $D(f) \le \inparen{\log\rank(f)}^{O(1)}$, where $\rank(f)$ is the rank of the input matrix of $f$.
\end{conjecture}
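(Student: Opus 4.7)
The plan is to use permutation invariance to rewrite $f(x,y) = g(|x|, |y|, |x \cap y|)$ and reduce the problem to analyzing fixed-weight ``intersection slices''. The input matrix $M_f$ decomposes into $(n+1) \times (n+1)$ blocks $M_{a,b}$ (on rows with $|x|=a$ and columns with $|y|=b$), each of which is an intersection-symmetric matrix whose entries depend only on $|x \cap y|$. Because these blocks sit on disjoint row and column indices, $\rank(M_f) \ge \max_{a,b} \rank(M_{a,b})$, and any deterministic protocol can first exchange Hamming weights in $O(\log n)$ bits and then focus on solving the relevant slice.

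First I would dispose of the trivial case in which $g$ does not depend on its third coordinate: $f$ then depends only on $(|x|,|y|)$, so $\rank(M_f) \le n+1$ and merely exchanging Hamming weights gives $D(f) = O(\log n) = O(\log \rank(M_f))$. Otherwise, fix $(a^*, b^*)$ for which the univariate $g_{a^*,b^*}(c) := g(a^*,b^*,c)$ is a non-constant $\pm 1$-valued function on $\{\max(0, a^*+b^*-n), \ldots, \min(a^*, b^*)\}$, and analyze the slice $M_{a^*,b^*}$ via the bipartite Johnson scheme $J(n, a^*, b^*)$. Diagonalizing in the Hahn polynomial basis gives $\rank(M_{a^*,b^*}) = \sum_{j \in \supp(\hat g_{a^*,b^*})} \bigl(\binom{n}{j} - \binom{n}{j-1}\bigr)$, and a Chebyshev-type argument shows that a non-constant $\pm 1$-valued function on a wide interval must have spectral support reaching some nontrivial degree, yielding the required rank lower bound. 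Mirroring this decomposition, a recursive generalized-threshold-search protocol in the spirit of Buhrman--de~Wolf~\cite{BW01} and Zhang--Shi~\cite{ZS09} then computes the slice deterministically with $(\log \rank(M_{a^*,b^*}))^{O(1)}$ communication.

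Combining these ingredients, $D(f) \le O(\log n) + \max_{a,b}(\log \rank(M_{a,b}))^{O(1)} \le (\log \rank(M_f))^{O(1)}$, where the $O(\log n)$ overhead is absorbed by the spectral rank bound whenever any nontrivial slice is present, and otherwise $\log \rank = \Theta(\log n)$ already dominates. The main obstacle is the slice analysis: extending the Buhrman--de~Wolf rank characterization and threshold-search protocol from the balanced full case ($a=b=n$) to arbitrary fixed-weight slices requires careful bookkeeping in the bipartite Johnson scheme, whose Hahn polynomials and dimension formulas depend nontrivially on both $a$ and $b$. Boundary slices with $\min(a,n-a,b,n-b)$ small are especially delicate because the scheme degenerates there; one likely has to show separately that such slices either contribute enough structural rank or admit a direct $O(\log n)$ protocol by bounded-support enumeration of the few possible $\pm 1$ behaviors.
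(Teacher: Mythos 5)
Your high-level framing---block decomposition by Hamming weight, $O(\log n)$ to exchange weights, then analyze the nonconstant slice---matches the paper, but the route you take through the slice is both far more complicated than needed and, as you yourself flag, not actually carried out. The two load-bearing steps of your plan are left as promissory notes: the ``Chebyshev-type argument'' that a nonconstant $\pm1$-valued slice function must have high-degree Hahn-polynomial support, and the ``recursive generalized-threshold-search protocol'' that would compute a slice in $(\log\rank(M_{a^*,b^*}))^{O(1)}$ bits. Both live exactly in the bipartite Johnson scheme territory you identify as delicate (unbalanced $a\neq b$, degenerating boundary slices), and neither is sketched to the point where it could be checked. As written this is a roadmap, not a proof.

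The conceptual point you miss is that no clever per-slice protocol is needed. The paper's upper bound (Lemma~\ref{log_rank_upper_bound}) is the naive one: after exchanging weights, if $f_{a,b}$ is nonconstant then whichever party holds the ``lighter'' input just sends it, costing $\min\{\log\binom{n}{a},\log\binom{n}{b}\}\le\min\{a,b,n-a,n-b\}\cdot\log n$ bits. The matching lower bound (Lemma~\ref{log_rank_lower_bound}) is also elementary: a nonconstant $f_{a,b}$ contains, after suitable padding of inputs, either a $\DISJ_m^k$ or an $\Equality_m^k$ submatrix, whose rank is $\ge\binom{m}{k}-1$; this gives $\log\rank(f)=\Omega\bigl(\max\{\log n,\min\{a,b,n-a,n-b\}\}\bigr)$ with no spectral machinery. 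The punchline is that the rank of a nonconstant slice is already \emph{exponential} in $\min\{a,b,n-a,n-b\}$, so ``send the input'' is automatically polylogarithmic in the rank---there is nothing to optimize, and combining the two lemmas directly yields $D(f)=O(\log^2\rank(f))$.

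One further soft spot: your disposal of the trivial case (when $g$ does not depend on $|x\cap y|$) claims $D(f)=O(\log n)=O(\log\rank(M_f))$, but this fails when the $(n+1)\times(n+1)$ weight matrix has small rank while still forcing $\Theta(\log n)$ communication; in fact, that case \emph{is} the general Log-rank Conjecture on $(n+1)\times(n+1)$ matrices and is open. The paper sidesteps it by proving the theorem only for non-trivial permutation-invariant functions, a restriction you should carry as well.
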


\begin{conjecture}[Log-Approximate-Rank Conjecture for permutation-invariant functions] 
For any (total or partial) permutation-invariant function $f:\B^n \times \B^n \rightarrow \inbrace{-1,1,*}$, $R(f) \le \inparen{\log\arank(f)}^{O(1)}$, where $\arank(f)$ is the approximate rank of the input matrix of $f$ (see \Cref{def:arank}). 
\end{conjecture}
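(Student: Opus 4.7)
The plan is to combine the paper's earlier result on the quadratic equivalence $R(f) \le Q^*(f)^{O(1)}$ for permutation-invariant functions with a new polynomial relationship between $Q^*(f)$ and $\log\arank(f)$. The Buhrman--de Wolf inequality $Q^*(f) \ge \frac{1}{2}\log\arank(f)$ is in the wrong direction; what is actually needed is the reverse bound $Q^*(f) \le \inparen{\log \arank(f)}^{O(1)}$, i.e., the \emph{quantum} Log-Approximate-Rank Conjecture restricted to permutation-invariant $f$. Once this is established, the randomized version follows immediately by composing with the quadratic equivalence.

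To prove the quantum LARC for permutation-invariant $f$, I would pass through a combinatorial parameter $k(f)$ capturing the ``complexity profile'' of $f$. Since any permutation-invariant $f:\B^n \times \B^n \to \inbrace{-1,1,*}$ factors through the map $(x,y) \mapsto \inparen{|x|,|y|,|x\wedge y|}$, it is determined by a reduced function $\tilde f$ on a polynomially sized three-dimensional domain. I would define $k(f)$ as the smallest scale at which $\tilde f$ exhibits a transition between $-1$ and $+1$ values, in the spirit of Razborov's analysis of $\DISJ$ and Sherstov's of Gap Hamming Distance. The proof then splits into two parts. First, an upper bound $Q^*(f) \le k(f)^{O(1)}$ via a protocol that estimates $|x|$, $|y|$, $|x\wedge y|$ to accuracy $O(k(f))$ (using amplitude estimation on the quantum side, and hashing/sampling on the classical side) and then evaluates $\tilde f$ on the estimates. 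Second, a lower bound $\log\arank(f) \ge \Omega\inparen{k(f)^{1/c}}$ by embedding into the sign matrix of $f$ a submatrix equivalent, at the scale $k(f)$, to a hard instance such as $\UDISJ$ on $k(f)$ bits or a Hamming-distance predicate, whose approximate rank is known to be exponential. Since $\arank$ is monotone under taking submatrices, these two bounds combine to yield $Q^*(f) \le \inparen{\log \arank(f)}^{O(1)}$.

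The main obstacle I expect is the embedding in the lower bound. The ``hard'' submatrix must live inside the region where $\tilde f$ is actually defined, which is delicate for partial $f$, and the three symmetric statistics $|x|$, $|y|$, $|x \wedge y|$ must be coordinated so that the restriction is a genuine copy of the hard function on $k(f)$ bits rather than a padded or degenerate variant. A case analysis over which of the three statistics drives the transition of $\tilde f$, together with the known approximate-rank lower bounds for $\UDISJ$ and Hamming-distance patterns, should close the argument. This is precisely the step where permutation invariance pays off: the rigidity of $\tilde f$ forces any nontrivial behavior to appear uniformly across an entire combinatorial slice, which makes it impossible to ``hide'' high communication complexity behind a small approximate rank.
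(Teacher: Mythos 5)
Your proposal is essentially the paper's approach: the paper defines a measure $m(f)$ (\Cref{def:mf}) based on the jumps of the reduced functions $f_{a,b}$, proves $R(f) = \widetilde{O}(m(f)^2)$ and $Q(f) = \widetilde{O}(m(f))$ via binary search with sampling/amplitude amplification, and proves $\log\arank(f) = \Omega(m(f))$ by embedding $\ESetInc$ instances whose approximate rank is lower-bounded via the pattern matrix method and Paturi's approximate-degree theorem, then composes. One refinement worth flagging in your protocol sketch: the paper exchanges $|x|$ and $|y|$ exactly (at a cost of only $O(\log n)$ bits) rather than estimating them to accuracy $O(k(f))$, since for a partial $f$ the slice $f_{a,b}$ can be defined on only a thin set of $(a,b)$ and change abruptly between nearby slices, so approximate values of $|x|,|y|$ would not determine which slice to evaluate; only $|x \wedge y|$ is estimated, and even then by threshold comparisons at each jump rather than a single absolute estimate.
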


\begin{conjecture}[Quantum Log-Approximate-Rank Conjecture for permutation-invariant functions] 
For any (total or partial) permutation-invariant function $f:\B^n \times \B^n \rightarrow \inbrace{-1,1,*}$, $Q(f) \le  \inparen{\log\arank(f)}^{O(1)}$.
\end{conjecture}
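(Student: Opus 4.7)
The plan is to obtain the quantum Log-approximate-rank conjecture for permutation-invariant functions as an immediate corollary of its (stronger) randomized counterpart stated just above. The key fact is the trivial inequality $Q(f) \le R(f)$, which holds for every function: any public-coin randomized protocol can be implemented on a quantum channel at the same cost by sending each transmitted classical bit as a single qubit and treating the public randomness as a shared classical register (or, equivalently, as a portion of the pre-shared entanglement). Hence once one has $R(f) \le \inparen{\log\arank(f)}^{O(1)}$ for every permutation-invariant $f$, the bound $Q(f) \le \inparen{\log\arank(f)}^{O(1)}$ falls out for free, and the conjecture reduces to its classical sibling.

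All the substantive work is therefore in establishing the randomized bound $R(f) \le \inparen{\log\arank(f)}^{O(1)}$ for permutation-invariant $f$, which I would attack in three stages. First, I would exploit the structural rigidity of permutation invariance: modulo the diagonal action of $S_n$, such an $f$ is determined by a predicate on a low-dimensional domain (for instance the joint Hamming statistics $\inparen{|x|,|y|,|x \cap y|}$), so the communication problem is really about estimating a few symmetric statistics of the inputs. Second, I would relate $\log\arank(f)$ of the (highly symmetric) communication matrix to an approximate-degree-type quantity of this predicate, using the fact that the $S_n$-symmetry of $f$ forces the matrix into a block form whose rank can be read off from invariants of the predicate. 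Third, I would lift an approximate-degree upper bound on the predicate into a randomized communication protocol by estimating the relevant statistics: Hamming weights are cheap to estimate by sampling, and intersection sizes can be approximated by disjointness-style subroutines, at a total cost governed by the approximate degree and hence by $\log\arank(f)$.

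The main obstacle is the second step: one needs a clean and essentially tight correspondence between the approximate rank of a symmetric partial matrix and the approximate degree of the associated low-dimensional predicate, losing at most polylogarithmic factors in both directions. The $Q\le R$ reduction is free, and the protocol-construction step is reasonably routine once the predicate picture is in hand, but pinning down approximate rank in terms of predicate-level complexity in the partial (promise) regime — where standard full-matrix symmetrization arguments fail — appears to be the technical heart of the argument.
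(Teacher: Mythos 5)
Your $Q(f)\le R(f)$ reduction is sound and does suffice for the conjecture \emph{as stated} (which only asks for a polynomial bound). Note, though, that the paper does not take this route: it proves the stronger linear bound $Q(f)=\widetilde O(\log\arank(f))$ directly via a quantum (amplitude-amplified) protocol of cost $\widetilde O(m(f))$, whereas passing through $R$ would only give $\widetilde O(\log^2\arank(f))$.

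The genuine gap is in your second stage, and you already half-see it. You propose to relate $\log\arank(f)$ to an approximate-degree quantity of a low-dimensional predicate by reading the rank off of the $S_n$-block structure of the matrix, while simultaneously flagging that ``standard full-matrix symmetrization arguments fail'' in the partial/promise regime. That concern is well-founded: the representation-theoretic block decomposition one would use for \emph{total} symmetric matrices does not directly control $\arank$ of an incomplete matrix, because the approximating matrix is free to take arbitrary values on the $*$ entries and need not respect the symmetry. You identify this as ``the technical heart'' but do not supply a mechanism to get around it, so the proposal as written does not close.

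The paper's resolution is different and is worth contrasting. Rather than analyzing the global block structure of $M_f$, it (i) reduces an arbitrary permutation-invariant $f$ to the canonical promise problem $\ESetInc^n_{a,b,c,g}$ by fixing $a,b$ and a jump $(c,g)$ of $f_{a,b}$; (ii) further normalizes $\ESetInc$ by the padding/flipping/repetition reductions of \Cref{lemma:rank_SecInc} until it contains a copy of $\ESetInc^{4k}_{2k,k,l,1/2}$; and then (iii) observes that this last matrix contains the $(2k,k,f_{k,l})$-pattern matrix as a submatrix, so the pattern-matrix method (\Cref{arank_lowerbound}, the partial-function version) combined with Paturi's approximate-degree bound (\Cref{fact:paturi}) yields $\log\arank = \Omega(\sqrt{kl})$, i.e.\ $\Omega(m(f))$. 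The pattern matrix method is precisely the tool that makes the ``approximate rank $\gtrsim$ approximate degree'' step go through for promise problems, and it is the ingredient missing from your plan. Once $\log\arank(f)=\Omega(m(f))$ is in hand, combining it with the protocol upper bounds $R(f)=\widetilde O(m(f)^2)$ and $Q(f)=\widetilde O(m(f))$ from \Cref{th:upper} gives the theorem; your stage-three sampling protocol is in spirit what \Cref{le:extiXandY} does, so that part of the sketch matches.
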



\begin{definition}[Approximate rank]\label{def:arank}
    For matrix $M \in \inbrace{-1,1,*}^{m \times n}$ (`$*$' means undefined entry) and $0\leq \epsilon < 1$, we say a real matrix $A$ approximates $M$ with error $\epsilon$ if:
\begin{enumerate}
\item[1)]  $|A_{i,j}-M_{i,j}|\leq \epsilon$ for any $i \in [m], j\in [n]$ such that $M_{i,j} \neq *$;
\item[2)]  $|A_{i,j}|\leq 1$ for all $i\in [m], j\in [n]$.
\end{enumerate}
Let $\mathcal{M}_\epsilon$ be the set of all the real matrices that approximate $M$ with error $\epsilon$. The approximate rank of $M$ with error $\epsilon$, denoted by $\arank_{\epsilon}(M)$, is the smallest rank among all real matrices in $\mathcal{M}_{\epsilon}$. If $\epsilon = 2/3$, we abbreviate $\arank_{\epsilon}(M)$ as $\arank(M)$. For any {(total or partial)} Boolean function {$f:\{0,1\}^n\times\{0,1\}^n \rightarrow \{-1,1,*\}$}, let 
$\arank(f)\coloneqq \arank{\inparen{M_f}}$, where $M_f$ is the input matrix of $f$. 
\end{definition}

\subsection{Our Contribution}
To study the communication complexity version of the Watrous conjecture, we start with permutation-invariant Boolean functions, which is an important step towards fully resolving \Cref{conjecture:cc_version_watrous}. 
We show that for every permutation-invariant Boolean function, its classical communication complexity has at most a quasi-quadratic blowup compared to its quantum communication complexity (\Cref{th:main}).
Thus, we cannot hope for significant quantum speedups of permutation-invariant Boolean functions. Additionally, \Cref{th:main} gives a nearly tight bound on the quantum communication complexity {up to a polylogarithmic factor of the input size}. Furthermore, we show that every non-trivial permutation-invariant Boolean function satisfies the Log-rank Conjecture in \Cref{th:logrank}. To resolve the (quantum) Log-Approximate-Rank Conjecture, 
we investigate the relationship between the quantum/classical communication complexities and the approximate rank of any permutation-invariant Boolean function in \Cref{th:PI_log_arank}.

Consider a Boolean function $f$.
Let $D(f),R(f),Q(f)$ be the deterministic communication complexity, randomized communication complexity, and quantum communication complexity of $f$, respectively. Let $\rank\inparen{f}$ and $\arank\inparen{f}$ be the rank and approximate rank of $f$. 
We summarize our results 
below. 
\begin{restatable}{theorem}{maintheorem}
\label{th:main}
    For any (total or partial) permutation-invariant function $f:\{0,1\}^n \times \{0,1\}^n \to \{-1,1,*\}$ in Definition \ref{def:PI}, the followings hold:
\[
\begin{aligned}
&\Omega\inparen{m\inparen{f}} \le R(f)  \le {O\inparen{m\inparen{f}^2\log^2 n \log\log n+\log n}},\\ &\Omega\inparen{m\inparen{f}}  \le Q(f)   \le {O\inparen{m\inparen{f}\log^2 n \log\log n+\log n}},\\
\end{aligned}
\]
where $m(f)$ is a measure defined in Definition \ref{def:mf}. Hence, $R(f)\le {O(Q(f)^2\log^2 n \log\log n+\log n)}$ for any permutation-invariant function $f$.
\end{restatable}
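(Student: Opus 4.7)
Since $f$ is permutation-invariant, $f(x,y)$ depends only on the joint type of $(x,y)$, i.e.\ on the triple $(|x|,|y|,|x\cap y|)$. The measure $m(f)$ (defined later) should be understood as quantifying \emph{how accurately one must estimate this triple} in order to determine $f(x,y)$: roughly, the smallest granularity at which $f$ transitions between $-1$ and $+1$ (or moves off its promise set). The whole theorem is organized around showing that this one combinatorial parameter simultaneously controls all of $Q(f)$, $Q^*(f)$, and $R(f)$, with the classical side paying the usual quadratic Grover-style tax.

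\textbf{Quantum upper bound $Q(f)\le\widetilde{O}(m(f))$.} The plan is to design a one-size-fits-all quantum protocol that estimates $|x|$, $|y|$ and the intersection size $|x\cap y|$ up to the precision dictated by $m(f)$, and then reads $f$ off from these estimates. The Hamming weights $|x|,|y|$ can be exchanged locally (possibly after binary search refinement), and $|x\cap y|$ can be estimated via quantum amplitude estimation applied to the disjointness/intersection subroutine of Aaronson--Ambainis and H\o yer--Mosca--de Wolf, which costs $\widetilde{O}(\sqrt{n/t})$ qubits of communication to achieve additive error $t$ in $|x\cap y|$. Plugging in the precision level forced by $m(f)$ and absorbing the $\log$-factors into $\widetilde{O}(\cdot)$ yields $Q(f)\le\widetilde{O}(m(f))$, hence also $Q^*(f)\le Q(f)\le\widetilde{O}(m(f))$.

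\textbf{Classical upper bound $R(f)\le\widetilde{O}(m(f)^2)$ and lower bound $Q^*(f)\ge\Omega(m(f))$.} For the randomized protocol, I would run the same estimation strategy but replace amplitude estimation by classical sampling or by the public-coin sketching protocols for Hamming weight and intersection size; these incur a square blow-up in the precision-to-cost trade-off, producing the quadratic gap $\widetilde{O}(m(f)^2)$. The lower bound $\Omega(m(f))\le Q^*(f)$ would be proved by a reduction: at any triple $(a,b,c)$ where $f$ has a ``jump'' witnessing $m(f)$, one can embed an instance of Gap Hamming Distance (or of Unique Disjointness, rescaled) by padding inputs appropriately so that Alice and Bob with inputs of Hamming weights $a,b$ can distinguish intersection $c$ from $c\pm\Theta(1/m(f)\text{-scale})$. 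Invoking the known $\Omega(\sqrt{n})$ quantum lower bound for the rescaled problem then gives $Q^*(f)=\Omega(m(f))$, which also implies $R(f)=\Omega(m(f))$. Chaining everything yields the stated inequality $R(f)\le\widetilde{O}(Q^*(f)^2)$.

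\textbf{Main obstacle.} The routine pieces are the protocols: intersection-size estimation and its classical counterpart are off-the-shelf, and the $Q^*\ge\Omega(m(f))$ direction follows a well-rehearsed reduction from Gap Hamming Distance. The delicate part is showing that a \emph{single} combinatorial parameter $m(f)$ simultaneously captures both the upper bound achievable by the protocols above and the strength of the reduction used for the lower bound. Concretely, one has to argue that whenever $f$ fails to be decided by the natural low-precision estimation strategy, the resulting structural obstruction is strong enough to embed a hard disjointness/GHD-type sub-instance. This is the analogue for communication of the Aaronson--Ambainis/Chailloux ``jump structure'' argument for query complexity, and carrying it through in the two-party setting---especially making it robust enough to handle partial functions where the $*$-entries constrain the reduction---is the hard part; everything else is bookkeeping and standard error reduction absorbed into the $\widetilde{O}$ notation.
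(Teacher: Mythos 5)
Your overall architecture is right --- estimate the intersection size to the precision dictated by the jump structure of $f$, get a quadratic quantum speedup from amplitude estimation, and prove optimality by embedding a hard two-party subproblem at a ``jump'' --- and this matches the paper's plan in broad strokes (the paper also does binary search over jumps, amplitude amplification for the quantum upper bound, and a reduction to a canonical hard subproblem for the lower bound). But the proposal never engages with the actual form of $m(f)$, and this is exactly where the work is. The measure is $m(f)=\max \sqrt{n_1 n_2}/g$, where $n_1,n_2$ are the two \emph{smallest} of the four quantities $a-c,\,c,\,b-c,\,n-a-b+c$, and the theorem has to match this precise quantity in both the upper and lower bound. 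Your upper-bound plan --- ``apply amplitude estimation to estimate $|x\cap y|$ to additive error $t$ at cost $\widetilde O(\sqrt{n/t})$'' --- will not land on $\sqrt{n_1n_2}/g$. The paper instead branches on which two of $a-c,c,b-c,n-a-b+c$ are smallest, and in each case estimates a \emph{different ratio} (e.g.\ $|x\wedge y|/|x|$ when $n_1=c,\,n_2=a-c$, but $|x\wedge y|/|\bar x\oplus y|$ when $n_1=c,\,n_2=n-a-b+c$), choosing the denominator so that the resulting bias/variance tradeoff hits exactly $O(n_1 n_2/g^2)$ samples classically (and $O(\sqrt{n_1n_2}/g)$ qubits after amplitude amplification via the controlled-sign oracles $O_x,O_y$). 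Without that casework you get a looser bound that does not close against the lower bound.

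Your lower-bound plan has the same problem, and also takes a genuinely different route from the paper. You propose embedding Gap Hamming Distance and citing its $\Omega(\sqrt n)$ quantum bound. The paper instead defines the exact set-inclusion problem $\mathsf{ESetInc}^n_{a,b,c,g}$ at each jump, uses a chain of self-reductions (padding, complementing a side, rescaling by $k$) to bring it to the canonical instance $\mathsf{ESetInc}^{4k}_{2k,k,l,1/2}$, and then applies the pattern-matrix method together with Paturi's approximate-degree theorem for symmetric predicates, which is what produces the $\Omega(\sqrt{kl})\sim\Omega(\sqrt{n_1 n_2}/g)$ dependence. The GHD bound alone gives you $\Omega(\sqrt n)$ for a single scale of gap; it does not by itself give the two-parameter $\sqrt{n_1 n_2}/g$ lower bound, and you would still need the reduction chain plus Paturi-style degree control to get there. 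You correctly identified in your ``main obstacle'' paragraph that matching a single parameter on both sides is the crux --- but then left it unsolved; the whole content of the theorem is that $m(f)$ has exactly this form, and the casework/reduction chain that realizes it is what you would have to supply.
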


\begin{restatable}{theorem}{logrank}\label{th:logrank}
    For any non-trivial total permutation-invariant function $f:\{0,1\}^n \times \{0,1\}^n \to \{-1,1\}$ in \Cref{def:PI}, we have 
    \[
    D(f) = O\inparen{\log^2\rank\inparen{f}}.
    \]
    Here, we say $f$ is non-trivial if $f(x,y)$ does not only depend on $|x|$ and $|y|$.
\end{restatable}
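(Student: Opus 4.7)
The plan is to prove $D(f) \le O(\log^2 \rank(f))$ by combining a deterministic protocol of cost $\widetilde{O}(m(f)^2)$, where $m(f)$ is the measure from \Cref{th:main}, with a matching rank lower bound $\log \rank(f) \ge \widetilde{\Omega}(m(f))$. Since $f$ is permutation-invariant and total, $f(x,y)$ depends only on the triple $(|x|, |y|, |x \wedge y|)$; write $F(a,b,k)$ for the reduced function on these three parameters.

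First, I would construct the deterministic protocol. Alice and Bob broadcast $|x|$ and $|y|$ in $2\lceil \log(n+1)\rceil$ bits, reducing the problem to evaluating $F(a,b,\cdot)$ at the unknown value $k = |x \wedge y|$, with $(a,b)$ now common knowledge. By the definition of $m(f)$, the transitions of $F(a,b,\cdot)$ that matter for its value live in a short window of length $\widetilde{O}(m(f))$ and number $\widetilde{O}(m(f))$. Pinning down which threshold $|x \wedge y|$ lies between then reduces to a bounded number of Gap-Hamming-type subproblems, each solvable deterministically in $\widetilde{O}(m(f))$ bits (for example, by exchanging the $O(m(f))$ coordinates of $x$ and $y$ selected by a carefully chosen projection that preserves the relevant intersection statistic), yielding total cost $\widetilde{O}(m(f)^2)$.

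Second, I would prove the rank lower bound $\log \rank(f) \ge \widetilde{\Omega}(m(f))$. Fix $(a,b)$ attaining $m(f)$ and consider the submatrix $M_{a,b}$ of $M_f$ obtained by restricting to rows of weight $a$ and columns of weight $b$. Under the diagonal $S_n$-action, $M_{a,b}$ sits in the Johnson-scheme algebra, and expanding $F(a,b,\cdot)$ in the dual Hahn polynomial basis diagonalizing that algebra shows $\rank(M_{a,b})$ is at least the number of sign changes of $F(a,b,\cdot)$. Since $m(f)$ counts these transitions (up to polylog factors) at the witnessing pair, $\rank(f) \ge \rank(M_{a,b}) \ge \widetilde{\Omega}(m(f))$.

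Combining the two estimates gives $D(f) \le \widetilde{O}(m(f)^2) \le \widetilde{O}(\log^2 \rank(f))$. The $\widetilde{O}$ can be absorbed into $O$ by noting that any non-trivial permutation-invariant $f$ forces $\log \rank(f) \ge \Omega(\log n)$, covering the polylog-in-$n$ overhead from \Cref{th:main}. The main obstacle I expect is the rank lower bound: the combinatorial measure $m(f)$ must translate into genuinely independent algebraic directions of $M_{a,b}$, and the dual Hahn orthogonality is what rules out algebraic cancellations that could otherwise collapse the naive rank count below $m(f)$.
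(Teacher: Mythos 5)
Your plan rests on a step that cannot go through, and the issue is visible already on the Equality-type instance of disjointness. Take $f$ total with $a=b=n/2$, $f_{a,b}(0)\neq f_{a,b}(1)$, and $f_{a,b}$ constant on $\{1,\dots,n/2\}$. The only jump has $g=1/2$ and $c=1/2$, the multiset $\{[a-c,c,b-c,n-a-b+c]\}$ is $\{[n/2-1/2,\,1/2,\,n/2-1/2,\,1/2]\}$, so $n_1=n_2=1/2$ and $m(f)=O(1)$. Yet this $f$ contains $\Equality$ on weight-$n/2$ strings as a subfunction, so $D(f)=\Omega(n)$. A deterministic protocol of cost $\widetilde{O}(m(f)^2)=\mathrm{polylog}(n)$ therefore does not exist. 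The measure $m(f)$ is the right measure for \emph{randomized and quantum} communication (as \Cref{th:upper} shows), but it is not the right measure for $D(f)$, because $m(f)$ can collapse to $O(1)$ precisely where the deterministic cost is linear. Concretely, your ``exchange $O(m(f))$ coordinates selected by a projection'' step cannot be made deterministic: for the instance above you would need to exchange $\Theta(n)$ coordinates. For the same reason, even a correct rank lower bound $\log\rank(f)=\widetilde{\Omega}(m(f))$ would be useless here, since $m(f)=O(1)$ while $D(f)=\Omega(n)$; you would not be able to close the chain.

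The paper's proof uses a different, coarser complexity measure, namely $M:=\max_{a,b:\,f_{a,b}\not\equiv\mathrm{const}}\min\{a,b,n-a,n-b\}$. The deterministic protocol (\Cref{log_rank_upper_bound}) is elementary: exchange $|x|,|y|$, and if $f_{a,b}$ is non-constant have the smaller side send its entire weight-restricted input, costing $O(\min\{a,b,n-a,n-b\}\log n)\le O(M\log n)$. The rank lower bound (\Cref{log_rank_lower_bound}) shows $\log\rank(f)=\Omega(\max\{\log n,\,M\})$ by embedding large $\DISJ_m^k$ or $\Equality_m^k$ submatrices inside $M_f$ and invoking their known rank lower bounds (\Cref{disj_eq}, together with a Boolean-function rank fact from \cite{BW01}). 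Then $D(f)\le O(M\log n)\le O(\log\rank(f)\cdot\log\rank(f))=O(\log^2\rank(f))$. Your dual-Hahn spectral idea for the rank lower bound is plausible in spirit (the Johnson-scheme structure does control $\rank(M_{a,b})$), but you would need it to certify $\log\rank\ge\Omega(\min\{a,b,n-a,n-b\})$ rather than $\Omega(m(f))$; matching it only to $m(f)$ is both unnecessarily weak and, as shown above, the wrong quantity to compare against the deterministic upper bound.
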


\begin{restatable}{theorem}{PIlogarank}\label{th:PI_log_arank}
    For any (total or partial) permutation-invariant function $f:\{0,1\}^n \times \{0,1\}^n \to \{-1,1,*\}$ in \Cref{def:PI}, we have 
    \[
    \begin{aligned}
    R(f) &= {O\inparen{\log^2 \arank\inparen{f}\log^2n \log \log n+\log n}},\\
    Q(f) &= {O\inparen{\log \arank\inparen{f}\log^2 n \log\log n+\log n}}.\\
    \end{aligned}
    \]
\end{restatable}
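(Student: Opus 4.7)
The plan is to derive \Cref{th:PI_log_arank} from \Cref{th:main}, augmented by a new approximate-rank lower bound in terms of the measure $m(f)$.

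First, by \Cref{th:main} we already have $R(f) \le \widetilde{O}(m(f)^2)$ and $Q(f) \le \widetilde{O}(m(f))$. Hence it suffices to establish
\[
\log \arank(f) \ge \Omega(m(f)),
\]
since substitution then yields both $Q(f) \le \widetilde{O}(\log \arank(f))$ and $R(f) \le \widetilde{O}(\log^2 \arank(f))$ directly. The standard approximate-rank lower bound $\log \arank(f) \le O(Q^*(f))$ goes the wrong way, so the content of the proof is precisely this new lower bound.

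To prove $\log \arank(f) \ge \Omega(m(f))$, I would revisit the lower bound $Q^*(f) \ge \Omega(m(f))$ from \Cref{th:main}. For a permutation-invariant Boolean function $f$ with $m(f) = t$, such a bound is typically obtained by identifying a pair of large subsets of inputs on which $f$ agrees (after relabelling) with a canonical hard problem such as $\UDISJ_t$ or $\SI_t$ on $t$ bits. The key point is that the same embedding also certifies a lower bound on approximate rank: by the pattern matrix method of Razborov and Sherstov, $\log \arank(\UDISJ_t) = \Omega(t)$, and approximate rank is monotone under taking submatrices and under row/column relabellings. Consequently $\log \arank(f) \ge \log \arank(\UDISJ_{m(f)}) = \Omega(m(f))$.

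The main obstacle is ensuring that the machinery for $Q^*(f) \ge \Omega(m(f))$ in \Cref{th:main} is rectangular, i.e., that it produces a genuine submatrix of the communication matrix of $f$ equivalent to a pattern matrix of the hard problem, rather than a merely protocol-level reduction preserving only communication complexity. If the proof instead routes through approximate polynomial degree of an associated univariate function (as is natural for permutation-invariant functions via Hamming-weight parameters), then pattern-matrix lifting can convert the approximate-degree lower bound into an approximate-rank lower bound on $f$ itself, at the cost of at most a polylogarithmic factor in $n$ that is absorbed by the $\widetilde{O}$ notation.
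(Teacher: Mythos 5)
Your approach is essentially the paper's: reduce to showing $\log\arank(f) \ge \Omega(m(f))$ and combine with the upper bounds $R(f) \le \widetilde{O}(m(f)^2)$, $Q(f) \le \widetilde{O}(m(f))$ from \Cref{th:upper}. The one wrinkle is that the paper's lower bound does not embed a single $\UDISJ_{m(f)}$ instance; it routes, exactly as your final paragraph anticipates, through Paturi's approximate-degree bound for the symmetric function associated with the hardest jump, lifted to approximate rank via the pattern matrix method, with all intermediate $\mathsf{ESetInc}$ reductions being submatrix-preserving (the paper records these as \Cref{lemma:rank_SecInc,lemma:lower_rank2}, the approximate-rank analogues of \Cref{le:SecInc,lemma:partial}).
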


\begin{remark}
The relations between $R(f),Q(f),\log\arank(f)$ are tight in \Cref{th:main,th:PI_log_arank} (up to a polylogrithmic factor of the input size), since for the Set-Disjointness Problem, we have $R(f) = \Omega(Q(f)^2)$ and $Q(f) = \Omega\inparen{\log\arank(f)}$ \cite{lee2009lower}.  
\end{remark}



\subsection{Related Work}
The need for structure in quantum speedups has been studied in the query model extensively. Beals, Buhrman, Cleve, Mosca and de Wolf \cite{BBC+01} demonstrated that there exists at most a polynomial quantum speedup for total Boolean functions in the query model. Moreover, Aaronson and Ambainis \cite{AA14} established that partial symmetric functions also do not allow super-polynomial quantum speedups. Chailloux \cite{Cha19} further improved this result to a broader class of symmetric functions. Ben-David, Childs, Gilyén, Kretschmer, Podder and Wang \cite{BCG+20} later analyzed the quantum advantage for functions that are symmetric under different group actions systematically. Ben-David \cite{BenDavid16} established a quantum and classical polynomial equivalence for a certain set of functions satisfying a specific symmetric promise. Aaronson and Ben-David \cite{AB16} proved that if domain $D$ satisfies $D = \text{poly}(n)$, there are at most polynomial quantum speedups for computing an $n$-bit partial Boolean function. 

In terms of communication complexity, there are a few results that imply the polynomial equivalence between quantum and classical communication complexity for several instances of permutation-invariant functions. 
Examples include $\mathrm{AND}$-symmetric functions~\cite{razborov2003quantum}, Hamming-Distance problem~\cite{HSZZ06,CR12}, $\mathrm{XOR}$-symmetric functions~\cite{ZS09}.
While the above results characterized quantum advantage for a certain class of permutation-invariant Boolean functions, our work provides a systemic analysis of all permutation-invariant Boolean functions.  

The study of the Log-rank Conjecture and the Log-Approximate-Rank Conjecture has a rich history. Here, we survey the results of the Log-rank Conjecture and the Log-Approximate-Rank Conjecture about permutation-invariant Boolean functions. {Buhrman} and de Wolf \cite{BW01} verified the correctness of the Log-rank Conjecture for $\mathrm{AND}$-symmetric functions.
Combining the results of Razborov \cite{razborov2003quantum}, Sherstov \cite{She11} and Suruga \cite{Sur2023},  
it is implied that the Log-Approximate-Rank Conjecture holds for $\mathrm{AND}$-symmetric functions both in the randomized and quantum settings. Moreover, Zhang and Shi~\cite{ZS09} showed $\mathrm{XOR}$-symmetric functions satisfy the Log-Rank Conjecture. Chattopadhyayand and Mande \cite{CM17} further proved that the Log-Approximate-Rank Conjecture holds for $\mathrm{XOR}$-symmetric functions for the first time. 

\subsection{Proof Techniques}
In this section, we give a high-level technical overview of our main results.

First, we outline our approaches to 
obtain the lower bound on the quantum communication complexity,
rank 
and approximate rank of permutation-invariant functions below:
\begin{enumerate}
    \item \textbf{Quantum communication complexity} and \textbf{approximate rank}: In \Cref{th:main}, 
    we use the following two-step reduction (see {Theorem \ref{th:low}} and \Cref{th:qcc_lowbound}): First, we reduce the lower bound of any permutation-invariant function to the lower bound of Exact Set-Inclusion Problem ($\mathrm{ESetInc}$, a specific instance of permutation-invariant functions, \Cref{def:setinc}). Second, we reduce the lower bound of the quantum communication complexity of $\mathrm{ESetInc}$   to Paturis's approximate degree of symmetric functions \cite{Paturi92} by the pattern matrix method \cite{She11}, a well-known method for lower bound analysis in quantum communication complexity.     In \Cref{th:PI_log_arank}, we use a similar method to prove the lower bound of approximate rank. 

    \item \textbf{Rank}: In \Cref{th:logrank}, we reduce
    the lower bound of the rank of total permutation-invariant functions to the lower bound of the rank of some representative function instances, such as the Set-Disjointness Problem and the Equality Problem (see \Cref{log_rank_lower_bound}).
\end{enumerate}

Moreover, we use the following methods to show the upper bounds on the communication complexity of permutation-invariant functions in the randomized, quantum, and deterministic models.

\begin{enumerate}
    \item \label{tech_random_upper_bound}
    \textbf{Randomized and quantum models}: In \Cref{th:main}, to prove {the randomized upper bound} for permutation-invariant functions, we first propose a randomized protocol to solve the Set-Inclusion problem ($\mathrm{SetInc}$, \Cref{def:setinc}) using a well-suited sampling method according to the parameters of $\mathrm{SetInc}$ (see \Cref{le:extiXandY}). 
    Afterwards, we use this protocol as a subroutine to solve any permutation-invariant function based on binary search (see \Cref{th:upper}). Furthermore, to prove {quantum upper bound}, we use the quantum amplitude amplification technique \cite{brassard2002quantum,HM19} to speed up the above randomized protocol to solve $\mathrm{SetInc}$ (see \Cref{lemma:quantum_upper_bound_setinc}). {It is also worth noting that in the quantum protocols, Alice and Bob do not need to share prior entanglement.}

    \item \textbf{Deterministic model}: In \Cref{th:logrank}, to give an upper bound on the deterministic communication complexity of total permutation-invariant functions, we propose a deterministic protocol as follows (see \Cref{log_rank_upper_bound}): Alice and Bob first share the Hamming weight of their inputs, and decide who sends the input to the other party. 
        The party that has all the information about inputs will output the answer. 
    \end{enumerate}

\subsection{{Comparison with previous work}}
In previous work, Ghazi, Kamath and Sudan \cite{GKS16} considered 
the randomized communication complexity of permutation-invariant functions. For any permutation-invariant function $f$, they introduced a complexity measure {(denoted by $M(f)$)} almost equivalent up to a fourth power of $R(f)$, i.e.,
\[
\begin{aligned}
R(f) &= \Omega(M(f)),\\
R(f) &= 
O\inparen{M(f)^4\log M(f)\log\log n \log\log\log n+\log n}.
\end{aligned}
\]
In this paper, we propose a new complexity measure $m(f)$, refining the argument in \cite{GKS16}, which is {almost} quadratically equivalent to $R(f)$ and tightly characterizes $Q(f)$ {up to a polylogarithmic factor of the input size} (See \Cref{th:main}). This enables us to prove a quadratic equivalence between $R(f)$ and $Q(f)$ for all permutation-invariant functions $f$.


Specifically, \cite{GKS16} first considered the Gap-Hamming-Distance Problem as \Cref{def:GHD}, and then generalized the result to all permutation-invariant functions. 
In our work, we use a similar framework. The difference point is that, instead of the Gap-Hamming-Distance Problem, we consider the Set-Inclusion Problem as \Cref{def:setinc}. It is worth noting that, while the two definition forms are different, the Set-Inclusion Problem is equivalent to the Gap-Hamming-Distance Problem, i.e., $\SetInc_{a,b,c,g}^n$ is equivalent to $\GHD_{a,b,a+b-2c,2g}^n$.


\begin{definition}[Gap-Hamming-Distance Problem \cite{GKS16}]\label{def:GHD}
{Fix $n \in \mathbb{Z}^+$. Consider $a,b\in \inbrace{1,...,n-1}$ and $c-g,c+g$ are achievable Hamming distances of $\Delta(x,y)$ when $|x| = a, |y| = b$.} The Gap-Hamming-Distance Problem $\GHD^n_{a,b,c,g}:\inbrace{0,1}^n\times\inbrace{0,1}^n\rightarrow\inbrace{-1,1,*}$ is defined as the following partial function:
\begin{align*}
&\GHD_{a, b, c, g}^{n}(x, y)\\
&\coloneqq\begin{cases}
-1 &\text{if $|x|=a,|y|=b$  and  $\Delta(x, y)\ge c+g$}, \\
1 &\text{if $|x|=a,|y|=b$  and  $\Delta(x, y)\le c-g$}, \\
* &\text{otherwise,}
\end{cases}
\end{align*}
where $\Delta(x,y) = |\inbrace{i \in \{0,1,...,n-1\}|x_i \neq y_i}|$ is the Hamming distance between $x$ and $y$. Additionally, the Exact Gap-Hamming-Distance Problem $\eGHD_{a, b, c, g}^{n}$ as follows. 
\begin{align*}
&\eGHD_{a, b, c, g}^{n}(x, y)\\
&\coloneqq\begin{cases}
-1 &\text{if $|x|=a,|y|=b$  and  $\Delta(x, y) = c-g$}, \\
1 &\text{if $|x|=a,|y|=b$  and  $\Delta(x, y) = c+g$}, \\
* &\text{otherwise.}
\end{cases}
\end{align*}
\end{definition}

\begin{definition}[Set-Inclusion Problem]\label{def:setinc}
{Fix $n \in \mathbb{Z}^+$. Consider $a,b\in \inbrace{1,...,n-1}$ and $c-g,c+g$ are achievable Hamming weights of $|x \land y|$ when $|x| = a, |y| = b$.} The Set-Inclusion Problem $\mathrm{SetInc}^n_{a,b,c,g}:\inbrace{0,1}^n\times\inbrace{0,1}^n \rightarrow \inbrace{-1,1,*}$ is defined as the following partial function:
\begin{align*}
&\mathrm{SetInc}_{a, b, c, g}^{n}(x, y)\\
&\coloneqq\begin{cases}
-1 &\text{if $|x|=a,|y|=b$  and  $|x \land y|\le c-g$}, \\
1 &\text{if $|x|=a,|y|=b$  and  $|x \land y|\ge c+g$}, \\
* &\text{otherwise.}
\end{cases}
\end{align*}
Additionally, the Exact Set-Inclusion Problem $\mathrm{ESetInc}_{a, b, c, g}^{n}$ is defined as follows. 
\begin{align*}
&\mathrm{ESetInc}_{a, b, c, g}^{n}(x, y)\\
&\coloneqq\begin{cases}
-1 &\text{if $|x|=a,|y|=b$  and  $|x \land y|=c+g$}, \\
1 &\text{if $|x|=a,|y|=b$  and  $|x \land y|=c-g$}, \\
* &\text{otherwise.}
\end{cases}
\end{align*}
\end{definition}

 
 Next, we show the difference between \cite{GKS16} and our result about the Set-Inclusion Problem (or Gap-Hamming-Distance Problem equivalently) from upper and lower bounds respectively. 
 
 For the upper bound, we designed different protocols from \cite{GKS16}. The details of our protocols are shown in \ref{subsec:upper-bound}. Moreover, the comparison of required communication costs is shown in \Cref{tab:comparison}.
 
For the lower bound, 
noting that the information complexity of the Unique-Set-Disjointness problem ($\mathrm{UDISJ}$) has been known \cite{bar2008exponential},
by a series of reductions,  \cite{GKS16} showed the information complexity of $\UDISJ$ is a lower bound on the randomized communication complexity of the Exact Gap-Hamming-Distance Problem (See Section 3.3 in \cite{GKS16}). As a comparison, we use a different reduction method as Figure \ref{fig:reduction_map}. 
\tikzstyle{class}=[]
\tikzstyle{arrow} = [->]
\begin{figure}[htbp]
\begin{center}
\begin{tikzpicture}[node distance=0cm]
\node[class](rootnode){$\mathrm{ESetInc}_{a, b, c, g}^{n}$};
\node[class,below of=rootnode,yshift=0,xshift=5.5cm](first-1){$\mathrm{ESetInc}_{n_1+n_2,n_1+n_2,n_1,g}^{n_1+3n_2}$};
\node[class,below of=first-1,yshift=-2.5cm,xshift=0cm](two-1)
{$\mathrm{ESetInc}_{2k,k,l,1/2}^{4k}$};
\draw [arrow] (rootnode) -- node [font=\small,midway, above] {\Cref{le:qcc_lowbound1}} (first-1);
\draw [arrow] (first-1) -- node [font=\small,midway,left] {\Cref{le:qcc_lowbound2}} (two-1);
\end{tikzpicture}
\caption{Our reduction path for the lower bound of $\ESetInc$.}
\label{fig:reduction_map}
\end{center}
\end{figure}
We first reduce the lower bound of the Exact Set-Inclusion Problem to some smaller instances with specific parameters (See \Cref{le:qcc_lowbound1,le:qcc_lowbound2}). Then we resolve the lower bound on the quantum communication complexity of the instance $\mathrm{ESetInc}_{2k,k,l,1/2}^{4k}$ (See \Cref{lemma:partial}). Compared to \cite{GKS16}, our reduction is more concise and direct.

For the communication complexity of $\SetInc_{a,b,c,g}^n$ and $\ESetInc_{a,b,c,g}^n$ (or $\GHD_{a,b,a+b-2c,2g}$ and $\eGHD_{a,b,a+b-2c,2g}$ equivalently), the comparison between \cite{GKS16} and our results are shown as \Cref{tab:comparison}, where $n_1,n_2$ are the smallest two numbers in $a-c,b-c,c,n-a-b+c$ and $n_1 \le n_2$. It is worth noting that the results in \cite{GKS16} mainly depend on $n_2/g$, while our results mainly depend on $\sqrt{n_1n_2}/g$, which is a key factor in characterizing the quantum communication complexity of the Set-Inclusion Problem.

\begin{table}[htbp]
\renewcommand{\arraystretch}{2.2}
\centering
\caption{The communication complexity of Set-Inclusion Problem}
\label{tab:comparison}
\scalebox{0.9}{
\begin{tabular}{cl}
\toprule
& Communication complexity\\
\midrule
\multirow{2}{*}{Ref. \cite{GKS16}\tablefootnote{Ref. \cite{GKS16} gave the communication complexity of $\GHD_{a,b,c,g}$ and $\eGHD_{a,b,c,g}$ in Lemmas 3.3 and 3.4 actually. We reformulated their formulas for convenience. The details can be seen in {\sc Appendix} \ref{appendix_ghd}.}} &  Randomized: $O\inparen{\inparen{\frac{n_2}{g}}^2\log \frac{b}{g}}$ ($a \le b \le \frac{n}{2}$) \\
& Randomized: $\Omega\inparen{\max\inbrace{\frac{n_2}{g}, \log \inparen{\frac{\min\inbrace{a+b-2c,n-a-b+2c}}{g}}}}$  \\
\\
\multirow{3}{*}{\textbf{Our work}} & Randomized: $O\inparen{\frac{n_1n_2}{g^2}\log n}$ \\
& Quantum: $O\inparen{\frac{\sqrt{n_1n_2}}{g}\log n}$ \\
& Quantum: $\Omega\inparen{\frac{\sqrt{n_1n_2}}{g}}$ \\
\bottomrule
\end{tabular}
}
\end{table}

\begin{remark}
    It is worth noting that for both quantum and randomized communication complexities, the upper bound of $\SetInc_{a,b,c,g}^n$ is also the upper bound of $\ESetInc_{a,b,c,g}^n$. Similarly, the lower bound of $\ESetInc_{a,b,c,g}^n$ is also the lower bound of $\SetInc_{a,b,c,g}^n$. 
\end{remark}

\subsection{Organization}
The remaining part of the paper is organized as follows. In \Cref{sec:pre}, we state some notations and definitions used in this paper. In \Cref{sec:cc}, we study the quantum and classical communication complexities of permutation-invariant functions. In \Cref{sec:log_rank}, we show the Log-rank Conjecture holds for non-trivial total permutation-invariant functions.
  In \Cref{sec:log_arank}, we study the Log-approximate Conjecture of permutation-invariant functions both in quantum and classical setting.
Finally, a conclusion is made in \Cref{sec:con}. The appendices contain a section on extended preliminaries and omitted proofs.

\section{Preliminaries}\label{sec:pre}
We introduce the notations and definitions used in this paper.

A \defemph{multiset} is a set with possibly repeating elements. We use $\{[\cdot]\}$ to denote multiset and $\{\cdot\}$ to denote standard set. Let $S$ be a multiset, $S\setminus \inbrace{a}$ removes one occurrence of $a$ from $S$ if there is any. 

\subsection{Boolean Functions}

A \defemph{partial function} is a function defined only on a subset of its domain $\domain$. Formally, given a partial Boolean function $f\colon \domain \rightarrow \inbrace{-1,1,*}$, $f(x)$ is \defemph{undefined} for $x\in \domain$ if $f(x)=*$. A \defemph{total function} is a function that is defined on the entire domain. We say $f\colon \domain\rightarrow \inbrace{-1,1,*}$ is a \defemph{subfunction} of $g\colon \domain\rightarrow \inbrace{-1,1,*}$ if 
$f(x) = g(x)$ or $f(x) = *$ for any $x \in \domain$. For function $f\colon \domain \rightarrow \inbrace{-1,1,*}$, we define $\overline{f}\colon \domain \rightarrow \inbrace{-1,1,*}$ as 
\begin{align*}
\overline{f}(x)\coloneqq
\begin{cases}
-f(x) &\text{if $f(x) = 1$ or $-1$}, \\
* &\text{otherwise.}
\end{cases}
\end{align*}

A \defemph{Boolean predicate} is a partial function that has domain $\domain = \inbrace{0,1,...,n}$ for any $n\in\mathbb{Z}^+$. 

An \defemph{incomplete Boolean matrix} is a matrix with entries in $\inbrace{-1,1,*}$, where undefined entries are filled with $*$. 

A \defemph{submatrix} is a matrix that is obtained by extracting certain rows and/or columns from a given matrix.

A \defemph{half-integer} is a number of the form $n+1/2$, where $n \in \mathbb{Z}$.

We introduce some Boolean operators as follows. For every $n\in\mathbb{N}$ and $x,y\in\inbrace{0,1}^n$:
\begin{enumerate}
    \item $\overline{x} \coloneqq (\overline{x_0},...,\overline{x_{n-1}}) =
    (1-x_0,...,1-x_{n-1})$;
    \item $x \land y \coloneqq (x_0 \land y_0,..., x_{n-1} \land y_{n-1})$; and
    \item $x \oplus y \coloneqq (x_0 \oplus y_0,..., x_{n-1} \oplus y_{n-1})$.
\end{enumerate}

\subsection{Communication Complexity Model}\label{sec:cc_model}
In the two-party communication model, Alice is given input $x \in \B^n$, and Bob is given input $y \in \B^n$. Then they aim to compute $f(x,y)$ for some function $f: \{0,1\}^n \times \{0,1\}^n \to \{-1,1,*\}$ by communication protocols while minimizing
the amount of communication between them.
{In this paper, we consider the communication protocols in the deterministic, randomized, and quantum settings, respectively. Furthermore, the formal definitions of communication complexities are shown as follows.} 

\begin{definition}
	[Deterministic Communication Complexity] For any function $f: \bit^n\times \bit^n\rightarrow \{-1,1,*\}$, the deterministic communication complexity $D(f)$ is defined as the cost of the deterministic protocol with the smallest cost, which computes $f$ correctly on any input. In particular, 
	$$D(f)=\min_{\substack{\Pi: \text{ deterministic protocol}\\ \forall x,y\in\B^n \text{ s.t. }  \Pi(x,y)=f(x,y)}}\CC\inparen{\Pi},$$
	where $\CC\inparen{\Pi}$ is the number of bits exchanged between two parties in the protocol $\Pi$.
\end{definition}

\begin{definition}
	[Randomized Communication Complexity] For any function $f: \bit^n\times \bit^n\rightarrow \{-1,1,*\}$, the randomized communication complexity $R_{\epsilon}(f)$ is defined as the cost of the randomized protocol with the smallest cost, which has access to public randomness and computes f correctly on any input with probability at least $1-\epsilon$. In particular, 
	$$R_{\epsilon}(f)=\min_{\substack{\Pi: \text{ randomized protocol}\\ \forall x,y\in\B^n \text{ s.t. } f(x,y)\neq *: \Pr[\Pi(x,y)]=f(x,y)]\ge 1-\epsilon}}\CC\inparen{\Pi},$$
	where $\CC\inparen{\Pi}$ is the number of bits exchanged between two parties in the protocol $\Pi$.
\end{definition}

\begin{definition}
	[Quantum Communication Complexity] For any function $f: \bit^n\times \bit^n\rightarrow \{-1,1,*\}$, the quantum communication complexity $Q_{\epsilon}(f)$ is defined as the cost of the quantum protocol with the smallest cost, which is allowed to share prior entanglement and computes $f$ correctly on any input with probability at least $1-\epsilon$. In particular, 
	$$Q_{\epsilon}(f)=\min_{\substack{\Pi: \text{ quantum protocol}\\ \forall x,y\in\B^n \text{ s.t. } f(x,y)\neq *: \Pr[\Pi(x,y)]=f(x,y)]\ge 1-\epsilon}}\CC\inparen{\Pi},$$
		where $\CC\inparen{\Pi}$ is the number of qubits exchanged between two parties in the protocol $\Pi$. 
\end{definition}

If a protocol succeeds with probability at least $1-\epsilon$ on any input for some constant $\epsilon < 1/2$, we say the protocol is with \defemph{bounded error}. If $\epsilon = 1/3$, we abbreviate $R_{\epsilon}(f),Q_{\epsilon}(f)$ as $R(f),Q(f)$. It is a folklore conclusion that $Q(f) \le R(f) \le D(f)$.


\subsection{Permutation-Invariant Functions}

{By \Cref{remark:and},} any permutation-invariant function $f:\inbrace{0,1}^n\times \inbrace{0,1}^n \rightarrow \inbrace{-1,1,*}$ depends only on $|x|,|y|$ and $|x \land y|$. Thus, for any $a,b \in \inbrace{0,1,...,n}$, there exists a function $f_{a,b}\colon\inbrace{\max\inbrace{0,a+b-n},...,\min\inbrace{a,b}} \rightarrow \inbrace{-1,1{,*}}$ 
such that
\begin{equation}\label{eq:hab}
    f_{a,b}(|x\land y|) = f(x,y),
\end{equation}
for any $x,y \in \inbrace{0,1}^n$ satisfying $|x| = a, |y| = b$.
If there exist $a,b \in [n]$ such that $f_{a,b}$ is not a constant function, we say $f$ is \defemph{non-trivial}. 

The following definition of jumps partitions the domain of $f_{a,b}$ into different intervals according to the transition of function values.
\begin{definition}[Jump in $f_{a,b}$]\label{def:jump} 
{For $f_{a,b}$ defined as (\ref{eq:hab}), consider $c$ and $g$ such that $c+g,c-g$  are in the domain of $f_{a,b}$, we say $(c,g)$ is a jump in $f_{a,b}$ if all the following three conditions are satisfied:}
\begin{enumerate}
    \item $f_{a,b}(c-g) \neq f_{a,b}(c+g)$;
    \item $f_{a,b}(c-g),f_{a,b}(c+g) \in \{-1,1\}$;
    \item $f_{a,b}(r)$ is undefined for $c-g < r < c+g$.
\end{enumerate}
Moreover, we define $\mathscr{J}(f_{a,b})$ to be the set of all jumps in $f_{a,b}$:
$$
\mathscr{J}\left(f_{a, b}\right) \coloneqq\left\{(c, g): \begin{array}{c}
f_{a, b}(c-g), f_{a, b}(c+g) \in\{0,1\} \\
f_{a, b}(c-g) \neq f_{a, b}(c+g) \\
\forall i \in(c-g, c+g), f_{a, b}(i)=*
\end{array}\right\}.
$$
\end{definition}
\begin{remark}
    If $(c,g)$ is a jump in $f_{a,b}$, then one of $\ESetInc_{a,b,c,g}^n$ and $\overline{\ESetInc}_{a,b,c,g}^n$ is a subfunction of $f$.  Additionally, Definition \ref{def:jump} is adapted from \cite{GKS16}, while \cite{GKS16} gives the definition of jumps in $h_{a,b}$, where $h_{a,b}(\Delta(x,y)) = f(x,y)$ for permutaion-invariant function $f$ and $\Delta(x,y)$ is the Hamming distance between $x$ and $y$.
\end{remark}


The following measure $m(\cdot)$ is used to capture the quantum communication complexity of permutation-invariant functions, which is inspired by the complexity measure introduced in~\cite{GKS16}.

\begin{definition}[Measure $m(f)$]
\label{def:mf}
Fix $n\in\mathbb{Z^+}$. Let $f:\{0,1\}^n \times \{0,1\}^n \to \{-1,1,*\}$ be a permutation-invariant function. If $f$ is non-trivial, we define the measure $m(f)$ of $f$ as follows:
\[
m(f) \coloneqq
\max_{\substack{a, b \in \inbrace{1,2,...,n-1} \\(c, g) \in \mathscr{J}\left(f_{a, b}\right) \\ n_1\coloneqq \min\{[a-c,c,b-c,n-a-b+c]\}\\n_2\coloneqq \min\left(\{[a-c,c,b-c,n-a-b+c]\}\setminus \inbrace{n_1}\right) }} \frac{\sqrt{n_{1} n_{2}}}{g}.
\]
{If $f(x,y)$ only depends on $|x|$ and $|y|$, let $m(f)\coloneqq 0$.}
\end{definition}
\begin{remark}
\Cref{def:mf} is motivated by the quantum lower bound of the Exact Set-Inclusion Problem (See \Cref{th:qcc_lowbound}). That is, $m(f)$ is the maximum of the quantum lower bound of all subfunctions of $f$ that are isomorphic to an Exact Set-Inclusion Problem. As a comparison, a similar measure in \cite{GKS16} is defined as the maximum of the randomized lower bound of all subfunctions of $f$ that are isomorphic to an Exact Gap-Hamming-Distance Problem.
\end{remark}

\section{Polynomial Equivalence on Communication Complexity of Permutation-Invariant Functions}\label{sec:cc}

To show the polynomial equivalence between quantum and randomized communication complexity of permutation-invariant functions as stated in \Cref{th:main}, we prove
the following two theorems (proved in \Cref{subsec:qcc_lowerbound,subsec:upper-bound}, respectively) for the quantum and randomized communication complexities of permutation-invariant functions using the measure in \Cref{def:mf}.

\begin{theorem}[Lower Bound]
\label{th:low}
Fix $n\in{\mathbb{Z}^+}$. Let $f:\{0,1\}^n \times \{0,1\}^n \to \{-1,1,*\}$ be a permutation-invariant function. We have 
\[
Q(f) = \Omega(m(f)).
\]
\end{theorem}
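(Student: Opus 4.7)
The plan is to follow the two-step reduction sketched in the technical overview: first lower-bound $Q^*(f)$ by the quantum communication complexity of the Exact Set-Inclusion problem $\mathsf{ESetInc}^n_{a,b,c,g}$ for a well-chosen jump, and then lower-bound that quantity using Sherstov's pattern matrix method together with Paturi-style approximate-degree bounds for symmetric predicates.

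First I would show that for every $a,b\in[n]$ and every jump $(c,g)\in\mathscr{J}(f_{a,b})$, the function $\mathsf{ESetInc}^n_{a,b,c,g}$ is a subfunction of $f$ up to a global sign. This is an immediate consequence of permutation-invariance combined with \Cref{eq:hab}: restricting the inputs of $f$ to pairs $(x,y)$ with $|x|=a$, $|y|=b$ and $|x\wedge y|\in\{c-g,c+g\}$ makes $f(x,y)$ depend only on $|x\wedge y|$, and by the definition of a jump the two values $f_{a,b}(c-g)$ and $f_{a,b}(c+g)$ are in $\{-1,1\}$ and distinct. Restriction to a subfunction can only decrease quantum communication complexity, so $Q^*(f)\geq Q^*(\mathsf{ESetInc}^n_{a,b,c,g})$.

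Next I would establish the core bound
\[
Q^*\bigl(\mathsf{ESetInc}^n_{a,b,c,g}\bigr)=\Omega\!\left(\frac{\sqrt{n_{1}n_{2}}}{g}\right),
\]
where $n_1\le n_2$ are the two smallest elements of the multiset $\{[a-c,\,c,\,b-c,\,n-a-b+c]\}$; this is exactly \Cref{th:qcc_lowbound} referenced in the overview. The strategy is to identify, inside the communication matrix of $\mathsf{ESetInc}^n_{a,b,c,g}$, a Sherstov pattern matrix whose base predicate is a partial symmetric function $\phi$ on $\{0,1,\dots,N\}$ that is $-1$ at Hamming weight $c-g$ and $+1$ at Hamming weight $c+g$. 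A standard symmetrization argument, combined with Paturi's theorem on the approximate degree of symmetric predicates, yields $\adeg(\phi)=\Omega(\sqrt{n_1 n_2}/g)$ in the relevant regime, and the pattern matrix theorem then lifts this to the desired lower bound on $Q^*$.

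The main obstacle will be arranging the pattern matrix so that \emph{both} $n_1$ and $n_2$ show up in the bound. The four quantities $a-c,\ c,\ b-c,\ n-a-b+c$ are precisely the sizes of the four cells of the universe $[n]$ under the partition $(X\setminus Y,\,X\cap Y,\,Y\setminus X,\,\overline{X\cup Y})$, and Alice and Bob each control only two of them through their inputs. I would therefore split into cases according to which two cells realize the minima $n_1,n_2$, and in each case describe an explicit embedding of a Paturi-hard symmetric predicate into $\mathsf{ESetInc}^n_{a,b,c,g}$ whose parameters are governed by $n_1$ (the number of ``blocks'' in the pattern matrix) and $n_2$ (the block size), which together produce the $\sqrt{n_1 n_2}/g$ scaling. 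Once this is done, taking the maximum of the lower bounds over all $a,b\in[n]$ and all $(c,g)\in\mathscr{J}(f_{a,b})$ gives $Q^*(f)=\Omega(m(f))$ by the very definition of $m(\cdot)$ in \Cref{def:mf}.
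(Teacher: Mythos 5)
Your overall strategy matches the paper's: reduce $Q^*(f)$ to $Q^*(\mathsf{ESetInc}^n_{a,b,c,g})$ by restricting to a jump (the subfunction argument), then lower bound the latter by the pattern matrix method together with Paturi's approximate-degree theorem, and finally take the maximum over $a,b$ and jumps $(c,g)$ to get $m(f)$. The first step and the closing maximization are exactly what appears in the paper (the restriction step is one line; \Cref{th:qcc_lowbound} does the heavy lifting).

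There is, however, a genuine gap in the way you plan to parameterize the pattern matrix, namely ``$n_1$ = number of blocks, $n_2$ = block size.'' In the $(n,t,f)$-pattern matrix theorem (\Cref{fact:pmm}), the block size $n/t$ only enters the lower bound through the \emph{logarithmic} factor $\log(n/t)$, not as a square root. If the base predicate $f$ lives on $t = n_1$ bits, then $\adeg(f) \le n_1$, so your bound is at most $n_1 \cdot \log n_2$, which in general is much smaller than $\sqrt{n_1 n_2}/g$ (take $g=1$, $n_1$ a constant, $n_2$ large). You cannot get the $\sqrt{n_2}$ dependence out of a $\log(\text{block size})$ factor. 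What the paper does instead is keep the block size \emph{constant} (it uses a $(2k,k,\cdot)$-pattern matrix, so $\log(n/t)=\log 2$ is just a constant) and push \emph{both} $n_1$ and $n_2$, after rescaling by roughly $1/(2g)$ (\Cref{le:SecInc}, item 3), into the parameters of the symmetric predicate itself: the predicate is on $\Theta(n_2/g)$ bits and its jump sits at weight $\Theta(n_1/g)$, so Paturi's theorem (\Cref{fact:paturi}) gives $\adeg = \Omega\bigl(\sqrt{(n_2/g)(n_1/g)}\bigr) = \Omega(\sqrt{n_1 n_2}/g)$ directly. The two reduction lemmas (\Cref{le:qcc_lowbound1,le:qcc_lowbound2}) then massage a general $\mathsf{ESetInc}^n_{a,b,c,g}$ into this canonical $\mathsf{ESetInc}^{4k}_{2k,k,l,1/2}$ form. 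Your case split by ``which two cells realize the minima'' is the right instinct, but the embedding you have in mind would not yield the claimed bound; you need to put $n_1$ and $n_2$ both inside the approximate-degree argument, not split them between predicate length and block size.
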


\begin{theorem}[Upper Bound]
\label{th:upper}
Fix $n\in{\mathbb{Z}^+}$.
Given a permutation-invariant function $f:\{0,1\}^n \times \{0,1\}^n \to \{-1,1,*\}$ and the corresponding measure $m(f)$ defined in Definition \ref{def:mf}, we have
\begin{enumerate}
    \item $R(f) = O\inparen{m(f)^2 \log^2 n\log\log n+\log n}$, and
    \item $Q(f) = O\inparen{m(f) \log^2 n\log\log n+\log n}$.
\end{enumerate}
\end{theorem}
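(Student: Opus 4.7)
The plan is the two-stage approach sketched in Section 1.2: first design an efficient protocol for the Set-Inclusion Problem $\SetInc^n_{a,b,c,g}$ from \Cref{def:setinc}, and then embed it inside a binary search to handle an arbitrary permutation-invariant $f$.

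Since $f$ is permutation-invariant, Alice and Bob begin by exchanging the Hamming weights $a := |x|$ and $b := |y|$ using $O(\log n)$ bits; after that, the output depends only on the overlap $t := |x \wedge y|$ through $f_{a,b}$. The function $f_{a,b}$ is constant on intervals bounded by the jumps $(c,g) \in \mathscr{J}(f_{a,b})$, so it suffices to locate $t$ with respect to those jumps. I would do this by a depth-$O(\log n)$ binary search that calls a $\SetInc^n_{a,b,c,g}$ subroutine at each node with a well-chosen $(c,g)$; by \Cref{def:mf} every such call satisfies $\sqrt{n_1 n_2}/g \le m(f)$, where $n_1, n_2$ are the two smallest among the cell counts $\inbrace{c,\ a-c,\ b-c,\ n-a-b+c}$.

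For the randomized $\SetInc$ subroutine I would use a correlated public-coin sampler tailored to the two smallest cells of the $(x,y)$-contingency table. Concretely, the parties use shared randomness to draw a subset $S$ of $k$ coordinates from an appropriately chosen ``side'' (one of $\inbrace{i : x_i=1}$, $\inbrace{i : y_i=1}$, or their complements, depending on which of $a, b, n-a, n-b$ is smallest), exchange $x|_S$ and $y|_S$, and form an unbiased estimator of $t$ from $|x|_S \wedge y|_S|$. A hypergeometric variance computation shows that this estimator has variance $O(n_1 n_2 / k)$, so choosing $k = \Theta(n_1 n_2 / g^2) = \Theta(m(f)^2)$ together with a Hoeffding/Chernoff bound decides $\SetInc^n_{a,b,c,g}$ with constant error using $O(m(f)^2)$ bits; this is the content of \Cref{le:extiXandY}. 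Amplifying each call to error $O(1/\log n)$ at a cost of a $\log\log n$ factor (so that the $O(\log n)$ binary-search queries all succeed by a union bound) and combining with the weight exchange yields $R(f) = O(m(f)^2 \log^2 n \log\log n)$, where one of the $\log n$ factors absorbs the per-round bookkeeping.

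For the quantum bound, I would replace the sampling step inside $\SetInc$ with quantum amplitude estimation in the style of Brassard et al.\ and Heinrich--Monien; this estimates an overlap to additive error $g$ using only $O(\sqrt{n_1 n_2}/g) = O(m(f))$ queries instead of $O(m(f)^2)$ classical samples, and each query can be implemented with $O(1)$ qubits of communication via the standard query-to-communication conversion. Plugging this \Cref{lemma:quantum_upper_bound_setinc}-style subroutine into the same binary-search wrapper yields $Q(f) = O(m(f) \log^2 n \log\log n)$. The principal obstacle is the design of the adaptive sampler: obtaining the measure-tight bound involving $n_1 n_2$, rather than the weaker $c(n-c)$ produced by naive uniform sampling, requires choosing the sampling distribution based on the cell counts learned from $a, b$, and verifying by a careful variance/Chernoff argument that the two-smallest-cell bound holds uniformly across all valid parameter regimes (including degenerate ones where $a$ or $b$ is near $0$ or $n$). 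The quantum step then additionally requires showing that this sampler admits an amplitude-estimation analogue with matching quadratic improvement; the binary search and error amplification are routine once these two subroutines are in hand.
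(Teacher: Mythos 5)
Your binary-search wrapper over the jumps of $f_{a,b}$, the per-call error amplification to $O(1/\log n)$ (costing the $\log\log n$ factor), and the use of amplitude estimation to get a quadratic speedup of the sampling subroutine all match the paper's strategy (\Cref{th:upper}, \Cref{le:extiXandY}, \Cref{lemma:quantum_upper_bound_setinc}). However, the proposed $\SetInc$ sampler has a genuine gap. You suggest drawing coordinates from one of the four \emph{marginal} sets $\{i:x_i=1\}$, $\{i:y_i=1\}$, $\{i:x_i=0\}$, $\{i:y_i=0\}$ (of sizes $a,b,n-a,n-b$), but this cannot produce the $O(n_1n_2/g^2)$ sample bound when the two smallest cell counts are the ``diagonal'' cells $n_1=c$ and $n_2=n-a-b+c$. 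For instance, take $a=b\approx n/2$ and $c=O(1)$: then $n_1,n_2=O(1)$ and the target sample cost is $O(1/g^2)$, yet every marginal has size $\Theta(n)$, so the variance of an estimator built from any marginal sample scales with a marginal size rather than with $n_1 n_2$, and the threshold test needs $\Omega(n/g^2)$ samples --- a factor of $n$ too many. The paper handles exactly this regime (Case~3 in the proof of \Cref{le:extiXandY}) by sampling uniformly from the \emph{joint agreement set} $\{i: x_i=y_i\}=S_{\bar x\oplus y}$, which has size $n-a-b+2|x\land y| = n_1+n_2\pm O(g)$ and which neither party can sample from unilaterally. Uniform sampling from it requires a coordinated two-party protocol (\Cref{fact:sample_oplus}, the randomized Brodal--Husfeldt binary search, \Cref{le:BH95}) at $O(\log n)$ bits per sample, plus a padding trick when $a+b=n$; with this sampler the fraction $p=|x\land y|/|\bar x\oplus y|$ has $\beta=O(n_1/(n_1+n_2))$ and $\epsilon=\Theta(g/(n_1+n_2))$, which gives the correct $\beta/\epsilon^2=O(n_1n_2/g^2)$.

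The quantum subroutine inherits the same obstruction. The ``standard query-to-communication conversion'' gives phase oracles $O_x,O_y,O_{x\land y}$ over $[n]$, which suffices for the marginal cases (where Alice prepares a superposition over her own $1$-set), but for the diagonal case the paper must build a two-party state-preparation unitary $\mathcal{S}_{\bar x\oplus y}$ --- using shared entanglement over a function register and a reversible implementation of the joint classical sampler --- whose output is a uniform superposition over the agreement set, and then run amplitude estimation against the projector onto $x\land y$. Without a joint sampler of this kind, the ``careful variance/Chernoff argument'' you defer to cannot succeed, because no choice of marginal side yields a two-smallest-cell variance bound uniformly over all parameter regimes.
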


\subsection{Quantum Communication Complexity Lower Bound}\label{subsec:qcc_lowerbound}
In this section, our goal is to obtain a lower bound on the quantum communication complexity for permutation-invariant functions (\Cref{th:low}). Towards this end, we show that every permutation-invariant function $f$ can be reduced to $\mathrm{ESetInc}$ (defined in \Cref{def:setinc}) and exhibit a lower bound for $\mathrm{ESetInc}$ (\Cref{th:qcc_lowbound}). Additionally, \Cref{th:qcc_lowbound} implies if $|x|=a,|y|=b$, then the cost to distinguish $|x\land y| =c-g$ from $|x\land y| =c+g$ is related to the smallest two numbers in $\{[a-c,c,b-c,n-a-b+c]\}$.

\begin{lemma}
\label{th:qcc_lowbound}
Fix $n\in{\mathbb{Z^+}}$. Consider $a,b \in \inbrace{1,...,n-1}$ and $c-g,c+g$ are achievable Hamming weights of $|x \land y|$ when $|x| = a, |y| = b$. Let $n_1\coloneqq \min\{[a-c,c,b-c,n-a-b+c]\}$ and $n_2\coloneqq \min\left(\{[a-c,c,b-c,n-a-b+c]\}\setminus \inbrace{n_1}\right)$. We have
\[Q\inparen{\mathrm{ESetInc}_{a,b,c,g}^n} = \Omega\inparen{\frac{\sqrt{n_1 n_2}}{g}}.\] 
\end{lemma}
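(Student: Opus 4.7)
My plan is to prove the lower bound by reducing $\mathsf{ESetInc}^n_{a,b,c,g}$ to a pattern matrix of a partial symmetric predicate and then applying Sherstov's pattern matrix theorem together with a Paturi-style approximate-degree bound, exactly as outlined in the paper's technical overview.

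First I would normalize the problem using its natural symmetries: complementing Alice's bits, complementing Bob's bits, and swapping the two parties. Each operation permutes the four type-count values $(a-c,\, c,\, b-c,\, n-a-b+c)$ while preserving $Q^*$ up to constants, and the induced group on $\{1,2,3,4\}$ is dihedral of order $8$. Its action on the six unordered pairs of positions has two orbits: an ``adjacent'' orbit of size $4$ (for example $\{c, b-c\}$) and a ``diagonal'' orbit of size $2$ (for example $\{a-c, b-c\}$). Up to symmetry I may therefore assume the two smallest type-counts $n_1 \le n_2$ lie in a prescribed pair of positions and handle the two orbit-types in parallel.

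Next I would construct a pattern-matrix embedding of a partial symmetric predicate $D$ on $k = n_1 + n_2$ bits. Partition $[n]$ into an ``active'' region of $2kN$ positions arranged as $k$ blocks of size $2N$, plus a complementary ``padding'' region. In each active block, Alice's bits encode a string $x_i \in \{0,1\}^N$ together with its bitwise complement, while Bob's bits encode a selector $(v_i, w_i) \in [N] \times \{0,1\}$ as a weight-$1$ indicator; a direct calculation shows that the per-block overlap equals $x_{i,v_i} \oplus w_i$, so the total active overlap equals the Hamming weight of the pattern-matrix output $(x_{i,v_i} \oplus w_i)_{i \in [k]}$. Padding bits of appropriate types are then chosen deterministically so that the total Alice and Bob Hamming weights are exactly $a$ and $b$, and so that the two pattern-matrix Hamming weights $c^* - g$ and $c^* + g$ correspond to total overlaps $c - g$ and $c + g$, where $c^* := n_2$ is placed to make the final approximate-degree bound tight.

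Finally, Sherstov's pattern matrix theorem yields $Q^* \ge \Omega(\adeg_{1/3}(D))$ for the resulting matrix, where $D$ is a partial symmetric predicate on $k = n_1 + n_2$ bits with $D(c^* - g) = -1$ and $D(c^* + g) = 1$. A Bernstein--Markov inequality (the ingredient underlying Paturi's theorem) forces any bounded polynomial separating these two values to have degree at least $\Omega(\sqrt{c^*(k - c^*)}/g) = \Omega(\sqrt{n_1 n_2}/g)$, giving the claim. The main obstacle I anticipate is verifying embedding feasibility uniformly over all parameter regimes: the padding must absorb all four type counts simultaneously while keeping $N \ge 2$ so that the pattern-matrix lower bound is non-vacuous, and this is not automatic when the two ``large'' type counts are only barely larger than $n_2$. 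In such edge cases one can either choose the symmetry in the first step so as to match the tightest constraints to the available slack, or, in the diagonal orbit, replace the pattern-matrix gadget by a Gap-Hamming-style variant and run the same Bernstein--Markov argument, losing at most a constant factor.
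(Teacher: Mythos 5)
Your overall strategy --- realize a pattern matrix of a partial symmetric predicate inside $\ESetInc$ and invoke the pattern-matrix method together with a Paturi/Bernstein--Markov approximate-degree bound --- matches the paper's, and your observations about the dihedral action on the four type-counts and the degree lower bound $\Omega(\sqrt{c^*(k-c^*)}/g)$ are both correct.

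The gap, which you flag yourself, lies in the embedding, and I do not think your workarounds close it. With $k = n_1+n_2$ output bits and a $2N$-bit gadget per block, Alice's active weight is exactly $kN$, and the pattern-matrix factor $\log(n/t)=\log N$ is non-vacuous only when $N \ge 2$; this forces $a \ge 2(n_1+n_2)$. But after normalizing so that $n_1 = c$, the case $n_2 = a-c$ gives $a = n_1+n_2 < 2(n_1+n_2)$, so the active region cannot fit regardless of padding. The stabilizer of the position $c$ under the dihedral action is only $\{\mathrm{id},\,a-c \leftrightarrow b-c\}$, which cannot enlarge $a$; the other reflections move $c$ and break the normalization, and complementing a side gives $n-a = (b-c)+(n-a-b+c) \ge 2n_2$, which still need not reach $2(n_1+n_2)$ when the two larger type-counts barely exceed $n_2$. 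Your workaround (b), a Gap-Hamming gadget in the diagonal orbit, is a one-line sketch and is not what the paper does.

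The paper avoids the feasibility issue entirely by decoupling the two steps you are trying to do at once. \Cref{lemma:partial} proves the pattern-matrix lower bound only for the canonical family $\ESetInc^{4m}_{2m,m,\ell,1/2}$, where Alice's weight is exactly half the universe so the embedding is trivially feasible. \Cref{le:SecInc} then supplies a small toolbox of $\ESetInc$-to-$\ESetInc$ reductions (padding any of the four type-counts, complementing a side, scaling all parameters), and \Cref{le:qcc_lowbound1,le:qcc_lowbound2} use it, together with a rescaling by $2g$ to unit gap and a short three-case analysis, to map the general $(n,a,b,c,g)$ instance onto a canonical one. Your ``choose the symmetry to match the slack'' idea is the seed of this, but carried out uniformly it essentially becomes this reduction chain; the paper's decoupling keeps the pattern-matrix step fixed and pushes all the parameter bookkeeping into elementary reduction lemmas, which is cleaner and verifiably covers every regime.
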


\begin{proof}[Proof of \Cref{th:low}]
By the definitions of $f_{a,b}$ and jump of $f_{a,b}$,
any quantum protocol computing $f$ can also compute $\mathrm{ESetInc}^{n}_{a,b,c,g}$ for any $a,b$ and any jump $(c,g) \in \mathscr{J}(f_{a,b})$. Therefore, given a jump $(c,g)$ for $f_{a,b}$, the cost of computing $\mathrm{ESetInc}^{n}_{a,b,c,g}$ lower bounds the cost of computing $f$. By \Cref{th:qcc_lowbound}, we have $Q(f) \ge \Omega\inparen{\frac{\sqrt{n_1n_2}}{g}}$ for any jump $(c,g)$ in $f_{a,b}$, where $n_1, n_2$ are the smallest two numbers in $\{[a-c,c,b-c,n-a-b+c]\}$.
We conclude that
$Q(f) = \Omega\inparen{m(f)}$ as desired.
\end{proof}

Now we remain to show \Cref{th:qcc_lowbound}. We note that the following two lemmas imply \Cref{th:qcc_lowbound} directly, where \Cref{le:qcc_lowbound1} reduces the instance such that the parameter only relies on $n_1,n_2,g$ and \Cref{le:qcc_lowbound2} gives the final lower bound.  

\begin{restatable}{lemma}{qcclowerboundone}
\label{le:qcc_lowbound1}
Fix $n\in{\mathbb{Z^+}}$. Consider $a,b \in \inbrace{1,...,n-1}$ and $c-g,c+g$ are achievable Hamming weights of $|x \land y|$ when $|x| = a, |y| = b$. 
Let $n_1\coloneqq \min\{[a-c,c,b-c,n-a-b+c]\}$ and $n_2\coloneqq \min\left(\{[a-c,c,b-c,n-a-b+c]\}\setminus \inbrace{n_1}\right)$. We have
\[Q\inparen{\mathrm{ESetInc}_{a,b,c,g}^n} \ge Q\inparen{\mathrm{ESetInc}_{n_1+n_2,n_1+n_2,n_1,g}^{n_1+3n_2}}.\] 
\end{restatable}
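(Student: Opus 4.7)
The plan is to give a local, no-communication reduction from the canonical instance $\mathsf{ESetInc}_{n_1+n_2, n_1+n_2, n_1, g}^{n_1+3n_2}$ to $\mathsf{ESetInc}_{a,b,c,g}^n$: Alice will compute $x \in \{0,1\}^n$ from her input $x' \in \{0,1\}^{n_1+3n_2}$, Bob will compute $y$ from $y'$, and they will invoke any protocol for the larger instance on $(x,y)$, flipping the output at the end if the reduction demands it. Because the map $(x',y') \mapsto (x,y)$ factors as a product of two purely local maps, this would prove $Q^*\inparen{\mathsf{ESetInc}_{n_1+n_2, n_1+n_2, n_1, g}^{n_1+3n_2}} \le Q^*\inparen{\mathsf{ESetInc}_{a,b,c,g}^n}$ with no overhead in either communication or entanglement.

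The picture I would carry throughout is that any $\mathsf{ESetInc}$ instance is characterised by its four \emph{slot counts} $(a-c,\,c,\,b-c,\,n-a-b+c)$, i.e.\ the numbers of coordinates in which $(x_i,y_i)$ takes the values $(1,0),(1,1),(0,1),(0,0)$; the $\pm g$ perturbation is carried in the $(1,1)$-slot and propagates to the others by the $|x|,|y|$ constraints. The canonical instance has slot counts $(n_2, n_1, n_2, n_2)$, and $n_1, n_2$ are by definition the two smallest entries of the slot-count multiset of $\mathsf{ESetInc}_{a,b,c,g}^n$.

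The reduction will have two steps. In the \emph{symmetry step}, Alice and/or Bob locally complement their input before copying, in order to place the canonical $(1,1)$-slot (of copy count $n_1 \pm g$) onto whichever of the four target slots realises $n_1$. Since $|x \land \bar y| = |x| - |x \land y|$, a complement permutes the four slot types and in two of the four possible complement patterns swaps the $+g$ and $-g$ cases, in which case the reduction negates its final output. This handles the four sub-cases indexed by which of $\{a-c,\,c,\,b-c,\,n-a-b+c\}$ equals $n_1$. In the \emph{padding step}, Alice and Bob then append $n - (n_1 + 3n_2)$ additional coordinates according to a fixed, publicly-known template, divided into three groups whose bit types match the three non-$n_1$ target slots; the number of coordinates in each group is the corresponding target slot count minus $n_2$, which is non-negative precisely because $n_2$ is the minimum of those three slot counts.

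A direct bookkeeping calculation will then show that, for each of the four complement patterns, the composite map sends $(x',y')$ with $\mathsf{ESetInc}_{n_1+n_2, n_1+n_2, n_1, g}^{n_1+3n_2}(x',y') = \beta$ to a valid pair $(x,y)$ satisfying $|x|=a$, $|y|=b$, and $\mathsf{ESetInc}_{a,b,c,g}^n(x,y) = \pm \beta$, with the sign fixed in advance by the symmetry step. The main obstacle I anticipate is this case analysis: for each of the four possible positions of $n_1$ one must track how the $\pm g$ perturbation propagates through the chosen complement, and verify that all four padding counts (each of which, after the appropriate slot relabelling, has the form ``target slot count $-\,n_2$'' or $0$) remain non-negative. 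This is expected to be routine, since the required non-negativity is exactly the definition of $n_2$ as the second-smallest slot count, and I do not foresee any conceptual difficulty beyond careful sign-tracking.
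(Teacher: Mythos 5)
Your proposal is correct and takes essentially the same approach as the paper: your symmetry step is the complement reduction of Lemma~\ref{le:SecInc}(2), and your padding step is Lemma~\ref{le:SecInc}(1), with non-negativity of the three padding counts following from $n_2$ being the second-smallest entry of the slot-count multiset, exactly as you note. The only cosmetic difference is that you build the composite local map $(x',y')\mapsto(x,y)$ directly rather than chaining through Lemma~\ref{le:SecInc}, and your slot-count picture makes explicit the output-flip bookkeeping and the residual case split (on which slot realises $n_1$, hence $n_2$) that the paper delegates to the lemma and to the closing ``If $n_2 = b-c$ or $n-a-b+c$, the argument is similar.''
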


\begin{lemma}\label{le:qcc_lowbound2}
Consider $n_1,n_2,g$ such that  $n_1\le n_2$ and $n_1-g,n_1+g$ are achievable Hamming weights of $|x \land y|$ when $|x| = |y| = n_1+n_2$, we have
\[Q\inparen{\mathrm{ESetInc}_{n_1+n_2,n_1+n_2,n_1,g}^{n_1+3n_2}} = \Omega\inparen{\frac{\sqrt{n_1 n_2}}{g}}.\]
\end{lemma}

We use the following two results on $\ESetInc$ to show \Cref{le:qcc_lowbound1,le:qcc_lowbound2}. Specifically, \Cref{le:SecInc} is a variant of Lemma~4.1 in \cite{CR12} and shows some reduction methods to the instances of the Exact Set-Inclusion Problem. \Cref{lemma:partial} is a generalization of Theorem~5 in \cite{BCG+21} proved by pattern matrix method and shows the lower bound of a special instance of the Exact Set-Inclusion Problem. The proofs of \Cref{le:SecInc,lemma:partial} are given in Appendix \ref{appendix_qcc_lowerbound}.

\begin{restatable}{lemma}{QSecInc}
\label{le:SecInc}
Fix $n\in{\mathbb{Z^+}}$. Consider $a,b \in \inbrace{1,...,n-1}$ and $c-g,c+g$ are achievable Hamming weights of $|x \land y|$ when $|x| = a, |y| = b$.
The following relations hold.
\begin{enumerate}
    \item $Q\inparen{\mathrm{ESetInc}_{a, b, c, g}^{n}} \le Q\inparen{\mathrm{ESetInc}_{a+\ell_1 + \ell_3, b + \ell_2 + \ell_3, c+\ell_3, g}^{n+\ell}}$ for integers $\ell_1, \ell_2, \ell_3 \ge 0$ such that $\ell_1 + \ell_2 + \ell_3 \le \ell$;
    \item $Q\inparen{\mathrm{ESetInc}_{a, b, c, g}^{n}} = Q\inparen{\mathrm{ESetInc}_{a, n-b, a-c, g}^{n}} = Q\inparen{\mathrm{ESetInc}_{n-a, b, b-c, g}^{n}}$;
    \item $Q\inparen{\mathrm{ESetInc}_{a, b, c, g}^{n}} \le Q\inparen{\mathrm{ESetInc}_{ka, kb, kc, kg}^{kn}}$, where $k \ge 1$ is an integer.
\end{enumerate}
\end{restatable}

\begin{restatable}{lemma}{lemmapartial}
\label{lemma:partial}
For every $k\in{\mathbb{Z}^+}$,
if $l$ is a half-integer and $0 < l \le k/2$, then    $Q\inparen{\mathrm{ESetInc}^{4k}_{2k,k,l,1/2}} = \Omega\inparen{\sqrt{kl}}$.
\end{restatable}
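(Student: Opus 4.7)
The plan is to apply Sherstov's pattern matrix method, reducing the quantum communication lower bound for $\ESetInc^{4k}_{2k,k,l,1/2}$ to the approximate degree of a partial symmetric predicate on $k$ bits, and then to invoke Paturi's classical characterization of approximate degree of symmetric functions.

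First, I embed a pattern-matrix instance inside $\ESetInc^{4k}_{2k,k,l,1/2}$ as follows. Partition the $4k$ coordinates into $k$ blocks of $4$, restrict Alice's input $x$ to have exactly two $1$s per block (so $|x|=2k$), and restrict Bob's input $y$ to have exactly one $1$ per block (so $|y|=k$). Let $V_i\in\inbrace{1,2,3,4}$ denote the position of Bob's $1$ in block $i$, and set $z_i := x^{(i)}_{V_i}$. Then $|x\wedge y|=\sum_{i=1}^{k} z_i$, so distinguishing $|x\wedge y|\in\inbrace{l-1/2,l+1/2}$ is equivalent to distinguishing the two adjacent integer Hamming weights $l-1/2$ and $l+1/2$ of the induced $k$-bit string $z$. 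Moreover, for each fixed $V_i$, exactly three of the six $2$-out-of-$4$ patterns available to Alice produce $z_i=1$, so the induced bit is perfectly balanced. This balance condition is what powers the pattern-matrix argument.

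Second, let $\phi:\inbrace{0,1}^{k}\to\inbrace{-1,1,*}$ be the partial symmetric predicate with $\phi(z)=-1$ when $|z|=l-1/2$, $\phi(z)=+1$ when $|z|=l+1/2$, and $\phi(z)=*$ otherwise. The embedding above shows that any quantum protocol for $\ESetInc^{4k}_{2k,k,l,1/2}$ solves the pattern matrix associated to $\phi$, so Sherstov's pattern matrix theorem --- in the form that lower-bounds $Q^{*}$ of a pattern matrix by the approximate $\gamma_2$ norm, which in turn is bounded below by $\adeg$ of the base predicate --- yields $Q^{*}\inparen{\ESetInc^{4k}_{2k,k,l,1/2}}=\Omega(\adeg(\phi))$. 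Since $\phi$ switches value between adjacent integer Hamming weights $l-1/2$ and $l+1/2$ on $k$ bits, Paturi's theorem (in its partial-function form proved via Markov--Bernstein inequalities) gives $\adeg(\phi)=\Omega(\sqrt{k\,l})$ under the hypothesis $l\le k/2$, completing the proof.

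The main technical obstacle is instantiating Sherstov's pattern matrix theorem with Alice's per-block alphabet restricted to the six $2$-out-of-$4$ strings rather than the unrestricted alphabet assumed in the classical statement. I plan to handle this by the dual-polynomial route: take an optimal dual witness for $\adeg(\phi)$ supplied by Paturi, lift it through the uniform distribution on the restricted alphabet, and verify that the resulting dual object certifies the desired $\gamma_2^{*}$ lower bound on the restricted pattern matrix. The balance property established in the first paragraph is precisely what ensures the lifted dual has the correct Fourier concentration. This is the non-central-jump generalization of Theorem~5 of~\cite{BCG+21}, which treats the $l=k/2$ case by the same scheme; moving the jump away from the center only changes the Paturi estimate from $\Omega(\sqrt{k})$ to $\Omega(\sqrt{k\,l})$ and leaves the dual-lifting step structurally intact.
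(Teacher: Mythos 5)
Your overall strategy---pattern matrix method plus Paturi's theorem applied to the partial symmetric predicate that jumps between Hamming weights $l-1/2$ and $l+1/2$---is the same one the paper follows, and your identification of $\adeg(\phi)=\Omega(\sqrt{kl})$ via Paturi is correct. The problem is the embedding you choose and the unresolved obstacle it creates.

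You restrict Alice to all $\binom{4}{2}^k=6^k$ strings with exactly two ones per block of four. This is \emph{not} the column/row structure of a canonical $(n,t,\cdot)$-pattern matrix, and you rightly flag that Sherstov's theorem does not apply off the shelf. Your proposed fix---lifting Paturi's dual witness through the six-element per-block alphabet and re-verifying the spectral-norm bound---is the crux of the argument, and you leave it as a plan rather than a proof. As written, the proposal therefore has a genuine gap exactly where the work needs to happen. Moreover, the fix is unnecessary: if you instead let Alice range only over the $4^k$ strings of the form $\tilde{x}=x_1\overline{x}_1 x_2\overline{x}_2\cdots x_{2k}\overline{x}_{2k}$ for $x\in\B^{2k}$ (still two ones per block, but only four of the six patterns appear, namely those where the pair $(x_{2j-1},x_{2j})$ determines the block), then $|\tilde{x}|=2k$ automatically, Bob's single-one-per-block strings are exactly $\mathcal{V}(4k,k)$, and $\tilde{x}|_{\tilde{V}}$ ranges over precisely the values $x|_V\oplus w$ for $(V,w)\in\mathcal{V}(2k,k)\times\B^k$. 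That is, the canonical $(2k,k,f_{k,l})$-pattern matrix is literally a submatrix of $\ESetInc^{4k}_{2k,k,l,1/2}$, and one may quote Fact~\ref{fact:pmm} directly with no alphabet adaptation and no dual lifting. This is what the paper does. Since the canonical pattern matrix is also a submatrix of your $6^k\times 4^k$ matrix, you could in principle salvage your argument by restricting rows further rather than generalizing the pattern matrix theorem, but the proposal as stated does not take that step and leaves the hard part open.
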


\begin{proof}[Proof of \Cref{le:qcc_lowbound1}]
Using the second item of \Cref{le:SecInc}, we assume $n_1 = c$ without loss of generality. Furthermore, we assume $n_2 = a-c$. Let $n_3 \coloneqq b-c, n_4 \coloneqq n-a-b+c$. Then $n_3,n_4 \ge n_2 \ge n_1$ and $n = n_1+n_2+n_3+n_4$.
    By Lemma \ref{le:SecInc}, we have \[
    \begin{aligned}
&Q\inparen{\mathrm{ESetInc}^{n_1 + 3n_2}_{n_1 + n_2, n_1 + n_2, n_1, g}}\\
&= Q\inparen{\mathrm{ESetInc}^{n_1+n_2+n_2+n_2}_{n_1+n_2,n_1+n_2, n_1, g}} \\
&\le Q\inparen{\mathrm{ESetInc}^{n_1+n_2+n_3+n_4}_{n_1+n_2, n_1+n_3, n_1, g}} \\
&= Q\inparen{\mathrm{ESetInc}^n_{a,b,c,g}}.
    \end{aligned}
\]
If $n_2 = b-c$ or $n-a-b+c$, the argument is similar.
\end{proof}

\begin{proof}[Proof of \Cref{le:qcc_lowbound2}]
Let $m_1 = \left\lfloor\frac{n_1}{2g}+\frac{1}{2}\right\rfloor-\frac{1}{2}$, i.e., $m_1$ is the largest half-integer no more than $\frac{n_1}{2g}$.
Similarly, let $m_2 = \left\lfloor\frac{n_2}{2g}+\frac{1}{2}\right\rfloor-\frac{1}{2}$. By  \Cref{le:SecInc}, we have
\[
\begin{aligned}
&Q\inparen{\mathrm{ESetInc}^{n_1+3n_2}_{n_1+n_2, n_1+n_2, n_1, g}} \\
&\ge Q\inparen{\mathrm{ESetInc}^{m_1+3m_2}_{m_1+m_2, m_1+m_2, m_1, 1/2}}.
\end{aligned}
\]
Then we discuss the following three cases:
\begin{itemize}
    \item \emph{Case 1: $m_1 = m_2 = 1/2$.}
We have 
\[
\begin{aligned}
&Q\inparen{\mathrm{ESetInc}^{m_1+3m_2}_{m_1+m_2, m_1+m_2, m_1, 1/2}} \\
&= 
\Omega\inparen{1} \\
&= \Omega\inparen{\sqrt{m_1m_2}}.
\end{aligned}
\]
\item \emph{Case 2: $m_2 \ge 3/2$ and $m_1 = 1/2$.}
 Let 
$m'_2 \coloneqq \left\lfloor \frac{m_1+m_2}{2}\right\rfloor, 
l_1 \coloneqq m_1+m_2-2m'_2,
l_2 \coloneqq m_1+m_2-m'_2,
l \coloneqq m_1+3m_2-4m'_2$.
Then, 
\[
\begin{aligned}
l-(l_1+l_2) = m_2-m_1-m'_2
\ge \frac{m_2+m_1}{2}-m'_2 
\ge 0.
\end{aligned}
\]
By \Cref{{le:SecInc},lemma:partial}, we have
\[
\begin{aligned}
&Q\inparen{\mathrm{ESetInc}^{m_1+3m_2}_{m_1+m_2, m_1+m_2, m_1, 1/2}}\\
&=
Q\inparen{\mathrm{ESetInc}^{4m'_2+l}_{2m'_2+l_1, m'_2+l_2, m_1, 1/2}}\\
&\ge 
Q\inparen{\mathrm{ESetInc}^{4m'_2}_{2m'_2, m'_2, m_1, 1/2}}\\
&= \Omega\inparen{\sqrt{m_1m'_2}}\\
&= \Omega\inparen{\sqrt{m_1m_2}}.
\end{aligned}
\]

\item \emph{Case 3: $m_1 \ge 3/2$.}
Let $
m \coloneqq \left\lfloor \frac{m_1}{6}+\frac{m_2}{2}\right\rfloor,
k \coloneqq \left\lfloor\frac{m_1}{3}+\frac{1}{2}\right\rfloor-\frac{1}{2},
l_3 \coloneqq m_1-k,
l_1 \coloneqq \inparen{m_1+m_2-2m}-l_3,
l_2 \coloneqq \inparen{m_1+m_2-m}-l_3,
l \coloneqq m_1+3m_2-4m$.
Since $k$ is the largest half-integer smaller than $\frac{m_1}{3}$, we have $k \le \frac{1}{2}\cdot \left\lfloor \frac{2m_1}{3}\right\rfloor$.
Since $m_1 \le m_2$, we have 
\begin{equation}\label{eq:m1m}
k \le  \frac{1}{2}\cdot\left\lfloor \frac{2m_1}{3}\right\rfloor \le \frac{1}{2} \cdot\left\lfloor \frac{m_1}{6}+\frac{m_2}{2}\right\rfloor \le \frac{m}{2},
\end{equation}
and
\begin{equation}\label{eq:ll_com}
\begin{aligned}
l-\inparen{l_1+l_2+l_3} &= m_2-k-m \\
&\ge m_2-\frac{m_1}{3}-\inparen{\frac{m_1}{6}+\frac{m_2}{2}} \\
&\ge 0.
\end{aligned}
\end{equation}
Then we have
\[
\begin{aligned}
&Q\inparen{\mathrm{ESetInc}^{m_1+3m_2}_{m_1+m_2, m_1+m_2, m_1, 1/2}} \\
&= 
Q\inparen{\mathrm{ESetInc}^{4m+l}_{2m+l_1+l_3, m+l_2+l_3, k+l_3, 1/2}} \\
&\ge 
Q\inparen{\mathrm{ESetInc}^{4m}_{2m, m, k, 1/2}} \\
&= \Omega\inparen{\sqrt{mk}} \\ 
&= 
\Omega\inparen{\sqrt{m_1m_2}},
\end{aligned}
\]
where the inequality follows from  \Cref{le:SecInc} and (\ref{eq:ll_com}) and the second equality follows from \Cref{lemma:partial} and (\ref{eq:m1m}).
\end{itemize}
We conclude that
\[
\begin{aligned}
&Q\inparen{\mathrm{ESetInc}^{n_1+3n_2}_{n_1+n_2, n_1+n_2, n_1, g}} \\
&\ge Q\inparen{\mathrm{ESetInc}^{m_1+3m_2}_{m_1+m_2, m_1+m_2, m_1, 1/2}}\\
&= \Omega\inparen{\frac{\sqrt{n_1n_2}}{g}}.
\end{aligned}
\]
\end{proof}

\subsection{Randomized and Quantum Communication Complexity Upper Bound}
\label{subsec:upper-bound}

We show upper bounds on the randomized and quantum communication complexities for permutation invariant functions (\Cref{th:upper}). Similar to \Cref{subsec:qcc_lowerbound}, we do so by giving upper bounds for a specific problem, $\SetInc$ (see \Cref{def:setinc}), and reducing permutation-invariant functions to $\SetInc$. 

	The intuition of our randomized protocol to compute $\SetInc$ is as follows: Let $s_1,s_2$ $(s_1 \le s_2)$ be the smallest two numbers in $\{[|\overline{x} \land y|, |x \land y|, |\overline{x} \land y|, |\overline{x} \land \overline{y}|]\}$, and $n_1,n_2$ $(n_1 \le n_2)$ be the smallest two numbers in $\{[a-c,c,b-c,n-a-b+c]\}$. It is worth noting that $|n_1-s_1| = |n_2-s_2| = g$. Let $p = \frac{s_1}{s_1+s_2}$. Then we show $\SetInc$ is equivalent to distinguish $p \le \beta-\epsilon$ from $p \ge\beta+\epsilon$ for some $\beta = O\inparen{\frac{n_1}{n_2}}, \epsilon = \Omega\inparen{\frac{n_2}{g}}$.
	Depending on the value of $n_1,n_2$, we use different sampling methods to estimate $p$ with error $\epsilon$. In the bounded-error randomized case, the communication cost is $O\inparen{\frac{\beta}{\epsilon^2}\log n} = O\inparen{\frac{n_1n_2}{g^2}\log n}$. Then we repeat $O(\log \log n)$ times of protocol such that the failed probability is at most $O(1/\log n)$. Finally, we use quantum amplitude amplification to speed up the randomized protocol.

The following two lemmas capture the randomized and quantum communication complexity for $\SetInc$, respectively.

\begin{lemma}[Randomized Upper Bound]
\label{le:extiXandY}
Fix $n,a,b\in{\mathbb{Z}^+}$. Consider $c,g$ such that $c+g,c-g\in{\mathbb{N}}$. Let $n_1\coloneqq \min\{[a-c,c,b-c,n-a-b+c]\}$ and $n_2\coloneqq \min\left(\{[a-c,c,b-c,n-a-b+c]\}\setminus \inbrace{n_1}\right)$. For any input $x,y\in \B^n$ of $\mathrm{SetInc}_{a, b, c, g}^{n}$, there exists a randomized communication protocol that computes $\mathrm{SetInc}_{a, b, c, g}^{n}(x,y)$ using $O\inparen{\frac{n_1n_2}{g^2} \log n \log \log n}$  bits of communication with success probability at least $1- 1/(6\log n)$.
\end{lemma}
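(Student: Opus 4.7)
The plan is to design the protocol by symmetrization, case analysis on the four cells $\{c,\, a-c,\, b-c,\, n-a-b+c\}$, and tailored Bernoulli-style sampling, followed by standard median-amplification to boost the success probability. First, using the randomized analog of \Cref{le:SecInc} (complementing either input and/or swapping the roles of Alice and Bob), I would reduce without loss of generality to the canonical configuration in which $n_1 = c$, i.e., the intersection cell is the smallest of the four cells. These symmetries preserve the quantity $n_1 n_2 / g^2$, so it suffices to prove the bound in this canonical setting.

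In the canonical configuration, the second-smallest cell $n_2$ lies in $\{a-c,\, b-c,\, n-a-b+c\}$. The main subroutine is a hypergeometric sample-and-count estimator: the party whose $1$- or $0$-positions form a source set that captures $|x \wedge y|$ (chosen according to which sub-case we are in) uses public randomness to pick $T = \Theta(n_1 n_2 \log \log n / g^2)$ indices uniformly without replacement, transmits them in $O(T \log n)$ bits, and receives back the count $Z$ of matches. Rescaling $Z$ by the appropriate source-to-intersection ratio gives an estimator $\hat c$; a direct hypergeometric calculation shows $\mathrm{Var}(\hat c) = O(g^2/\log\log n)$, and Chebyshev's inequality combined with a median of $\Theta(\log\log n)$ independent runs (with fresh public randomness) drives the error below $1/(6\log n)$. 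The total communication is $O(T \log n \cdot \log\log n) = O(n_1 n_2 \log n \log\log n / g^2)$ bits.

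The main obstacle will be the sub-case $n_2 = n-a-b+c$, where both the $(1,1)$- and $(0,0)$-cells are small but the off-diagonals $a-c,\, b-c$ may be much larger; naive sampling from a single party's $1$- or $0$-positions then gives variance proportional to $n_1 \cdot \min(a-c,\, b-c)$ rather than $n_1 n_2$. I would address this via the identity $c = a + b - n + |\overline{x} \wedge \overline{y}|$, which converts estimating $c$ into estimating $|\overline{x} \wedge \overline{y}| = n_2$, and then sample from the $0$-positions of the party whose zero-count is smaller; carefully choosing among the four natural sampling sources (Alice/Bob crossed with $1$'s/$0$'s) together with the $a \leftrightarrow b$ symmetry brings the variance down to the target $n_1 n_2 / T$. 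Any residual slack (for example, when $n_1$ is small relative to $g^2$ so that direct sampling is wasteful) can be absorbed by switching to sparse-exchange over the "agreement" set $(x \wedge y) \cup (\overline{x} \wedge \overline{y})$, whose combined size is $O(n_1 + n_2)$ and hence communicable in $O((n_1 + n_2) \log n)$ bits—well within the stated budget when $n_1 \geq g^2$.
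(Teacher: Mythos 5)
Your case structure (reduce to the canonical configuration $n_1=c$, then split on which cell realizes $n_2$) matches the paper, and in the ``easy'' sub-cases $n_2\in\{a-c,\,b-c\}$ your sample-and-count estimator from Alice's or Bob's $1$-positions is essentially the paper's Cases~1 and~2 applied through the sampling lemma. You also correctly flag $n_2 = n-a-b+c$ as the hard sub-case. The gap is in how you handle it.

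Your proposed fix for the hard case---sample from the $0$-positions of whichever party has fewer zeros, or more generally ``choose among the four natural sampling sources (Alice/Bob crossed with $1$'s/$0$'s)''---does not achieve the claimed sample count. In the canonical configuration with $n_2 = n-a-b+c$, all four sources have sizes $a = n_1 + (a-c)$, $n-a = n_2 + (b-c)$, $b = n_1 + (b-c)$, $n-b = n_2 + (a-c)$, and the off-diagonal cells $a-c,\,b-c$ are each $\geq n_2$ but can be arbitrarily larger (e.g.\ $n_1=1$, $n_2=2$, $a-c=b-c=\Theta(n)$). Sampling from a source $S$ of size $|S|$ to estimate a cell of size $\approx n_1$ with gap $g$ requires $\Theta(n_1|S|/g^2)$ samples, so every one of the four single-party sources can force $\omega(n_1n_2/g^2)$ samples. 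Your fallback---exchanging the whole agreement set $(x\wedge y)\cup(\overline x\wedge\overline y)$ of size $n_1+n_2$---is also not within budget: whenever $g$ is close to its maximum $n_1$ (note $g\le c=n_1$ is forced by $c-g\ge 0$), one has $n_1n_2/g^2 \approx n_2/n_1 \ll n_1+n_2$, and indeed your own parenthetical (``well within the stated budget when $n_1\ge g^2$'') restricts the fallback to the regime it was not needed for.

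What the paper does instead, and what is missing from your proposal, is to sample \emph{uniformly from the agreement set itself} rather than from any single-party source. The agreement set is exactly $\{i: x_i = y_i\}$, which equals $\{i: \overline{x}_i \neq y_i\}$; since $|\overline x| = n-a \neq b = |y|$ when $a+b<n$ (the boundary case $a+b=n$ is handled by padding a zero), the Brodal--Husfeldt binary-search protocol for locating a disagreement index---randomized by a public random permutation (the paper's Fact~\ref{fact:sample_oplus})---lets the two parties draw a uniform element of this set at $O(\log n)$ bits per draw. Feeding that sampler into the sampling lemma with source size exactly $n_1+n_2$, threshold $\beta = O(c/(n_1+n_2))$, and gap $\epsilon = \Theta(g/(n_1+n_2))$ gives $\beta/\epsilon^2 = O(n_1 n_2/g^2)$ samples, which is the bound you need. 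Without this joint-sampling primitive, no choice among Alice's or Bob's $1$'s or $0$'s gives a source of size $O(n_1+n_2)$, so the hard sub-case cannot be closed by your current plan. (As a minor separate point, you both fold a $\log\log n$ into $T$ and do $\Theta(\log\log n)$ median repetitions, which overcounts to $(\log\log n)^2$; one should set the per-run sample count for constant error and let the $\log\log n$ repetitions alone supply the $1/(6\log n)$ failure bound.)
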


\begin{restatable}[Quantum Upper Bound]{lemma}{QuanUpperSetInc}
\label{lemma:quantum_upper_bound_setinc}
Fix $n,a,b\in{\mathbb{Z}^+}$. Consider $c,g$ such that $c+g,c-g\in{\mathbb{N}}$. Let $n_1\coloneqq \min\{[a-c,c,b-c,n-a-b+c]\}$ and $n_2\coloneqq \min\left(\{[a-c,c,b-c,n-a-b+c]\}\setminus \inbrace{n_1}\right)$. For any input $x,y\in \B^n$ of $\mathrm{SetInc}_{a, b, c, g}^{n}$, there exists a quantum communication protocol without prior entanglement that computes $\mathrm{SetInc}_{a, b, c, g}^{n}(x,y)$ using $O\inparen{\frac{\sqrt{n_1n_2}}{g} \log n \log \log n}$ qubits of communication with success probability at least  $1- 1/(6\log n)$.
\end{restatable}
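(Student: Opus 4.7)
The plan is to ``quantize'' the randomized protocol of \Cref{le:extiXandY} by replacing its sampling step with quantum amplitude estimation (or the Hamoudi--Magniez quantum Chebyshev mean-estimation procedure), which yields the expected quadratic improvement in query complexity from $O(n_1 n_2/g^2)$ to $O(\sqrt{n_1 n_2}/g)$.

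First, I would unpack the classical protocol behind \Cref{le:extiXandY}. Its structure should be that, under the promise $|x|=a,|y|=b$, the parties jointly estimate $|x\land y|$ by drawing $T = O(n_1 n_2/g^2)$ samples from a carefully chosen distribution over $[n]$ (biased toward positions where the smaller of the four parameters $\{a-c,c,b-c,n-a-b+c\}$ lives, so that the resulting estimator of $|x\land y|$ has variance $O(n_1 n_2)$). Each sample is exchanged using $O(\log n)$ bits: an index plus the corresponding bit. The parties then decide between $|x\land y|\le c-g$ and $|x\land y|\ge c+g$ by thresholding the scaled empirical count; Chebyshev's inequality gives success probability $1-1/(6\log n)$ after an additional $O(\log\log n)$-factor repetition.

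Second, I would recast this sampling step as estimating the expectation of an indicator. Define a joint oracle $O_{x,y}$ that, on input $i\in[n]$, marks whether $i$ lies in the ``target'' coordinate set used by the classical estimator (e.g.\ positions where $x_i=y_i=1$, restricted to Alice's $1$-indices, etc.). A single coherent call to $O_{x,y}$ can be simulated by a constant number of rounds of $O(\log n)$-qubit communication: Alice sends the index register, Bob applies his part of the oracle in superposition (writing $y_i$ into an auxiliary qubit), returns the register, Alice applies her part, and the two parties uncompute the auxiliary so that the overall transformation is unitary and leaves no garbage on either side. No prior entanglement is required, because every superposition is prepared locally by Alice before transmission.

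Third, I would apply the amplitude-estimation algorithm of Brassard--H\o{}yer--Mosca--Tapp (or the quantum Chebyshev-style mean estimator of Hamoudi--Magniez) with this oracle to estimate the relevant probability to additive accuracy $\Theta(g/\sqrt{n_1 n_2})$, which is exactly the precision needed to separate the two promise cases. Standard analysis gives a query complexity of $O(\sqrt{n_1 n_2}/g)$ with constant success probability, so the communication cost becomes $O(\sqrt{n_1 n_2}/g \cdot \log n)$ qubits. A final $O(\log\log n)$ majority-vote amplification over independent runs boosts the success probability to $1-1/(6\log n)$ as required, producing the claimed bound of $O(\sqrt{n_1 n_2}/g \cdot \log n \log\log n)$.

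The main obstacle is the second step: certifying that the classical estimator — in particular, the non-uniform, parameter-dependent sampling distribution that yields the $n_1 n_2$ (rather than $n^2$) variance — can be realized as a reversible quantum oracle that the two parties can call coherently using only $O(\log n)$ qubits per call and \emph{without} pre-shared entanglement. Once the oracle is defined and implemented reversibly via standard uncomputation, the amplitude-estimation black box handles the rest.
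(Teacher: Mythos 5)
Your high-level strategy matches the paper's: replace the classical sampling step in \Cref{le:extiXandY} with quantum amplitude estimation (the paper invokes exactly the Hamoudi--Magniez bound you mention, see \Cref{le:amp_est,le:quantum_sample}), and then amplify the success probability to $1-1/(6\log n)$ with an $O(\log\log n)$ overhead. The phase/marking oracle $I-2\Pi$ for $\Pi=\sum_{i\in S_{x\land y}}\ket{i}\bra{i}$ is also implemented essentially as you describe, with two $O(\log n)$-qubit rounds.

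However, there is a genuine gap, and it is precisely the one you flag at the end but treat as a side remark rather than an open hole. Your claim that ``every superposition is prepared locally by Alice before transmission'' is false for the sampling cases corresponding to $n_1=c$ and $n_2=n-a-b+c$ (Case~3 of \Cref{le:extiXandY}). There, the classical protocol samples uniformly from $S_{\overline{x}\oplus y}=\{i:x_i=y_i\}$, a set neither party can determine locally, so the state-preparation unitary $\mathcal{S}$ that amplitude estimation requires (creating $\tfrac{1}{\sqrt{|S_{\overline{x}\oplus y}|}}\sum_{i\in S_{\overline{x}\oplus y}}\ket{i}$) cannot be applied by Alice alone. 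Your proposal conflates the \emph{marking} oracle $O_{x,y}$, which is easy to distribute, with the \emph{state-preparation} unitary, which is the real difficulty. The paper resolves this by first spending $O(\log n)$ qubits to create a bipartite maximally entangled register $\tfrac{1}{\sqrt{N}}\sum_{f\in[N]}\ket{ff}$ over $N=\Theta(n_1n_2/g^2)$ random seeds (creating entanglement \emph{during} the protocol is permitted even though prior entanglement is not), and then coherently simulating the classical Brodal--Husfeldt binary-search sampler of \Cref{fact:sample_oplus} as a reversible map $\ket{ff}\ket{b}\mapsto\ket{ff}\ket{f(x,y)+b}$ using $O(\log n)$ qubits per call. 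Amplitude estimation is then run with projector $I\otimes\Pi$ on the resulting state $\ket{\phi_2}$, and an $O(g/m)$ approximation bound is shown relating $\bra{\phi_2}(I\otimes\Pi)\ket{\phi_2}$ to $|x\land y|/|\overline{x}\oplus y|$. Without this construction your sketch does not go through for the cases where the optimal sampling set depends jointly on $x$ and $y$.

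A minor secondary point: the additive accuracy you need is $\Theta(g/(n_1+n_2))$, not $\Theta(g/\sqrt{n_1n_2})$; the query count $O(\sqrt{n_1n_2}/g)$ still comes out right because \Cref{le:amp_est} gives $O(\sqrt{p}/\epsilon)$ calls with $p=\Theta(n_1/(n_1+n_2))$ rather than $O(1/\epsilon)$, but the statement as written misattributes where the $\sqrt{n_1n_2}$ comes from.
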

The proof of \Cref{le:extiXandY} is given at the end of this section, and we postpone the proof of \Cref{lemma:quantum_upper_bound_setinc} to \Cref{appendix_quantum_speedup}.

Now we explain how to derive \Cref{th:upper} from the lemmas above.

\begin{proof}[Proof of \Cref{th:upper}]
We first present a randomized protocol to compute $f$ based on binary search:
\begin{enumerate}
    \item Alice and Bob exchange $a\coloneqq |x|, b\coloneqq |y|$.
    \item Alice and Bob both derive $f_{a,b}$ such that $f_{a,b}(|x \land y|) = f(x,y)$.
    \item Let $\mathscr{J}\left(f_{a, b}\right)=\{(c_i,g_i)\}_{i\in[k]}$ for some $k\leq n$ be the set of jumps of $f_{a,b}$ as in \Cref{def:jump}.
    \item Alice and Bob use binary search to determine $i\in \{0,1,...,k\}$ such that $\abs{x\land y}\in I_i$, where $I_i$ is defined in (\ref{eq:Ik}).
\end{enumerate}

We first discuss the communication complexity of the above protocol. The first step takes $O(\log n)$ bits of communication. The fourth step costs $O\inparen{m(f)^2 \log^2 n\log\log n}$ bits of communication: For each $i\in[k]$, Alice and Bob can determine whether $|x \land y| \le c_i-g_i$ or $|x \land y| \ge c_i+g_i$ by $O(m(f)^2 \log n\log\log n)$ communication cost with a success probability of at least $1-1/ \inparen{6\log n}$ by \Cref{le:extiXandY}. Since binary search takes at most $\lceil \log \inparen{k+1} \rceil = O\inparen{\log n}$ rounds, the total communication cost is $O\inparen{m(f)^2 \log^2 n\log\log n+\log n}$.

Now we argue for the correctness of the protocol. Notice that the set of jumps $\mathscr{J}\left(f_{a, b}\right)$ invokes $k+1$ intervals:
\begin{equation}\label{eq:Ik}
\begin{aligned}
I_0&\coloneqq [0,c_1-g_1],\\
I_1 &\coloneqq [c_1+g_1,c_2-g_2],\\
&\dots\\
I_{k-1} &\coloneqq [c_{k-1}+g_{k-1},c_k-g_k],\\
I_k &\coloneqq [c_k+g_k,n].
\end{aligned}
\end{equation}
\noindent In particular, the followings hold:
\begin{itemize}
    \item For every $j\in[0,k]$ and $z_1,z_2\in I_j$ such that $f_{a,b}(z_1)\neq *$ and $f_{a,b}(z_2)\neq *$, we have $f_{a,b}(z_1)=f_{a,b}(z_2)$.
    \item If $z\notin I_j$ for any $j\in[0,k]$, then $f_{a,b}(z)=*$.
\end{itemize}
Therefore, Alice and Bob start from $i = \lfloor (k+1)/2 \rfloor$ to determine whether $|x \land y| \le c_i-g_i$ or $|x \land y| \ge c_i+g_i$ with success probability of at least $1-1/ \inparen{6\log n}$. Depending on the result, they repeat the same process similar to binary search to find the interval that $|x \land y|$ falls in. After at most $\lceil \log \inparen{k+1} \rceil = O\inparen{\log n}$ repetitions, there is only one remaining interval and they can determine $f_{a,b}(|x \land y|)$. For $n \ge 2$, the failure probability of the above protocol is at most
\[
\begin{aligned}
1-\inparen{1-\frac{1}{6\log n}}^{\lceil \log \inparen{k+1} \rceil}
&\le \frac{\lceil \log \inparen{k+1} \rceil}{6 \log n} \\
&\le \frac{\lceil \log \inparen{n+1} \rceil}{3 \log n^2} \\
&\le \frac{1}{3}.
\end{aligned}
\]

For the quantum case, Alice and Bob use the same protocol above, but we invoke \Cref{lemma:quantum_upper_bound_setinc} to analyze the communication complexity. 

\end{proof}

\begin{proof}[Proof of \Cref{le:extiXandY}]
We rely on the following two claims to prove the lemma. 
\begin{fact}[{\cite[Lemma~30]{AA14}}]
\label{le:sample}
Fix $0 < \epsilon < \beta < 1$ such that $\beta+\epsilon \le 1$.
For a set $S$, suppose there is a subset $S'$ of $S$ such that $\frac{|S'|}{|S|} \le \beta-\epsilon$ or $\frac{|S'|}{|S|} \ge \beta+\epsilon$. Suppose we can sample from $S$ uniformly and ask whether the sample is in $S'$. Then we can decide whether $\frac{|S'|}{|S|} \le \beta-\epsilon$ or $\frac{|S'|}{|S|} \ge \beta+\epsilon$ by $O(\beta/\epsilon^2)$ samples, with success probability at least $2/3$.
\end{fact}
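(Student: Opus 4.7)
The plan is to apply a simple empirical-frequency estimator combined with Chebyshev's inequality, crucially exploiting the fact that under either hypothesis the quantity $p := |S'|/|S|$ is bounded above by $\beta + \epsilon < 2\beta$ (using $\epsilon < \beta$). This uniform upper bound on $p$ in turn bounds the variance of a single Bernoulli trial by $p(1-p) \le p \le 2\beta$, which is precisely what permits the sample size to scale with $\beta/\epsilon^2$ rather than the worst-case $1/\epsilon^2$ one would get by bounding the Bernoulli variance by $1/4$.

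First I would draw $n$ independent uniform samples $s_1, \dots, s_n$ from $S$ and set $X_i := \indicator\insquare{s_i \in S'}$, so that $\hat p := \frac{1}{n}\sum_{i=1}^n X_i$ is unbiased for $p$ with $\mathrm{Var}(\hat p) = p(1-p)/n \le 2\beta/n$. Applying Chebyshev's inequality then gives
\[
\Pr\insquare{\abs{\hat p - p} \ge \epsilon} \;\le\; \frac{\mathrm{Var}(\hat p)}{\epsilon^2} \;\le\; \frac{2\beta}{n\epsilon^2},
\]
which is at most $1/3$ once $n = \ceil{6\beta/\epsilon^2} = O(\beta/\epsilon^2)$.

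The decision rule is to output the hypothesis $|S'|/|S| \le \beta - \epsilon$ if $\hat p < \beta$, and the hypothesis $|S'|/|S| \ge \beta + \epsilon$ otherwise. On the good event $\abs{\hat p - p} < \epsilon$, which occurs with probability at least $2/3$, the estimate $\hat p$ necessarily lands on the correct side of the threshold $\beta$ in either case, so the rule is correct. The only subtle point to verify is the boundary configuration where $\beta + \epsilon = 1$ (and so possibly $p = 1$): in that case $p(1-p) = 0$ and the Chebyshev bound trivially holds, so no special treatment is needed. An alternative and slightly sharper route would be to invoke the multiplicative Chernoff bound $\Pr\insquare{\abs{\hat p - p} \ge \epsilon} \le 2\exp(-n\epsilon^2/(3p))$, which yields the same $O(\beta/\epsilon^2)$ asymptotics while allowing the success probability to be driven to $1 - \delta$ at an extra $\log(1/\delta)$ factor; however, Chebyshev already suffices for the constant-probability guarantee stated in the fact, so the main (and essentially only) obstacle is getting the constants in the variance-and-tail calculation right.
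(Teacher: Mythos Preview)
Your argument contains a genuine error in the very first step. You assert that ``under either hypothesis the quantity $p := |S'|/|S|$ is bounded above by $\beta + \epsilon < 2\beta$,'' but this is false for the second hypothesis: the promise is $p \ge \beta + \epsilon$, not $p \le \beta + \epsilon$, so $p$ may lie anywhere in $[\beta+\epsilon,1]$. Consequently the variance bound $p(1-p)\le 2\beta$ can fail badly (take for instance $\beta=0.01$, $\epsilon=0.005$, $p=1/2$, where $p(1-p)=1/4 \gg 2\beta$), and the Chebyshev step as written does not give an $O(\beta/\epsilon^2)$ sample bound. The same defect afflicts your Chernoff alternative $\Pr[|\hat p - p|\ge\epsilon]\le 2\exp(-n\epsilon^2/(3p))$: with $p$ possibly of constant order, this again only yields $n=O(1/\epsilon^2)$.

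The repair is to stop trying to estimate $p$ to additive accuracy $\epsilon$ and instead bound directly the probability that $\hat p$ lands on the wrong side of the threshold $\beta$. In the case $p\ge\beta+\epsilon$ the bad event is $\{\hat p<\beta\}\subseteq\{|\hat p-p|>p-\beta\}$, and Chebyshev gives
\[
\Pr\insquare{\hat p<\beta}\;\le\;\frac{p(1-p)}{n(p-\beta)^2}.
\]
One then checks that $g(p):=p(1-p)/(p-\beta)^2$ is decreasing on $[\beta+\epsilon,1]$ (differentiate, or just observe the numerator grows at most linearly while the denominator grows quadratically), so its maximum is attained at $p=\beta+\epsilon$, where $g(\beta+\epsilon)\le(\beta+\epsilon)/\epsilon^2\le 2\beta/\epsilon^2$. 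Combined with your (correct) treatment of the case $p\le\beta-\epsilon$, this yields the claimed $O(\beta/\epsilon^2)$ bound. Note, incidentally, that the paper does not supply its own proof of this fact but simply cites it from \cite{AA14}; any correct elementary argument along the lines above is acceptable.
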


\begin{restatable}{fact}{unisample}
\label{fact:sample_oplus}
Suppose $x,y \in \{0,1\}^n$ are the inputs of Alice and Bob such that $|x| \neq |y|$. Alice and Bob can sample an element from $S \coloneqq \inbrace{i:x_i \neq y_i}$ uniformly using $O(\log n)$ bits of communication.
\end{restatable}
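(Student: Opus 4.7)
The plan is rejection sampling via shared randomness. Alice and Bob jointly draw a uniformly random index $i \in [n]$ using $O(\log n)$ public random bits; Alice then transmits the bit $x_i$ to Bob, Bob replies with $y_i$, and they output $i$ if $x_i \neq y_i$, declaring the round a failure otherwise. (Equivalently, if one does not count public randomness as free, Alice can draw $i$ privately and send it to Bob using $\log n$ bits, with Bob returning $y_i$; either way each round costs $O(\log n)$ bits of communication.) Because the prior distribution of $i$ is uniform on $[n]$, the conditional distribution of $i$ given the accept event $\{x_i \neq y_i\}$ is exactly uniform on $S = \{i : x_i \neq y_i\}$, so each successful round produces a uniformly random element of $S$.

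The promise $|x|\neq|y|$ is exactly what guarantees the procedure is meaningful: one has $|S| \geq \bigl||x|-|y|\bigr| \geq 1$, so $S$ is nonempty, and the per-round success probability is $|S|/n \geq 1/n$. Since \Cref{fact:sample_oplus} is invoked as a subroutine inside the $\SetInc$ protocol of \Cref{le:extiXandY}, and that protocol calls \Cref{le:sample} with a prescribed number of samples, the outer protocol already budgets for an aggregated number of sampling attempts; we therefore only need to guarantee that each attempt costs $O(\log n)$ bits and yields a uniform element of $S$ conditional on acceptance, which the rejection sampling protocol above achieves.

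The main obstacle is to verify that the sampled index is genuinely uniform over $S$, despite the accept/reject test being driven by the private bits $x_i$ and $y_i$. The rejection-sampling viewpoint cleanly resolves this: since $i$ is drawn independently of $(x,y)$ and the acceptance event depends only on the values at that single position, conditioning on acceptance preserves uniformity when restricted to the subset $S$. A natural alternative would be to binary-search on a random permutation of $[n]$ using fingerprinting for prefix-equality tests, which would find the first disagreement in $O(\log^2 n)$ bits; the simpler rejection-sampling protocol trades that deterministic worst-case bound for the $O(\log n)$ per-sample cost that the application of \Cref{le:extiXandY} requires.
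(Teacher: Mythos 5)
Your rejection sampler is correct about uniformity---conditional on acceptance, $i$ is indeed uniform over $S$---but it does not meet the claimed communication bound, and this is a real gap, not a presentational one. The expected number of attempts before hitting $S$ is $n/|S|$, so a single sample from $S$ costs $\Theta((n/|S|)\log n)$ bits in expectation, and nothing in the hypotheses bounds $n/|S|$ beyond the trivial $|S|\geq 1$. Your defense---that the calling protocol ``budgets for an aggregated number of sampling attempts''---does not hold up: \Cref{le:extiXandY} invokes \Cref{le:sample} with $O(n_1 n_2/g^2)$ \emph{actual samples from $S$} and claims total cost $O\bigl(\tfrac{n_1 n_2}{g^2}\log n\log\log n\bigr)$, which is exactly (number of samples) $\times$ (cost per sample) and leaves no room for a multiplicative $n/|S|$ overhead. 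Concretely, in Case~3.1 of \Cref{le:extiXandY} the set in question is $S_{\overline{x}\oplus y}$ with $|S_{\overline{x}\oplus y}| = n-a-b+2|x\land y|$, which can be $\Theta(1)$ while $n$ is arbitrarily large (take $a,b$ close to $n/2$ and $|x\land y|$ near $0$). In that regime rejection sampling introduces a polynomial blowup and the proof of \Cref{th:upper} collapses.

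The paper's proof takes a genuinely different and cost-bounded route: apply a public-coin uniformly random permutation to the coordinates, then run the \emph{deterministic} $O(\log n)$-bit Brodal--Husfeldt protocol (\Cref{le:BH95}), which binary-searches on Hamming-weight discrepancies---crucially exploiting the promise $|x|\neq|y|$---to locate a witness $i$ with $x_i\neq y_i$ at a cost independent of $|S|$; the random permutation then makes the returned witness uniformly distributed over $S$. You briefly considered and dismissed a binary-search alternative as $O(\log^2 n)$, but that estimate applies to the fingerprinting prefix-equality search you had in mind, not to the Hamming-weight search the paper actually relies on. The key idea you are missing is that the search cost must be \emph{worst-case} $O(\log n)$ independent of $|S|$, and the way to get that is to locate a witness deterministically and inject randomness only through a coordinate permutation, rather than to randomize which coordinate to probe.
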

We refer interesting readers to \Cref{appendix_cc} for the proof of \Cref{fact:sample_oplus}. Now we prove the lemma by casing on the values of $n_1$ and $n_2$.
\begin{itemize}
\item \emph{Case 1: $n_1 = c$ and $ n_2 = a-c$.}
 According to \Cref{def:setinc}, we have either $\frac{|x \land y|}{|x|}\le \frac{c-g}{a}$  or $\frac{|x \land y|}{|x|} \ge \frac{c+g}{a}$. Alice and Bob estimate $\frac{|x \land y|}{|x|}$ as follows: Alice chooses an index $i$ such that $x_i = 1$ uniformly at random. Then Alice sends $i$ to Bob, and Bob checks whether $y_i = 1$. By \Cref{le:sample}, setting $\beta \coloneqq \frac{c}{a}$, $\epsilon \coloneqq \frac{g}{a}$, 
 Alice and Bob can decide whether $\frac{|x \land y|}{|x|}\le \frac{c-g}{a}$  or $\frac{|x \land y|}{|x|} \ge \frac{c+g}{a}$  with bounded error 
 using $O\inparen{\frac{ac}{g^2}} = O\inparen{\frac{n_1n_2}{g^2}} $ samples. Since $|x| = a$, using $O\inparen{\frac{n_1n_2}{g^2} \log \log n}$ samples, they can decide whether $|x \land y | \le c-g$  or $|x \land y | \ge c+g$  with success probability at least $1- 1/(6\log n)$ by error reduction. Thus, the communication complexity is $O\inparen{\frac{n_1n_2}{g^2} \log n\log \log n}$. 

\item \emph{Case 2: $n_1= a-c$ and $ n_2 = c$, or $n_1= a-c$ and $ n_2 = c$, or $n_1= c$ and $ n_2 =b-c$.} A similar argument as in Case 1 applies.

\item \emph{Case 3: $n_1 = c$ and $ n_2 = n-a-b+c$.} Since $n_1 \le n_2$, we have $a+b \le n$. 
Then we consider the following two cases:
\begin{enumerate}
    \item \label{a_b_less_n} \emph{Case 3.1: $a+b < n$.}
     Let $m \coloneqq n_1+n_2$, $p \coloneqq \frac{|x \land y|}{|\overline{x} \oplus y|}$.
Since 
\[
\begin{aligned}
|\overline{x} \oplus y| &= |x \land y| + |\overline{x} \land \overline{y}| \\
&= |x \land y|+ \inparen{n-\inparen{a+b-|x \land y|}} \\
&= n-(a+b)+2|x \land y|,
\end{aligned}
\]
we have
\[
p=
\frac{|x \land y|}{n-(a+b)+2|x\land y|} = \frac{1}{\frac{n-(a+b)}{|x \land y|}+2}.
\]
Notice that $p$ is an increasing function of $|x\land y|$. As a result, if $|x \land y| \le c-g$, then $p\le \frac{c-g}{m-2g}$; if $|x \land y| \ge c+g$, then $p\ge \frac{c+g}{m+2g}$. 
Let 
\begin{equation*}
\begin{aligned}
    \beta & \coloneqq  \frac{1}{2}\inparen{\frac{c+g}{m+2g}+ \frac{c-g}{m-2g}} \\
    &= \frac{cm-2g^2}{m^2-4g^2} \\
    &= O\inparen{\frac{c}{m}},\\
\end{aligned}
\end{equation*}
\begin{equation*}
    \begin{aligned}
\epsilon & \coloneqq \frac{1}{2}\inparen{\frac{c+g}{m+2g}- \frac{c-g}{m-2g}}  \\
&= \frac{gm}{m^2-4g^2}\\
&= \Omega\inparen{\frac{g}{m}}.
\end{aligned}
\end{equation*}
For any $x \in \B^n$, we let $S_x \coloneqq \inbrace{i:x_i = 1}$.
By \Cref{fact:sample_oplus}, Alice and Bob can sample $i$ from $S_{\overline{x} \oplus y}$ uniformly using $O(\log n)$ bits communication. 
Since $i \in S_{\overline{x} \oplus y}$, if $x_i = y_i = 1$, then $i \in S_{x \land y}$; if $x_i = y_i = 0$, then $i \notin S_{x\land y}$.
By \Cref{le:sample}, using $O\inparen{\frac{\beta}{\epsilon^2}} = O\inparen{\frac{mc}{g^2}} = O\inparen{\frac{n_1n_2}{g^2}}$ samples, Alice and Bob can decide whether $p \le \beta-\epsilon$ or $p \ge \beta + \epsilon$ with bounded error. Equivalently, Alice and Bob can distinguish $|x \land y | \le c-g$ from $|x \land y | \ge c+g$ with bounded error. By error reduction, using $O\inparen{\frac{n_1n_2}{g^2} \log \log n }$ samples, they can decide whether $|x \land y | \ge c-g$  or $|x \land y | \le c+g$  with success probability at least  $1- 1/(6\log n)$. Thus, the communication complexity is $O\inparen{\frac{n_1n_2}{g^2} \log n\log \log n}$. 

\item \emph{Case 3.2: $a+b = n$.} Alice and Bob generate new inputs $x' = x0$ and $y' = y0$ (pad a zero after the original input). Then we have 
\[
\mathrm{SetInc}_{a, b, c, g}^{n}(x,y) = \mathrm{SetInc}_{a, b, c, g}^{n+1}(x',y').
\]
Since $a+b < n+1$, Alice and Bob perform the protocol in Case 3.1 in the new inputs, and the complexity analysis is similar to Case 3.1.
\end{enumerate}
\item \emph{Case 4: $n_1= n-a-b+c $ and $ n_2 = c$, or $n_1= a-c$ and $ n_2 = b-c$, or $n_1= b-c$ and $ n_2 =a-c$.} A similar argument as in Case 3 works. 
\end{itemize}
\end{proof}

\section{Log-Rank Conjecture for Permutation-Invariant Functions}\label{sec:log_rank}
\Cref{th:logrank} states the Log-rank Conjecture for {permutation-invariant} functions. We argue for the lower bound (\Cref{log_rank_lower_bound}) and the upper bound (\Cref{log_rank_upper_bound}) separately. 

\begin{lemma}[Lower Bound]\label{log_rank_lower_bound}
Fix $n\in{\mathbb{Z}^+}$. Let $f:\B^n \times \B^n \rightarrow \inbrace{-1,1}$ be a non-trivial total permutation-invariant function. 
For every $a,b \in [n]$ such that $f_{a,b}$ is not a constant function, we have
\[
\log\rank(f) = \Omega\inparen{\max\inbrace{\log n, \min\inbrace{a,b,n-a,n-b}}},
\]
where $f_{a,b}$ satisfies $f_{a,b}(|x \land y|) = f(x,y)$ for $x,y \in \B^n$.
\end{lemma}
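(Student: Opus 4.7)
The plan is to prove the lower bounds $\log\rank(f)=\Omega(\log n)$ and $\log\rank(f)=\Omega(k)$ separately, where $k\defeq\min\{a,b,n-a,n-b\}$, by exhibiting high-rank submatrices of the communication matrix $M_f$. First, observe that replacing $x$ by $\bar x$ or $y$ by $\bar y$ only permutes rows or columns of $M_f$ (preserving rank), while sending $f$ to another permutation-invariant function whose corresponding $f_{a',b'}$ is still non-constant. So I may assume WLOG that $a\le b$ and $a+b\le n$, so that $k=a$. Since $f_{a,b}\colon\{0,\ldots,k\}\to\{-1,+1\}$ is non-constant, there exists a jump $c\in\{0,\ldots,k-1\}$ with $f_{a,b}(c)\ne f_{a,b}(c+1)$.

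For the $\Omega(\log n)$ bound I reduce to an Equality-type submatrix. Partition $[n]=A\sqcup B\sqcup C\sqcup I$ with $|A|=c$, $|B|=a-c-1$, $|C|=b-c-1$, $|I|=n-a-b+c+2$, and for $i,j\in I$ set $X_i\defeq A\cup B\cup\{i\}$ and $Y_j\defeq A\cup C\cup\{j\}$. A direct check gives $|X_i|=a$, $|Y_j|=b$, and $|X_i\cap Y_j|=c+\indicator[i=j]$. Hence this $|I|\times|I|$ submatrix has the form $\alpha J+\beta I$ with $\alpha\ne\beta$, so its rank is at least $|I|-1=n-a-b+c+1$. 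Applying the same construction to the transformed function $f'(x,y)\defeq f(x,\bar y)$ (which shares the rank of $f$, is permutation-invariant, and has a jump at $a-c-1$ in $f'_{a,n-b}$) gives $\rank(f)\ge b-c$. Averaging the two lower bounds yields $\rank(f)\ge(n-a+1)/2=\Omega(n)$ since $a\le n/2$, so $\log\rank(f)=\Omega(\log n)$.

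For the $\Omega(k)$ bound I reduce to a Johnson-scheme submatrix. Fix $E\subseteq[n]$ with $|E|=2k$ and disjoint $T_A,T_B\subseteq[n]\setminus E$ of sizes $a-k$ and $b-k$, which fit since $2k+(a-k)+(b-k)=a+b\le n$. For $u,v\in\binom{E}{k}$ set $X_u\defeq u\cup T_A$ and $Y_v\defeq v\cup T_B$, so $|X_u|=a$, $|Y_v|=b$, and $|X_u\cap Y_v|=|u\cap v|\in\{0,\ldots,k\}$. The resulting $\binom{2k}{k}\times\binom{2k}{k}$ submatrix is the Johnson-scheme matrix $M[u,v]=f_{a,b}(|u\cap v|)$. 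By the $S_{2k}$-equivariant decomposition (Young's rule and Schur's lemma), its rank equals $\sum_{i:\mu_i\ne 0}\bigl(\binom{2k}{i}-\binom{2k}{i-1}\bigr)$, where $\mu_i$ is the $i$-th coefficient of $f_{a,b}$ in the dual Hahn basis; in particular, $\mu_{\deg f_{a,b}}\ne 0$. Since $f_{a,b}^2-1$ vanishes on the $(k+1)$-point domain yet is a nonzero polynomial of degree $2\deg f_{a,b}$, we have $\deg f_{a,b}\ge\lceil(k+1)/2\rceil$; combined with $\binom{2k}{\lceil(k+1)/2\rceil}-\binom{2k}{\lceil(k+1)/2\rceil-1}=2^{\Omega(k)}$, this gives $\rank(f)\ge 2^{\Omega(k)}$, hence $\log\rank(f)=\Omega(k)$.

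The main obstacle is the Johnson-scheme rank formula, which ultimately rests on representation-theoretic facts about $S_n$; these are standard but require some care to state precisely. If the reduction to a Johnson-scheme matrix is deemed too abstract, one can instead carry out the rank computation directly by writing $f_{a,b}(t)=\sum_j c_j\binom{t}{j}$ and analyzing the linear independence of the matrices $\binom{|u\cap v|}{j}$, which comes down to the classical fact that the inclusion matrix of $j$-subsets into $k$-subsets of $[2k]$ has full row rank for $j\le k$.
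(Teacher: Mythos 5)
Your argument is correct, and for the $\Omega\inparen{\min\inbrace{a,b,n-a,n-b}}$ part it takes a genuinely different route from the paper. The paper first normalizes to $a\le b\le n/2$, then splits into three cases depending on where the jump $c$ sits relative to the thresholds $4a/7$ and $3a/5$: when the jump is ``central'' it invokes the Buhrman--de Wolf rank bound for $\mathsf{AND}$-symmetric matrices (\Cref{BW_rank_lower_bound}) on an auxiliary balanced instance, and when the jump is near either end it embeds $\DISJ_m^t$ or $\Equality_m^t$ as a submatrix and uses \Cref{disj_eq}. You instead embed a single Johnson-scheme matrix $M[u,v]=f_{a,b}(\abs{u\cap v})$ on $\binom{E}{k}\times\binom{E}{k}$ with $\abs{E}=2k$, lower-bound $\deg f_{a,b}\ge\ceil{(k+1)/2}$ via the ``$p^2-1$ vanishes at $k+1$ points'' trick, and then read off $\rank\ge\binom{2k}{d}-\binom{2k}{d-1}=2^{\Omega(k)}$ from the Bose--Mesner eigenspace decomposition. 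This is uniform in the jump's position and avoids the case analysis entirely, at the cost of invoking the Johnson-scheme rank formula as a black box (roughly the same algebraic-combinatorics machinery that underlies the Buhrman--de Wolf result the paper cites in its Case 1, so the reliance is comparable). One small caution: your statement ``its rank equals $\sum_{i:\mu_i\neq 0}\bigl(\binom{2k}{i}-\binom{2k}{i-1}\bigr)$'' and ``$\mu_{\deg f_{a,b}}\neq 0$'' are correct but should be justified by the triangular change of basis from $\binom{t}{j}$ to the primitive idempotents, with $\rank(W_j^{\mathsf T}W_j)=\binom{2k}{j}$ supplying the triangularity; as written this is the one place the proof leans on an unproven claim. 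For the $\Omega\inparen{\log n}$ part, your $\alpha J+\beta I$ submatrix is essentially the same object as the paper's $\DISJ_{n'}^1$ embedding, but your trick of also applying it to $f'(x,y)=f(x,\bar y)$ and averaging the two bounds $n-a-b+c+1$ and $b-c$ to get $\Omega(n)$ unconditionally is cleaner: the paper only proves the $\Omega(\log n)$ bound in the regime $a=o(\log n)$ (where $n'=\Omega(n)$ is immediate) and relies on the case split to cover large $a$, whereas your version needs no such split.
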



\begin{proof}

We rely on the following two claims to prove the lemma. Two claims show the lower bound on the rank of some special functions respectively.

\begin{fact}[\cite{BW01}, merging Corollary 6 with Lemma 4]\label{BW_rank_lower_bound}
Fix $n\in{\mathbb{Z}^+}$. Let $f:\B^n \times \B^n \rightarrow \inbrace{-1,1}$ be defined as $f(x,y) \coloneqq D(|x\land y|)$ for some predicate $D: \inbrace{0,1,...,n}\rightarrow \inbrace{-1,1}$. If $t$ is the smallest integer such that $D(t) \neq D(t-1)$, then $\log\rank(f) = \Omega\inparen{\log \inparen{\sum_{i=t}^n \binom{n}{i}}}$. 
\end{fact}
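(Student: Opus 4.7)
The plan is to follow the approach of Buhrman and de Wolf~\cite{BW01}, which proceeds in three conceptual steps.

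First, expand the predicate in the binomial basis: $D(k) = \sum_{s=0}^n c_s \binom{k}{s}$, where $c_s = \sum_{j=0}^s (-1)^{s-j}\binom{s}{j}D(j)$ are the forward-difference coefficients. Using the identity $\binom{|x \wedge y|}{s} = \sum_{|S|=s} \mathbbm{1}[S \subseteq x]\,\mathbbm{1}[S \subseteq y]$ one obtains the factorization $M_f = A\,C\,A^{\mathsf{T}}$, where $A \in \{0,1\}^{2^n \times 2^n}$ is the zeta matrix $A[x,S] = \mathbbm{1}[S \subseteq x]$ (invertible, since it is upper-triangular in the subset order with unit diagonal) and $C$ is the diagonal matrix with $C[S,S] = c_{|S|}$. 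Consequently $\rank(M_f) = \rank(C) = \sum_{s :\, c_s \neq 0}\binom{n}{s}$.

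Second, the hypothesis that $t$ is the smallest index with $D(t) \neq D(t-1)$ immediately gives $c_1 = \cdots = c_{t-1} = 0$ and $c_t = D(t) - D(0) \neq 0$, so $\rank(M_f) \geq \binom{n}{t}$. When $t \geq n/2$ the standard tail estimate $\sum_{i=t}^n \binom{n}{i} \leq (n-t+1)\binom{n}{t}$ immediately yields $\log \rank(f) = \Omega(\log \sum_{i=t}^n \binom{n}{i})$ in this regime.

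Third, when $t < n/2$ the target sum $\sum_{i=t}^n \binom{n}{i}$ can be as large as $\Theta(2^n)$, so the bound from $c_t$ alone is insufficient. Here the $\{-1,+1\}$-valuedness is essential: since $D$ is nonconstant and $D^2 \equiv 1$ on $\{0,\ldots,n\}$, the polynomial $D^2 - 1$ is not identically zero but has at least $n+1$ roots, forcing $\deg(D) \geq \lceil (n+1)/2 \rceil$. Hence $c_d \neq 0$ for some $d \geq \lceil(n+1)/2\rceil$. To conclude, one expands the identity $D^2 = 1$ in the binomial basis and matches coefficients on both sides; the resulting web of quadratic relations between the $c_s$'s (together with $c_t, c_d \neq 0$) propagates nonzeroness across the spectrum, giving $\sum_{s:\, c_s \neq 0}\binom{n}{s} = \Omega(2^n / \mathrm{poly}(n))$, which suffices since $\sum_{i=t}^n \binom{n}{i} \leq 2^n$.

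The main obstacle is this last step: upgrading the single high-degree nonzero coefficient $c_d$ (which may lie all the way at $d = n$, contributing only $\binom{n}{n} = 1$) into a lower bound on the total binomial weight of the support of $c$. The plan is to exploit the $D^2 = 1$ relations, arguing that if too few $c_s$ were nonzero near the middle of $\{0,\ldots,n\}$, then the product polynomial $D^2$ would be forced to disagree with the constant $1$ at some point of $\{0,\ldots,n\}$. This is precisely the combinatorial content of Corollary~6 of~\cite{BW01}, which, when merged with the rank formula of Lemma~4 obtained in the first step, yields the claimed bound.
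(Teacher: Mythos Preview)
The paper does not prove this statement; it is cited as a Fact from~\cite{BW01} with no accompanying argument, so there is no in-paper proof to compare against. Your proposal therefore already goes further than the paper by attempting a sketch.

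Your first two steps are correct and well executed: the factorization $M_f = A\,C\,A^{\mathsf T}$ with $A$ the invertible zeta matrix gives the exact formula $\rank(M_f)=\sum_{s:\,c_s\neq 0}\binom{n}{s}$ (this is Lemma~4 of~\cite{BW01}), and the forward-difference computation indeed yields $c_1=\cdots=c_{t-1}=0$, $c_t=D(t)-D(0)\neq 0$, so $\rank(M_f)\ge\binom{n}{t}$, which suffices once $t\ge n/2$.

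The third step, however, is a genuine gap. Your degree argument correctly shows $\deg(D)\ge\lceil(n+1)/2\rceil$ and hence $c_d\neq 0$ for $d=\deg(D)$, but the pair $(c_t,c_d)$ alone does not give the bound: when $t<n/4$ and $d>3n/4$ both $\binom{n}{t}$ and $\binom{n}{d}$ are $2^{o(n)}$, yet $\sum_{i=t}^n\binom{n}{i}=\Theta(2^n)$. You then assert that the relations coming from $D^2=1$ ``propagate nonzeroness across the spectrum'' to force $\sum_{s:\,c_s\neq 0}\binom{n}{s}=\Omega(2^n/\mathrm{poly}(n))$, but you do not carry this out, and nonzeroness certainly does not propagate to every index (e.g.\ for $n=4$ and $D=(1,-1,-1,1,\pm 1)$ one has $c_3=0$), so a real argument is needed to control the binomial weight of the zero set. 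Ultimately you close the sketch by invoking ``Corollary~6 of~\cite{BW01}''---but since the Fact is stated precisely as the merger of that corollary with Lemma~4, this is circular. If you want a self-contained proof, the missing ingredient is an honest lower bound on $\sum_{s:\,c_s\neq 0}\binom{n}{s}$ for $\pm 1$-valued predicates with first jump below $n/2$, and your outline does not supply one.
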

\begin{restatable}{fact}{DISJEQ}
\label{disj_eq}
Fix $n\in\mathbb{Z}^+$. 
     Let $\mathcal{X},\mathcal{Y} \coloneqq \inbrace{x\in \B^{n}: |x| = k}$, where $k \le n/2$. Let $\DISJ_n^k:\mathcal{X} \times \mathcal{Y} \rightarrow \inbrace{-1,1}$ and $\Equality_n^k:\mathcal{X} \times \mathcal{Y} \rightarrow \inbrace{-1,1}$ be defined 
    as
\[
\begin{aligned}
\DISJ_n^k(x,y) &\coloneqq
\begin{cases}
-1& \text{ if }|x \land y|  = 0,\\
1& \text{ if } |x \land y| \neq 0,
\end{cases}\\
\Equality_n^k(x,y) &\coloneqq
\begin{cases}
-1& \text{ if }x = y,\\
1& \text{ if } x \neq y.
\end{cases}
\end{aligned}
\]
Then $\rank\inparen{\DISJ_n^k} \ge  \binom{n}{k}-1$ and $\rank\inparen{\Equality_n^k} \ge  \binom{n}{k}-1$.
\end{restatable}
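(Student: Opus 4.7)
The plan is to handle both inequalities by writing each sign matrix as $J - 2N$, where $J$ is the all-ones matrix and $N$ is a $0/1$ indicator matrix, and then applying the subadditivity of rank ($\rank(A+B)\le \rank(A)+\rank(B)$) to get $\rank(J-2N)\ge \rank(2N)-\rank(J)=\rank(N)-1$. The whole problem therefore reduces to showing that the corresponding indicator matrix $N$ has rank equal to $\binom{n}{k}$.

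For $\Equality_n^k$, the matrix $M$ satisfies $M_{x,y}=-1$ if $x=y$ and $M_{x,y}=1$ otherwise, so $M=J-2I$, where $I$ is the $\binom{n}{k}\times\binom{n}{k}$ identity. Since $\rank(I)=\binom{n}{k}$ and $\rank(J)=1$, the subadditivity argument immediately gives $\rank(\Equality_n^k)\ge \binom{n}{k}-1$.

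For $\DISJ_n^k$, write $M_{x,y}=-1$ if $x\cap y=\emptyset$ and $M_{x,y}=1$ otherwise, so that $M=J-2D$ where $D_{x,y}=1$ iff $x\cap y=\emptyset$, viewing $x,y$ as $k$-subsets of $[n]$. The key step is to show $\rank(D)=\binom{n}{k}$ when $k\le n/2$. To do this, note that $x\cap y=\emptyset$ is equivalent to $y\subseteq [n]\setminus x$, and $[n]\setminus x$ is an $(n-k)$-subset. Thus $D = P\cdot W$, where $P$ is the $\binom{n}{k}\times \binom{n}{n-k}$ permutation matrix induced by complementation in $[n]$, and $W$ is the inclusion matrix whose rows are indexed by $(n-k)$-subsets, whose columns are indexed by $k$-subsets, with a $1$ entry whenever the $k$-subset is contained in the $(n-k)$-subset. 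By Gottlieb's classical theorem on inclusion matrices, $W$ has full column rank $\binom{n}{k}$ whenever $k\le n-k$ and $k+(n-k)\le n$, both of which are satisfied for $k\le n/2$. Hence $\rank(D)=\binom{n}{k}$ and $\rank(\DISJ_n^k)\ge \rank(2D)-\rank(J)=\binom{n}{k}-1$.

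The only nontrivial step is the full-rank claim for the set-disjointness indicator matrix; the equality case and the subadditivity reductions are elementary. I would cite Gottlieb's theorem (or, equivalently, the standard exterior-algebra / polynomial-method proof) directly rather than re-deriving it, since it is a well-known ingredient in extremal set theory and has previously been used in communication-complexity rank arguments.
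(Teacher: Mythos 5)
Your proof is correct and takes essentially the same route as the paper: both reduce the sign-matrix rank bound to the full rank of the corresponding $0/1$ indicator matrix via subadditivity of rank, and treat the equality case as the identity matrix. The only difference is in how the disjointness full-rank claim is sourced---the paper cites Kushilevitz--Nisan, Example~2.12 directly, while you factor the disjointness indicator through an inclusion matrix via complementation and invoke Gottlieb's theorem, which is the standard argument underlying that example.
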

We refer interesting readers to \Cref{rank} for the proof of \Cref{disj_eq}. Now we prove the lemma by casing on the values of $a$ and $b$.

We can assume $a \le b \le n/2$ without loss of generality because the cases where $a > n/2$ or $b > n/2$ can be obtained by flipping each bit of Alice or Bob's input. Thus, it suffices to prove $\log \rank(f) = \Omega\inparen{\max\inbrace{\log n, a}}$.

We prove the following two claims that directly lead to our result:
\begin{enumerate}
\item\label{item:logrank-lower-bound-item1} If $a \le b \le n/2$ and $a = o\inparen{\log n}$, then $\log \rank(f) = \Omega\inparen{\log n}$. 
\item\label{item:logrank-lower-bound-item2} If $a \le b \le n/2$ and $a = \Omega\inparen{\log n}$, then
    $\log \rank(f) = \Omega\inparen{a}$. \label{prop_lowerbound2}
\end{enumerate}

We first prove \Cref{item:logrank-lower-bound-item1}. Suppose $a \le b \le n/2$ and $a = o\inparen{\log n}$. Since $f_{a,b}$ is not a constant function, there exists $c \in [0,a)$ such that $f_{a,b}(c) \neq f_{a,b}(c+1)$. Without loss of generality, we assume $f_{a,b}(c) = -1$.
Let $n' \coloneqq n-(a+b-c-2)$.
Since $b \le n/2$ and $c \le a = o(\log n)$, $n' = n-(a+b-c-2) = \Omega\inparen{n}$. Let $\mathcal{X}$ and $ \mathcal{Y} $ be the set $ \inbrace{x \in \B^{n'}: |x| = 1}$.
For any $x\in\mathcal{X},y\in\mathcal{Y}$,
\[
\DISJ_{n'}^1(x,y) = f_{a,b}(|x\land y|+c) =  f(x',y'),
\]
where
\[
    x' \coloneqq x\underbrace{1\cdots 1}_{c}\underbrace{1\cdots 1}_{a-c-1}\underbrace{0\cdots 0}_{b-c-1} \text{ and }
    y' \coloneqq y\underbrace{1\cdots 1}_{c}\underbrace{0\cdots 0}_{a-c-1}\underbrace{1\cdots 1}_{b-c-1}.
\]
Thus, $\DISJ_{n'}^1$ is a submatrix of $f$.  By \Cref{disj_eq}, we have
\[
\log\rank(f) \ge \log\rank(\DISJ_{n'}^1)  \ge  \log \inparen{n'-1} = \Omega\inparen{\log n}.
\]
   
Now we prove \Cref{item:logrank-lower-bound-item2}. Suppose $a,b \le n/2$ and $\min\inbrace{a,b} = \Omega\inparen{\log n}$,
we consider the following three cases:
    \begin{itemize}
        \item \emph{Case 1: There exists $c \in [4a/7,3a/5)$ such that $f_{a,b}(c) \neq f_{a,b}(c+1)$. } 
        Let $k = \lfloor a/2 \rfloor$ and $k' = \lceil a/2 \rceil$.
        Let $g:\B^k\times \B^k \rightarrow \inbrace{-1,1}$ be such that
    $g(x,y) = f_{a,b}(|x' \land y'|)$
    for every $x,y\in \B^k$, where 
    \[
    \begin{aligned}
    x' &\coloneqq x  \overline{x} \underbrace{0\cdots 0}_k \underbrace{1\cdots 1}_{k'}\underbrace{0\cdots 0}_{b-a}\underbrace{0\cdots 0}_{n-b-2k} \\
    y' &\coloneqq y \underbrace{0\cdots 0}_k \overline{y} \underbrace{1\cdots 1}_{k'}\underbrace{1\cdots 1}_{b-a}\underbrace{0\cdots 0}_{n-b-2k}.
    \end{aligned}
    \]
        Observe that $x',y'\in\B^n$ and $|x'|= a,|y'| =b$.
    Moreover, $g(x,y) = D(|x \land y|)$ for predicate $D: \inbrace{0,1,...,k}\rightarrow \inbrace{-1,1}$ such that $D(z) = f_{a,b}(z+k')$ for every $z \in [0,k]$. Thus, we have $D(c-k') \neq D(c-k'+1)$.
    By \Cref{BW_rank_lower_bound}, we have 
    \[
    \log \rank (g) = \Omega\inparen{\log \inparen{\sum_{i = c-k'+1}^k \binom{k}{i}}}.
    \]
    Since 
    $c-k'+1 < 3a/5-\lceil a/2 \rceil+1 
    \le a/10
    \le k/2$,
    we conclude $\log \rank(g) = \Omega(k) = \Omega(a)$.

    \item \emph{Case 2: There exists $c \in [0,4a/7)$ such that $f_{a,b}(c) \neq f_{a,b}(c+1)$ and $f_{a,b}$ is a constant function in the range $[c,3a/5)$.}  Without loss of generality, we assume $f_{a,b}(c) = -1$.
    Let $l \coloneqq \lfloor 3a/5 \rfloor$, $l' \coloneqq \lceil 2a/5 \rfloor$, $m \coloneqq n-(c+b-a+2l')$. 
Since $a \le b \le n/2$ and $c < 4a/7$, we have 
\[
m = n-(c+b-a+2l') 
\ge 2a - 2l'-c 
= 2l-c 
\ge 2(l-c).
\]
Let $\mathcal{X} $ and $ \mathcal{Y} $ be the set $ \inbrace{x\in \B^{m}: |x| = l-c}$.
For every $x \in \mathcal{X}, y \in \mathcal{Y}$, we have
    \[
    \DISJ_m^{l-c}(x,y) = f_{a,b}(|x' \land y'|) = f(x',y'),
    \]
    where
    \[
    \begin{aligned}
    x' &\coloneqq x\underbrace{1\cdots 1}_c \underbrace{0\cdots 0}_{b-a} \underbrace{0\cdots 0}_{l'}\underbrace{1\cdots 1}_{l'},\\
        y' &\coloneqq y\underbrace{1\cdots 1}_c  \underbrace{1\cdots 1}_{b-a
        }\underbrace{1\cdots 1}_{l'}\underbrace{0\cdots 0}_{l'}.
    \end{aligned}
    \]
    Thus, $\DISJ_m^{l-c}$ is a submatrix of $f$. 
By \Cref{disj_eq},
we have 
\[
\begin{aligned}
\log\rank\inparen{f} &\ge \log\rank(\DISJ_m^{l-c})\\
&=  \Omega\inparen{\log \binom{m}{l-c}} \\
&= \Omega\inparen{l-c} \\
&= \Omega\inparen{a}.
\end{aligned}
\]

    \item \emph{Case 3: There exists $c \in [3a/5,a)$ such that $f_{a,b}(c) \neq f_{a,b}(c+1)$ and $f_{a,b}$ is a constant function in the range $[0,c)$.}  Without loss of generality, we assume $f_{a,b}(c) = -1$. 
Since $a \le b \le \frac{n}{2}$, we have $n-b+c \ge a+c \ge 2c$.
Let $\mathcal{X} $ and $ \mathcal{Y} $ be the set $ \inbrace{x\in \B^{n-b+c}: |x| = c}$.
For every $x \in \mathcal{X}, y \in \mathcal{Y}$, we have
    \[
    \Equality_{n-b+c}^c(x,y) = f_{a,b}(|x' \land y'|) = f(x',y'),
    \]
    where
    \[
    x' \coloneqq x\underbrace{0\cdots 0}_{b-a} \underbrace{0\cdots 0}_{a-c} \text{ and } 
        y' \coloneqq y\underbrace{1\cdots 1}_{b-a}  \underbrace{0\cdots 0}_{a-c
        }.
    \]
    Thus, $\Equality_{n-b+c}^c$ is a submatrix of $f$. By \Cref{disj_eq}, we have 
\[
\begin{aligned}
\log\rank\inparen{f} &\ge \log\rank(\Equality_{n-b+c}^c)\\ 
&= \Omega\inparen{\log \binom{n-b+c}{c}} \\
&= \Omega\inparen{c} \\
&= \Omega\inparen{a}.
\end{aligned}
\]
\end{itemize}
\end{proof}

\begin{restatable}[Upper Bound]{lemma}{LogRankUpperBound}
\label{log_rank_upper_bound}
Fix $n\in{\mathbb{Z}^+}$. Let $f:\B^n \times \B^n \rightarrow \inbrace{-1,1}$ be a non-trivial total permutation-invariant function. 
Then $D(f)$ is
\[
O\inparen{\max_{a,b \in [n]:f_{a,b}\text{ is not constant}}\min\inbrace{a,b,n-a,n-b}\cdot \log n},
\]
where $f_{a,b}$ satisfies $f_{a,b}(|x \land y|) = f(x,y))$ for $x,y\in\B^n$. 
\end{restatable}
\begin{proof}
    We give the following deterministic protocol. 
    First, Alice and Bob exchange the values of $|x|,|y|$ by $O(\log n)$ bits. Suppose $|x| = a, |y| = b$.
    i) If $f_{a,b} = c$ for some constant number $c \in \inbrace{-1,1}$, then they output $c$ directly. ii) If $f_{a,b}$ is not a constant function, then they perform the following operations: if $\binom{n}{a} \le \binom{n}{b}$, Alice sends $x$ to Bob using $\log \binom{n}{a}$ bits, and then Bob outputs $f_{a,b}(x,y)$; otherwise, Bob sends $y$ to Alice using   $\log\binom{n}{b}$ bits, and then Alice outputs $f_{a,b}(x,y)$. 
    In total, the communication cost of the protocol is 
    \begin{equation}\label{eq:cost}
    \log n +\max_{a,b:f_{a,b}\text{ is not constant}}\min\inbrace{\log\binom{n}{a}, \log \binom{n}{b}}
    \end{equation}
    on the worst case. If $f_{a,b}$ is not a constant function, then $0 < a,b < n$, and thus 
    \begin{equation}\label{eq:log_ineq}
    \log n = \log \binom{n}{1} \le \min\inbrace{\log\binom{n}{a}, \log \binom{n}{b}}.
    \end{equation}
    If $a \le n/2$, then $\log \binom{n}{a} \le a \log n$; if $a > n/2$, then $\log \binom{n}{a} = \log \binom{n}{n-a} \le (n-a) \log n$. Thus,
    \begin{equation}\label{eq:log_min}
       \log\binom{n}{a} \le  \min\inbrace{a,n-a}\cdot\log n. 
    \end{equation}
    Combining (\ref{eq:cost}), (\ref{eq:log_ineq}), and (\ref{eq:log_min}), the  communication cost of the protocol is
 \[
 O\inparen{\max_{a,b:f_{a,b}\text{ is not constant}}\min\inbrace{a,b,n-a,n-b}\cdot \log n}.
 \]
\end{proof}

\section{Log-Approximate-Rank Conjecture for Permutation-Invariant Functions}\label{sec:log_arank}

We discuss \Cref{th:PI_log_arank}. In particular, we use the following two lemmas (proved in \Cref{appendix_log_arank}) to prove \Cref{th:PI_log_arank}. Additionally, we note that \Cref{lemma:rank_SecInc,lemma:lower_rank2} are variants of \Cref{le:SecInc,lemma:partial}.

\begin{restatable}{lemma}{rankSecInc}
\label{lemma:rank_SecInc}
Let $n,a,b,c,g\in\mathbb{Z}^+$. The following relations hold:
\begin{enumerate}
    \item $\arank\inparen{\mathrm{ESetInc}_{a, b, c, g}^{n}} \le \arank\inparen{\mathrm{ESetInc}_{a+\ell_1 + \ell_3, b + \ell_2 + \ell_3, c+\ell_3, g}^{n+\ell}}$ for $\ell_1, \ell_2, \ell_3 \ge 0$ such that $\ell_1 + \ell_2 + \ell_3 \le \ell$;
    \item $\arank\inparen{\mathrm{ESetInc}_{a, b, c, g}^{n}} = \arank\inparen{\mathrm{ESetInc}_{a, n-b, a-c, g}^{n}} = \arank\inparen{\mathrm{ESetInc}_{n-a, b, b-c, g}^{n}}$; and 
    \item $\arank\inparen{\mathrm{ESetInc}_{a, b, c, g}^{n}} \le \arank\inparen{\mathrm{ESetInc}_{ka, kb, kc, kg}^{kn}}$ for $k \ge 1$.
\end{enumerate}
\end{restatable}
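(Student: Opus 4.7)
The plan is to mirror the structure of \Cref{le:SecInc} but to replace its appeal to the reducibility of quantum protocols by the following monotonicity property of approximate rank. If $F$ is a partial matrix and one can exhibit injections $\phi_R, \phi_C$ on the row and column indices of a partial matrix $G$ such that $G(\phi_R(x), \phi_C(y)) = F(x,y)$ at every entry where $F$ is defined, then for any real matrix $A$ approximating $G$ with error $\epsilon$, the pullback $A'(x,y) \coloneqq A(\phi_R(x), \phi_C(y))$ is a submatrix of $A$ up to relabelling, approximates $F$ with error $\epsilon$, and satisfies $\rank(A') \le \rank(A)$, yielding $\arank(F) \le \arank(G)$. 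A variant also handles the case where $G(\phi_R(x), \phi_C(y)) = -F(x,y)$, since $\rank(-A) = \rank(A)$ implies $\arank(-F) = \arank(F)$.

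With this tool in place, each of the three items reduces to exhibiting the right embedding. For Part~1, I would set
\[
x' \;\coloneqq\; x\, 1^{\ell_1}\, 0^{\ell_2}\, 1^{\ell_3}\, 0^{\ell-\ell_1-\ell_2-\ell_3}, \qquad y' \;\coloneqq\; y\, 0^{\ell_1}\, 1^{\ell_2}\, 1^{\ell_3}\, 0^{\ell-\ell_1-\ell_2-\ell_3},
\]
which gives $|x'| = a+\ell_1+\ell_3$, $|y'| = b+\ell_2+\ell_3$, and $|x'\land y'| = |x\land y|+\ell_3$, so that $\mathsf{ESetInc}_{a,b,c,g}^n(x,y) = \mathsf{ESetInc}_{a+\ell_1+\ell_3,\, b+\ell_2+\ell_3,\, c+\ell_3,\, g}^{n+\ell}(x',y')$ at every defined entry; the maps $x\mapsto x'$ and $y\mapsto y'$ are injective, so the monotonicity principle applies. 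For Part~3, the duplication map $x \mapsto x^{\circ k}$ that repeats each bit $k$ times satisfies $|x^{\circ k}| = k|x|$ and $|x^{\circ k}\land y^{\circ k}| = k|x\land y|$, directly producing the required submatrix embedding.

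For Part~2, I would use bit-complementation $\phi_C(y) \coloneqq \overline{y}$ on Bob's side. Since $|x\land \overline{y}| = |x| - |x\land y|$, a pair $(x,\overline{y})$ with $|x|=a$, $|\overline{y}|=b$, $|x\land \overline{y}| = c \pm g$ corresponds bijectively to a pair $(x, y)$ with $|x|=a$, $|y|=n-b$, $|x\land y|=a-c \mp g$, and the two labels $\pm 1$ are swapped. Thus $\mathsf{ESetInc}_{a,n-b,a-c,g}^n$ and $\mathsf{ESetInc}_{a,b,c,g}^n$ are negatives of each other up to column relabelling, and equal approximate rank follows from the sign-flip variant of the monotonicity principle; the symmetric complementation on Alice's side gives the other equality. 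The only non-formulaic step is verifying the monotonicity principle for partial matrices from \Cref{def:arank}, which is routine; once this is in hand, the three inequalities reduce to the Hamming-weight identities displayed above, and I do not anticipate any further obstacle.
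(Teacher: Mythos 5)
Your proposal is correct and takes essentially the same approach as the paper: both proofs establish each inequality by exhibiting the same explicit bit-string embeddings (padding with fixed blocks for Part~1, bit-complementation for Part~2, $k$-fold duplication for Part~3) and invoking that a partial submatrix has no larger approximate rank. You are in fact slightly more careful than the paper on Part~2 --- the paper's intermediate identity $\mathsf{ESetInc}_{a,b,c,g}^n(x,y)=\mathsf{ESetInc}_{a,n-b,a-c,g}^n(x,\overline{y})$ has an unremarked sign flip (since $|x\land\overline{y}| = a-|x\land y|$ swaps the $c-g$ and $c+g$ branches), which you explicitly absorb via $\arank(-F)=\arank(F)$.
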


\begin{restatable}{lemma}{lowerrank}
\label{lemma:lower_rank2}
Fix $k\in\mathbb{Z}$. Let $l$ be a half-integer such that $0 < l\le k/2$. We have
\[
\log\inparen{\arank\inparen{\mathrm{ESetInc}^{4k}_{2k,k,l,1/2}}} = \Omega\inparen{\sqrt{kl}}.
\]
\end{restatable}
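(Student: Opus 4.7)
The plan is to mirror the proof of \Cref{lemma:partial} almost verbatim, replacing the quantum communication complexity lower bound in the pattern matrix method by its approximate-rank counterpart. Recall that Sherstov's pattern matrix method is not specific to quantum communication complexity: the same generalized discrepancy / dual-polynomial argument that yields $Q^*(A_f) = \Omega(\adeg_{\epsilon}(f))$ for a pattern matrix $A_f$ of a base function $f$ also yields $\log \arank_{\epsilon}(A_f) = \Omega(\adeg_{\epsilon}(f))$, since $\log \arank$ is itself lower bounded by the approximate $\gamma_2$-norm (up to constants). So I would start by writing down whatever explicit construction is used in the proof of \Cref{lemma:partial} that embeds an appropriate symmetric Boolean predicate into $\mathsf{ESetInc}^{4k}_{2k,k,l,1/2}$ as a (sub-)pattern matrix, and then reuse that construction here.

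Concretely, the predicate being embedded should be the symmetric Boolean function on $\Theta(k)$ bits whose value jumps once at Hamming weight $\Theta(l)$ away from the balanced point; by Paturi's theorem, such a predicate has $\epsilon$-approximate degree $\Theta(\sqrt{k \cdot l})$ for any constant $\epsilon \in (0,1)$. Let $f$ denote this symmetric base function. Then after verifying that the combinatorial pattern matrix $A_f$ embeds into (a submatrix of) the communication matrix of $\mathsf{ESetInc}^{4k}_{2k,k,l,1/2}$ with the correct parameters — the combinatorial check here is identical to the one done for \Cref{lemma:partial}, since the problem matrix does not depend on which complexity measure we are extracting — I apply the approximate-rank lower bound of the pattern matrix method to conclude
\[
\log \arank(\mathsf{ESetInc}^{4k}_{2k,k,l,1/2}) \;\ge\; \log \arank(A_f) \;=\; \Omega(\adeg(f)) \;=\; \Omega(\sqrt{kl}),
\]
where the first inequality uses that approximate rank is monotone under taking submatrices and the second uses Sherstov's theorem.

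The main obstacle is verifying that the approximate-rank version of the pattern matrix method is indeed available in the form I want, with the right $\epsilon$, and that taking a submatrix does not lose more than constant factors in $\arank_{\epsilon}$. For the first point, I would cite Sherstov's pattern matrix paper and, if needed, its follow-ups that phrase the method explicitly in terms of approximate $\gamma_2$ (which upper bounds $\log \arank$), so both quantum communication and $\log \arank$ fall out of the same dual-polynomial construction. For the second point, one has to be slightly careful: embedding $A_f$ as a submatrix of $\mathsf{ESetInc}^{4k}_{2k,k,l,1/2}$ means that any low-rank approximation of the latter restricts to a low-rank approximation of the former, which is exactly what is needed. Once these two points are in place, the arithmetic of the case analysis — the choices of parameters relating $k$, $l$, and the size of the base symmetric function — is the same as in the proof of \Cref{lemma:partial} and requires no new ideas.
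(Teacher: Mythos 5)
Your proposal is correct and takes essentially the same route as the paper: reuse the $(2k,k,f_{k,l})$-pattern matrix from the proof of \Cref{lemma:partial}, observe it is a submatrix of $\mathsf{ESetInc}^{4k}_{2k,k,l,1/2}$ so approximate rank can only drop, and invoke the approximate-rank version of the pattern matrix method together with Paturi's theorem. The paper formalizes the two points you flag as "obstacles" via \Cref{reduction} (error reduction for $\arank$) and \Cref{arank_lowerbound} (an approximate-trace-norm-based $\arank$ lower bound for partial pattern matrices), but the content is the same as your sketch.
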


\begin{proof}[Proof sketch of \Cref{th:PI_log_arank}]
We use a similar argument as in the proof of \Cref{th:qcc_lowbound}. Namely, for every $a,b \in [n]$ and jump $(c,g) \in \mathscr{J}(f_{a,b})$, let $n_1\coloneqq \min\{[a-c,c,b-c,n-a-b+c]\}$ and $n_2\coloneqq \min\left(\{[a-c,c,b-c,n-a-b+c]\}\setminus \inbrace{n_1}\right)$. We have
\[
\log \arank\inparen{\mathrm{ESetInc}_{a, b, c, g}^{n}} = \Omega\inparen{\frac{\sqrt{n_1n_2}}{g}}.
\]
Since $\mathrm{ESetInc}_{a, b, c, g}^{n}$ is a subfunction of $f$, we have 
\begin{equation*}
\log \arank\inparen{f} =  
\Omega\inparen{\max _{\substack{a, b \in[n] \\(c, g) \in \mathscr{J}\left(f_{a, b}\right)}} \frac{\sqrt{n_{1} n_{2}}}{g}} = \Omega\inparen{m(f)}.
\end{equation*}
Combining \Cref{th:upper} and the above equation, we have \Cref{th:PI_log_arank} as desired.
\end{proof}

\section{Conclusion}\label{sec:con}
This paper proves that the randomized communication complexity of permutation-invariant Boolean functions is at most quadratic of the quantum communication complexity (up to a polylogarithmic factor {of the input size}). Our results suggest that symmetries prevent exponential quantum speedups in communication complexity, extending the analogous research on query complexity. Furthermore, we prove that the Log-rank Conjecture and Log-approximate-rank Conjecture hold for non-trivial permutation-invariant Boolean functions (up to a polylogarithmic factor {of the input size}). There are some interesting problems to explore in the future.
\begin{itemize}
    \item \textit{Permutation invariance over higher alphabets}. In this paper, the permutation-invariant function is a binary function. The interesting question is to generalize our results to larger alphabets, i.e., to permutation-invariant functions of the form $f:\inbrace{0,1,...,m}^n \times \inbrace{0,1,...,m}^n \to \inbrace{-1,1,*}$ where $m \in \mathbb{N}$ and $m > 1$.
    \item \textit{Generalized permutation invariance}. It is possible to generalize our results for a larger class of symmetric functions. One candidate might be a class of functions that have graph-symmetric properties. 
    Suppose $\mathcal{G}_A, \mathcal{G}_B$ are two sets of $n$-vertices graphs, and $G_n$ is a group that acts on the edges of an $n$-vertices graph and permutes them in a way that corresponds to relabeling the vertices of the underlying graph. A function $f:\mathcal{G}_A \times \mathcal{G}_B \to \{-1,1,*\}$ is graph-symmetric if $f(x,y)=f(x\circ \pi, y\circ \pi)$, where $x \in \mathcal{G}_A, y\in\mathcal{G}_B$, and $\pi \in G_n$. 
\end{itemize}



\appendices

\section*{Appendices Organization}
The appendices are organized as follows. In Section \ref{extend_pre}, we give extended preliminaries. 
Moreover, the relation between the sections of appendices and the omitted proofs are given in \Cref{tab:appendix_organization}. 

\begin{table}[h]
\centering
\caption{The list of omitted proofs.}
\label{tab:appendix_organization}
\begin{tabular}{cl}
\toprule
\textbf{Section} & \textbf{Omitted Proofs} \\
\midrule
\ref{extend_pre} &  Facts \ref{fact:sample_oplus} and \ref{disj_eq} in Section \ref{sec:pre} \\
\ref{appendix_qrcc} &  \Cref{le:SecInc,lemma:partial,lemma:quantum_upper_bound_setinc} in Section \ref{sec:cc} \\
\ref{appendix_log_arank} &  \Cref{lemma:rank_SecInc,lemma:lower_rank2} in Section \ref{sec:log_arank} \\
\ref{appendix_ghd} & Communication complexity of Gap-Hamming-Distance \\
\bottomrule
\end{tabular}
\end{table}

\section{Extended Preliminaries}\label{extend_pre}

\subsection{Pattern Matrix Method and Approximate Degree}
Pattern matrix method~\cite{She11} is a well-known method for lower bound analysis in quantum communication complexity. Fix $k,n \in \mathbb{Z}$ and $k$ divides $n$. The set $[n]$ is partitioned into $k$ blocks, each consisting of $n/k$ elements. For the universal set $[n]$, let $\mathcal{V}(n,k)$ be a family of subsets that have exactly one element in each block. Clearly, $|\mathcal{V}(n,k)|=(n/k)^k$. Fix $x\in\B^n$ and $V\in\mathcal{V}(n,k)$, let the projection of $x$ onto $V$ be defined as $x|_V \coloneqq (x_{i_1},x_{i_2},\dots,x_{i_k}) \in \B^k$, where $i_1<i_2<\cdots< i_k$ are the elements of $V$. 
\begin{definition}[Pattern matrix~\cite{She11}]\label{def:pm}
   The $(n,k,f)$-pattern matrix $A$ for a function $f \colon \B^k\to \{-1,1,*\}$ is defined as 
\[ A \coloneqq \Big[f(x|_V\oplus
w)\Big]_{x\in\B^n,\,(V,w)\in\mathcal{V}(n,k)\times\B^k}  \;. \]
\end{definition}
Here, $A$ is a matrix of size $2^n$ by $(n/k)^k2^k$, where each row is indexed by strings $x\in\B^n$ and each column is indexed by pairs $(V,w)\in\mathcal{V}(n,k)\times\B^k$. The entries of $A$ are given by $A_{x,(V,w)}= f(x|_V\oplus w)$.

\begin{definition}[Approximate degree]
For $f:\B^n \rightarrow \inbrace{-1,1,*}$ and $0\leq \epsilon < 1$, we say a real multilinear polynomial $p$ approximates $f$ with error $\epsilon$ if:
\begin{enumerate}
\item[(1)]  $|p(x)-f(x)|\leq \epsilon$ for any $x\in \B^n$ such that $f(x) \neq *$;
\item[(2)]  $|p(x)|\leq 1$ for all $x\in\{0,1\}^n$.
\end{enumerate}
The approximate degree of $f$ with error $\epsilon$, denoted by $\widetilde{ \deg}_{\epsilon}(f)$, is the minimum degree among all real multilinear polynomials that approximate $f$ with error $\epsilon$. If $\epsilon = 2/3$, we abbreviate $\widetilde{ \deg}_{\epsilon}(f)$ as $\widetilde{\deg}(f)$. 
\end{definition}

\begin{fact}[\cite{Paturi92}, Theorem 4]\label{fact:paturi}
For symmetric Boolean functions $f:\B^n\rightarrow\inbrace{-1,1,*}$, let $D$ be a Boolean predicate such that $D(|x|) = f(x)$ for any $x \in \B^n$. Then $\adeg\inparen{f} = \Omega\inparen{\sqrt{n(n-\Gamma(D))}}$, where
\[
\begin{aligned}
\Gamma(D) &= \min\{|2k-n+1|:D(k),D(k+1)\neq *, \\
&D(k)\neq D(k+1)\text{ and }0 \le k \le n-1\}.
\end{aligned}
\]
\end{fact}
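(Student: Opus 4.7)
The proof plan is the classical Paturi approach: Minsky--Papert symmetrization reduces multivariate polynomial approximation of the symmetric function $f$ to univariate polynomial approximation of the predicate $D$, and Bernstein's inequality then lower-bounds the degree needed for a bounded polynomial to jump across the transition point of $D$.

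First I symmetrize. Let $p$ be an optimal approximating multilinear polynomial, so $d \coloneqq \deg p = \adeg(f)$, $|p(x)| \le 1$ on $\B^n$, and $|p(x) - f(x)| \le 2/3$ whenever $f(x) \ne *$. Define
\[
q(t) \;\coloneqq\; \mathbb{E}_{x \in \B^n,\, |x|=t}\,[p(x)].
\]
By the standard Minsky--Papert symmetrization lemma, $q$ is a real univariate polynomial of degree at most $d$, with $|q(t)| \le 1$ at every integer $t \in \{0,\ldots,n\}$ and $|q(t) - D(t)| \le 2/3$ whenever $D(t) \ne *$. Next, by the definition of $\Gamma(D)$, fix $k \in \{0,\ldots,n-1\}$ with $D(k), D(k+1) \in \{-1,1\}$, $D(k) \ne D(k+1)$, and $|2k-n+1| = \Gamma(D)$. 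Since $D(k)$ and $D(k+1)$ differ by $2$ in absolute value, the triangle inequality gives $|q(k+1) - q(k)| \ge 2 - 2 \cdot (2/3) = 2/3$.

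To conclude, I invoke Bernstein's inequality on $[0,n]$: any polynomial $r$ of degree $d$ with $\max_{x \in [0,n]} |r(x)| \le C$ satisfies $|r'(x)| \le Cd/\sqrt{x(n-x)}$ for $x \in (0,n)$. The extension of $|q| \le 1$ from the integer samples to a uniform $O(1)$ bound on all of $[0,n]$ is supplied by the Coppersmith--Rivlin (Ehlich--Zeller) inequality, valid in the regime $d = O(\sqrt{n})$. The mean value theorem on $[k, k+1]$ then gives some $\xi \in [k,k+1]$ with $|q'(\xi)| = |q(k+1) - q(k)| \ge 2/3$, so
\[
\frac{2}{3} \;\le\; \frac{Cd}{\sqrt{\xi(n-\xi)}} \;\le\; \frac{Cd}{\sqrt{k(n-k-1)}},
\]
yielding $d = \Omega(\sqrt{k(n-k)})$. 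A short computation shows $n - \Gamma(D) = 2\min(k+1,n-k) - 1$, hence $k(n-k) = \Theta(n(n-\Gamma(D)))$, giving the desired $\adeg(f) = \Omega(\sqrt{n(n-\Gamma(D))})$. The main obstacle is the regime $d \gg \sqrt{n}$, where Coppersmith--Rivlin does not extend the integer bound to all of $[0,n]$; one handles this by restricting attention to a subinterval of length $\Theta(d^2/n)$ centered at $k$ on which the integer samples alone yield a uniform $O(1)$ bound, and rerunning the Bernstein--MVT argument there, recovering the same lower bound up to absolute constants.
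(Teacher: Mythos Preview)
The paper does not prove this statement; it is stated as a known fact with a citation to Paturi~\cite{Paturi92} and invoked as a black box. There is therefore no ``paper's own proof'' to compare against. Your sketch is essentially Paturi's original argument (symmetrization followed by a Markov--Bernstein bound), and the overall strategy is correct.

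Two small technical points are worth tightening. First, your displayed inequality $\frac{Cd}{\sqrt{\xi(n-\xi)}} \le \frac{Cd}{\sqrt{k(n-k-1)}}$ goes the wrong way unless you are careful about where on $[k,k+1]$ the quantity $\xi(n-\xi)$ is minimized; the correct lower bound on $\sqrt{\xi(n-\xi)}$ over $\xi\in[k,k+1]$ is $\min\{\sqrt{k(n-k)},\sqrt{(k+1)(n-k-1)}\}$, which is still $\Theta(\sqrt{k(n-k)})$ away from the endpoints. Second, at the extreme $k=0$ (or $k=n-1$) your Bernstein step degenerates to a trivial bound, whereas the claimed conclusion is $\adeg(f)=\Omega(\sqrt{n})$; here one needs Markov's inequality (or the Ehlich--Zeller/Rivlin--Cheney lemma directly) rather than Bernstein. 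Paturi's original proof handles both regimes, and your final paragraph on the $d\gg\sqrt{n}$ case is in the right spirit but would need the explicit interval-restriction argument spelled out to be complete.
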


Each two-party communication problem $F: \mathcal{X} \times \mathcal{Y} \rightarrow \inbrace{-1,1,*}$ can be viewed as a matrix naturally according to the definition of $F$. Then the following fact gives a lower bound on the quantum communication complexity by the approximate degree.
\begin{fact}[\cite{BCG+21}, Theorem 6]
\label{fact:pmm}
Let $F$ be the $(n,t,f)$-pattern matrix, 
where $f\colon \B^t\to\{-1,+1,*\}$ is given.
Then for every $\epsilon\in[0,1)$ and every $\delta<\epsilon/2,$
\begin{align*}
Q_{\delta}(F) &\geq
\frac{1}{4}  \adeg_{\epsilon}(f)\log \left(\frac{n}{t}\right) - 
\frac{1}{2} \log\left(\frac{3}{\epsilon-2\delta}\right).
\end{align*}
\end{fact}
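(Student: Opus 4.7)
The plan is to prove Fact \ref{fact:pmm} via Sherstov's pattern matrix method, extended to partial functions as in \cite{BCG+21}. The strategy proceeds in three stages: bound $Q^*_\delta(F)$ from below by an analytic matrix norm, extract a dual witness for $\adeg_\epsilon(f)$ by LP duality, and lift that witness through the pattern-matrix construction to control the norm.

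First, I would invoke the standard reduction (Razborov / Linial--Shraibman, refined by Sherstov for partial matrices) that for any partial sign matrix $F$,
\[
Q^*_\delta(F) \;\geq\; \tfrac{1}{2}\log \gamma_2^{\alpha}(F) - O(1),
\]
where $\gamma_2^{\alpha}(F)$ is the approximate $\gamma_2$ factorization norm with slackness parameter $\alpha$, calibrated to $\epsilon$ and $\delta$. It thus suffices to lower bound $\log \gamma_2^{\alpha}$ of the pattern matrix. By LP duality for the approximate degree of a (possibly partial) Boolean function, if $\adeg_\epsilon(f) = d$ then there exists a dual witness $\psi \colon \B^t \to \Re$ with $\|\psi\|_1 \leq 1$, Fourier support concentrated on levels $\geq d$, and correlation at least $\epsilon$ with $f$ on $\dom(f)$ minus the absolute mass $\psi$ places outside $\dom(f)$.

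Second, I would lift $\psi$ to a signed matrix $\Psi$ on the rows and columns of the pattern matrix by setting $\Psi_{x,(V,w)} \propto \psi(x|_V \oplus w)$, up to a normalization that averages uniformly over $\mathcal{V}(n,t)$. A Fourier-analytic computation that exploits the block structure of $\mathcal{V}(n,t)$ shows that each non-vanishing Fourier coefficient of $\psi$ on a set of size $k$ contributes a factor of $(n/t)^{-k/2}$ to $\|\Psi\|$ (spectral norm) relative to $\|\Psi\|_1$. Since $\psi$ is supported on levels $\geq d$, this yields the leading term $\tfrac{1}{4}\adeg_\epsilon(f)\log(n/t)$ after applying the dual characterization of $\gamma_2^{\alpha}$ (namely, $\gamma_2^{\alpha}(F) \geq \langle F, \Psi \rangle / \|\Psi\|$ up to error-$\alpha$ corrections).

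The main technical obstacle is handling the partiality of $f$. Because $\psi$ may place absolute mass on $\B^t \setminus \dom(f)$, the lifted $\Psi$ has non-zero entries at positions where $F$ itself is undefined, and these must not spoil the generalized-discrepancy argument. Following \cite{BCG+21}, I would split $\Psi$ into an \emph{inside} piece (supported on defined entries of $F$) and an \emph{outside} piece, verify that the inside piece correlates with $F$ while the outside piece is absorbed into the approximation slack $\alpha$, and track how the tolerance $\alpha$ must shrink as $\delta$ approaches $\epsilon/2$. This tracking is the source of the additive $-\tfrac{1}{2}\log(3/(\epsilon-2\delta))$ correction, which blows up precisely when the protocol error $\delta$ approaches the dual-witness error $\epsilon/2$.
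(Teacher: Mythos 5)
The paper does not give a proof of this statement: \Cref{fact:pmm} is imported verbatim as Theorem~6 of \cite{BCG+21}, so there is no in-paper argument to compare against. What the paper \emph{does} prove is the analogous approximate-rank lower bound, \Cref{arank_lowerbound}, and that argument is organized around exactly the objects in your sketch: the lifted dual witness $\Psi$ from \Cref{Psi}, packaged with $\|\Psi\|_1=1$, the inside-minus-outside correlation inequality $\sum_{(x,y)\in \dom F}F_{x,y}\Psi_{x,y}-\sum_{(x,y)\notin \dom F}|\Psi_{x,y}| > \epsilon$, and the spectral bound $\|\Psi\|\le (t/n)^{d/2}\bigl(2^{n+t}(n/t)^t\bigr)^{-1/2}$. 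So your identification of the three moving parts --- a generalized-discrepancy reduction, a dual polynomial from LP duality for $\adeg_\epsilon(f)$, and the Sherstov lift with an inside/outside split to absorb the mass $\psi$ places off $\mathrm{dom}(f)$ --- is correct and is the route BCG+21 take. The $\frac14\adeg_\epsilon(f)\log(n/t)$ and the $-\frac12\log\bigl(3/(\epsilon-2\delta)\bigr)$ drop out of plugging these into $Q^*_\delta(F)\geq\frac12\log\bigl[\bigl(\langle F,\Psi\rangle_{\mathrm{dom}}-\text{outside}-2\delta\bigr)/\bigl(3\|\Psi\|\sqrt{\text{rows}\cdot\text{cols}}\bigr)\bigr]$, with the dimensional normalizations cancelling exactly.

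Two imprecisions are worth flagging if you were to flesh this out. First, ``each non-vanishing Fourier coefficient of $\psi$ on a set of size $k$ contributes a factor of $(n/t)^{-k/2}$'' misstates the pattern-matrix spectral-norm formula: the bound is a \emph{maximum}, $\|\Psi\|\propto\max_{S}|\hat\psi(S)|(t/n)^{|S|/2}$, not a product over coefficients. The conclusion is unaffected since $\hat\psi$ vanishes below level $d$ and $\|\psi\|_1\le 1$ forces $|\hat\psi(S)|\le 1$, but the sentence as written describes a different (and incorrect) multiplicative accounting. Second, the appeal to an approximate-$\gamma_2$ norm ``calibrated to $\epsilon$ and $\delta$'' is vaguer than needed; the cleaner and more standard path is the generalized-discrepancy inequality applied directly to a real representation of $F$, which makes the precise role of $\epsilon-2\delta$ transparent and avoids having to track the $\gamma_2^\alpha$ calibration as a separate quantity.
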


\subsection{Rank}\label{rank}
We first restate and prove Fact \ref{disj_eq} as follows:
\DISJEQ*
\begin{proof}
Let ${\DISJ'}_n^k:\mathcal{X} \times \mathcal{Y} \rightarrow \inbrace{0,1}$ and ${\Equality'}_n^k:\mathcal{X} \times \mathcal{Y} \rightarrow \inbrace{0,1}$ be defined 
    as
\[
{\DISJ'}_n^k(x,y) \coloneqq
\begin{cases}
1,& \text{ if }|x \land y|  = 0,\\
0,& \text{ if } |x \land y| \neq 0.
\end{cases}
\]
and
\[
{\Equality'}_n^k(x,y) \coloneqq
\begin{cases}
1,& \text{ if }x = y,\\
0,& \text{ if } x \neq y.
\end{cases}
\]
    By Example 2.12 in \cite{KN97}, $\rank\inparen{{\DISJ'}_n^k} = \binom{n}{k}$.
    Since $2{\DISJ'}_n^k = J-\DISJ$, where $J$ is the all-ones matrix, we have $\rank\inparen{{\DISJ'}_n^k} \le \rank\inparen{J}+\rank\inparen{\DISJ}$. Thus,
    $\rank\inparen{\DISJ}\ge \rank\inparen{{\DISJ'}_n^k}-\rank\inparen{J} = \binom{n}{k}-1$. 
    Since $\Equality'$ is an identity matrix, we have $\rank\inparen{{\Equality'}_n^k} = \binom{n}{k}$. Similar to $\DISJ$, we have $\rank\inparen{\Equality} \ge \rank\inparen{{\Equality'}_n^k}-1 = \binom{n}{k}-1$.
\end{proof}

\subsection{Approximate Rank}\label{appro_rank}
While some properties of approximate rank have been known for real matrices \cite{KS07, She11}, we generalize them to partial matrices. The proofs are adapted from the original proofs. 

First, utilizing Facts \ref{poly_const} and \ref{rank_poly}, we prove \Cref{reduction} to show the error reduction of approximate rank.
\begin{fact}[\cite{KS07}, Fact 1]\label{poly_const} 
    Let $0 < E < 1$ be given. Then for each interger $c \ge 1$, there exists a degree-$c$ real univariate polynomial $p(t)$ such that for any $1-E \le |t| \le 1+E$,
    \[
    |p(t)-\sign(t)| \le 8\sqrt{c}\inparen{1-\frac{(1-E)^2}{16}}^c,
    \]
    where
    \[
    \sign(t) \coloneqq
    \begin{cases}
        1,&\text{ if }t > 0,\\
        -1,&\text{ if }t < 0.
    \end{cases}
    \]
\end{fact}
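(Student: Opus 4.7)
The plan is to build a degree-$c$ polynomial via a Chernoff-style probabilistic construction, so that the exponential-decay rate in the claimed bound emerges directly from Hoeffding's inequality.

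For $t \in [-1, 1]$, let $\xi_1, \ldots, \xi_c$ be independent $\pm 1$-valued random variables with $\mathbb{E}[\xi_i] = t$, i.e., $\Pr[\xi_i = 1] = (1+t)/2$. Define
\[
p(t) \;\coloneqq\; \mathbb{E}\!\left[\sign\!\left(\sum_{i=1}^{c}\xi_i\right)\right]
\;=\; \sum_{k=0}^{c}\sign(2k-c)\binom{c}{k}\!\left(\tfrac{1+t}{2}\right)^{\!k}\!\left(\tfrac{1-t}{2}\right)^{\!c-k},
\]
using $\sign(0)=0$ to break ties when $c$ is even. The right-hand side is manifestly a polynomial of degree at most $c$ in $t$, with $p(\pm 1)=\pm 1$.

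For $|t|\in[1-E,\,1]$, the random variable $\sum_i\xi_i$ has mean $ct$, so Hoeffding's inequality gives $\Pr[\sign(\sum_i\xi_i)\neq\sign(t)]\le \exp(-ct^2/2)\le \exp(-c(1-E)^2/2)$, and hence $|p(t)-\sign(t)|\le 2\exp(-c(1-E)^2/2)$. A routine estimate bounds this by $(1-(1-E)^2/16)^c$ uniformly for $E\in(0,1)$, which sits comfortably inside the target. The symmetric argument handles $|t|\in[-1,-1+E]$ by the oddness of $p$.

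To extend the bound to $|t|\in[1,\,1+E]$, where the probabilistic interpretation breaks down (one of the ``probabilities'' becomes negative), I would exploit the algebraic structure of $p$. Expanding $p(1+\epsilon)=\sum_{k=0}^c\sign(2k-c)\binom{c}{k}(1+\epsilon/2)^k(-\epsilon/2)^{c-k}$, the leading behaviour near $t=1$ is governed by substantial cancellations among the top few terms, yielding $|p(1+\epsilon)-1|$ small relative to the target; together with control of $\|p\|_{\infty,[-1,1]}\le 1$ and a derivative bound via Markov's inequality, this handles the extension, and the $\sqrt{c}$ prefactor in the claim absorbs any polynomial-in-$c$ loss from the continuation argument.

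The main obstacle I anticipate is precisely this boundary-extension step: the probabilistic interpretation is clean on $[-1,1]$ but fails outside, so one must rely on explicit polynomial bookkeeping or on an analytic-continuation argument with Markov-type derivative estimates, and these have to be tuned so that the resulting polynomial-in-$c$ loss is no worse than $\sqrt{c}$. Matching the exact numerical constants ($16$ in the denominator, $8$ in the prefactor) is a secondary concern that I expect to be resolvable by choosing the Hoeffding estimate deliberately loose, but verifying that the $\sqrt{c}$ factor in particular suffices will require a careful analysis of $p$ near $t=\pm 1$ rather than a generic Markov bound alone.
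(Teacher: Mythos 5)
You should note first that the paper does not prove this fact: it is imported verbatim as Fact~1 of~\cite{KS07}, so there is no in-paper proof to compare against, only the external reference.

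Your probabilistic construction is the right kind of idea, but as written the boundary step you flag is a genuine gap, not a bookkeeping issue. With $\Pr[\xi_i=1]=\tfrac{1+t}{2}$, the quantity $p(t)=\mathbb{E}[\sign(\sum_i\xi_i)]$ is only a bona-fide expectation for $|t|\le 1$, and outside that interval the polynomial does not remain bounded: writing $p(1+\epsilon)-1=-2\sum_{k<c/2}\sign(2k-c)\binom{c}{k}\bigl(\tfrac{1+\epsilon/2}{1}\bigr)^k(-\epsilon/2)^{c-k}$, the dominant terms near $k\approx c/2$ have magnitude on the order of $\binom{c}{c/2}\bigl((1+\epsilon/2)(\epsilon/2)\bigr)^{c/2}$, which for $\epsilon$ close to $1$ grows like $(\sqrt{3})^{c}/\sqrt{c}$. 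Direct computation confirms this: at $t=2$ one gets $p=2,-1,5.75,-11.125$ for $c=2,4,6,8$, already oscillating with exponentially growing amplitude. No Markov-brothers derivative bound or analytic-continuation argument can rescue a polynomial that genuinely overshoots by an exponential factor, and the $8\sqrt{c}$ slack in the claim is nowhere near enough.

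The fix is a one-line rescaling that makes the probabilistic interpretation valid on all of $[-(1+E),1+E]$ from the start, so no extension step is needed. Take $\xi_1,\dots,\xi_c$ i.i.d.\ $\pm 1$ with $\Pr[\xi_i=1]=\tfrac12\bigl(1+\tfrac{t}{1+E}\bigr)$; these are genuine probabilities for every $|t|\le 1+E$, and $p(t)\coloneqq\mathbb{E}[\sign(\sum_i\xi_i)]$ is still a real polynomial of degree at most $c$. For $1-E\le|t|\le 1+E$ we have $|\mathbb{E}\xi_i|=|t|/(1+E)\ge(1-E)/(1+E)$, so Hoeffding gives
\[
\Pr\!\left[\sign\!\Bigl(\sum_i\xi_i\Bigr)\neq\sign(t)\right]\le\exp\!\left(-\frac{c\,t^2}{2(1+E)^2}\right)\le\exp\!\left(-\frac{c(1-E)^2}{8}\right),
\]
where we used $(1+E)^2\le 4$. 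Since $e^{-x}\le 1-x/2$ for $0\le x\le 1$ and $(1-E)^2/8<1$, this is at most $\bigl(1-\tfrac{(1-E)^2}{16}\bigr)^c$, hence $|p(t)-\sign(t)|\le 2\bigl(1-\tfrac{(1-E)^2}{16}\bigr)^c\le 8\sqrt{c}\bigl(1-\tfrac{(1-E)^2}{16}\bigr)^c$. This in fact proves a slightly stronger bound than the stated one, and the $16$ in the denominator arises naturally from the two lossy steps $(1+E)^2\le 4$ and $e^{-x}\le 1-x/2$. With this rescaling your argument is correct and complete.
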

\begin{fact}[\cite{LS09an}, Lemma 11]\label{rank_poly}
    Let $A$ be a real matrix, $p$ be a degree-$d$ polynomial and $B \coloneqq [p(A_{i,j})]_{i,j}$. Then $\rank\inparen{B} \le \inparen{d+1}\rank\inparen{A}^d$.
\end{fact}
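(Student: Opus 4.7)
The plan is to use two classical facts about rank: subadditivity of rank under sums, and multiplicativity of rank under entrywise (Hadamard) products. First, I would write $p(t) = \sum_{m=0}^d c_m t^m$ so that the matrix $B$ decomposes as
\[
B = \sum_{m=0}^d c_m \, A^{\odot m},
\]
where $A^{\odot m}$ denotes the Hadamard $m$-th power of $A$, with $A^{\odot 0}$ being the all-ones matrix. Subadditivity of rank then gives $\rank\inparen{B} \le \sum_{m=0}^d \rank\inparen{A^{\odot m}}$.

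Next, I would bound $\rank\inparen{A^{\odot m}}$ for each $m$. Let $r \coloneqq \rank\inparen{A}$ and take any rank decomposition $A = \sum_{i=1}^r u_i v_i^\top$. Then a direct expansion
\[
A^{\odot m} = \sum_{i_1,\dots,i_m=1}^r \inparen{u_{i_1}\odot\cdots\odot u_{i_m}}\inparen{v_{i_1}\odot\cdots\odot v_{i_m}}^\top
\]
exhibits $A^{\odot m}$ as a sum of $r^m$ rank-one matrices, so $\rank\inparen{A^{\odot m}} \le r^m$. Combining these two bounds yields
\[
\rank\inparen{B} \le \sum_{m=0}^d r^m \le \inparen{d+1}\, r^d,
\]
where the last inequality holds for any $r \ge 1$ since each term in the sum is at most $r^d$. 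The edge case $r=0$ is trivial: $A$ is the zero matrix, $B$ has constant entries $p(0)$, and the bound holds vacuously.

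There is essentially no hard step here; the content of the proof is the Hadamard-product rank bound, which is elementary. The only thing to double-check is the degenerate case $r = 0$ (handled above) and the trivial observation that $\rank\inparen{A^{\odot 0}} = 1 = r^0$ whenever $r \ge 1$. The argument matches the proof template given in the cited paper \cite{LS09an} and will plug directly into the error-reduction step for approximate rank that the subsequent material in the appendix appears to build toward.
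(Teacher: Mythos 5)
The paper does not prove this statement; it is quoted verbatim from \cite{LS09an} and used as a black box in the error-reduction argument, so there is no in-paper proof to compare against. Your argument is the standard one underlying the cited lemma: expand $p(t)=\sum_{m=0}^{d}c_{m}t^{m}$ so that $B=\sum_{m=0}^{d}c_{m}A^{\odot m}$, apply subadditivity of rank over the sum, and bound each $\rank\inparen{A^{\odot m}}\le r^{m}$ by expanding a rank-$r$ decomposition of $A$ under the Hadamard product. For $r\ge 1$ this is correct and complete, and the final $\sum_{m=0}^{d}r^{m}\le(d+1)r^{d}$ step is fine.

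One small flaw: the claim that the $r=0$ edge case ``holds vacuously'' is not right. If $A$ is the zero matrix and $d\ge 1$, then $(d+1)r^{d}=0$, while $B=p(0)J$ (with $J$ the all-ones matrix) has rank one whenever $p(0)\ne 0$, so the stated inequality actually fails in that corner case rather than holding trivially. In other words, the lemma as written implicitly presumes $A\ne 0$, or simply ignores this degenerate instance because it never arises in applications; in the paper it is applied to a matrix $A$ approximating a $\pm 1$-valued matrix to within error $E<1$, which forces $A\ne 0$. So the lapse is harmless here, but you should not assert a false edge case as automatically satisfied; better to say the degenerate case $A=0$ is excluded or does not occur.
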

\begin{restatable}[Error reduction, a generalized version of Corollary 1 in \cite{KS07}]{fact}{reduction}\label{reduction}
    Let $F$ be a matrix with $\inbrace{-1,1,*}$ entries. Let $\epsilon, E$ be constants with $0 < \epsilon < E < 1$. Then $\log\arank_{\epsilon}(F) = O\inparen{\log\arank_E(F)}$.
\end{restatable}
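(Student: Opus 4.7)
My plan is to follow the strategy behind the original Klauck--Sherstov error reduction: start from an optimal error-$E$ approximation $A$ of $F$ (so $\rank(A) = \arank_E(F)$), apply a polynomial $q$ of constant degree entrywise, and show that $B_{i,j} \coloneqq q(A_{i,j})$ is a valid error-$\epsilon$ approximation whose rank is only polynomially larger.

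First, I would record the polynomial I need. By Fact \ref{poly_const} with $c$ large enough (a function of $\epsilon$ and $E$ only, hence a constant), and after a mild normalization, I can obtain a real polynomial $q$ of degree $c = O\inparen{\log(1/\epsilon)}$ with the three properties
\[
\sup_{t \in [-1,1]} |q(t)| \le 1, \quad \sup_{t\in [1-E,1]} |q(t)-1| \le \epsilon, \quad \sup_{t \in [-1,-1+E]} |q(t)+1| \le \epsilon.
\]
(The first property is the one not automatic from Fact \ref{poly_const}; obtaining it is the key step--I comment on it below.) Given this $q$, I define $B \coloneqq [q(A_{i,j})]_{i,j}$ and check both conditions of Definition \ref{def:arank}. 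For any defined entry $F_{i,j} \in \inbrace{-1,1}$, the error-$E$ condition places $A_{i,j}$ in $[F_{i,j}-E, F_{i,j}+E]\cap [-1,1]$, which is either $[1-E,1]$ or $[-1,-1+E]$, so the sign-approximation bounds give $|B_{i,j}-F_{i,j}|\le \epsilon$. For any $*$-entry, $A_{i,j}\in [-1,1]$ and so $|B_{i,j}|\le 1$ by the first property of $q$. Hence $B$ approximates $F$ with error $\epsilon$.

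To translate this into a rank bound, I invoke Fact \ref{rank_poly}, which yields $\rank(B) \le (c+1)\,\rank(A)^{c}$. Taking logarithms gives
\[
\log \arank_{\epsilon}(F) \le \log \rank(B) \le c \log \rank(A) + \log(c+1) = O\inparen{\log \arank_E(F)},
\]
where the hidden constant depends only on $\epsilon$ and $E$, as claimed.

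The main obstacle is constructing a polynomial $q$ that is simultaneously (i) close to $\sign$ on $[-1,-1+E]\cup[1-E,1]$ and (ii) bounded by $1$ on all of $[-1,1]$--the latter is essential for the $*$-entries of the partial matrix $F$ and is what distinguishes this statement from the total-matrix version in Klauck--Sherstov. The polynomial produced directly by Fact \ref{poly_const} guarantees (i) on $[-1-E,-1+E]\cup[1-E,1+E]$ but can exceed $1$ in absolute value on the ``middle'' interval $(-1+E, 1-E)$. I would handle this by the standard trick of composing or replacing it with a known smoothed sign-approximator: take the polynomial from Fact \ref{poly_const} with the target sign error reduced to $\epsilon/3$, evaluate it on a slightly shrunken input $\lambda t$ with $\lambda < 1$ chosen so that the output stays in $[-1,1]$, and then use the Chebyshev extremal property (or a short independent Chebyshev-type construction) to correct the small residual overshoot while keeping the degree $O\inparen{\log(1/\epsilon)}$. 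Since $\epsilon$ and $E$ are absolute constants, the degree $c$ is a constant, and the overall blow-up in $\log$-rank is a constant factor, which is exactly the content of the fact.
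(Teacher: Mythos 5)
Your proof follows the same route as the paper's (apply an entrywise polynomial to the rank-optimal error-$E$ approximant, then invoke Fact \ref{rank_poly}), but you correctly flag a step that the paper's argument does not actually justify. The paper sets $q(t) = p(t) - \epsilon/2$ with $p$ from Fact \ref{poly_const}, claims $|q(t)| \le 1$ for $1-E \le |t| \le 1$, and then asserts $B = [q(A_{i,j})]$ lies in $\mathcal{F}_\epsilon$. Both claims are problematic. The additive shift is the wrong normalization: for $t \in [-1, -1+E]$ one has $p(t) \in [-1-\epsilon/2, -1+\epsilon/2]$, so $q(t) \in [-1-\epsilon, -1]$ and $|q(t)|$ can reach $1+\epsilon > 1$; a multiplicative rescaling $q(t) = p(t)/(1+\epsilon/2)$ is what actually works at the endpoints. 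More fundamentally, and as you observe, Fact \ref{poly_const} gives no control on $p$ over $(-1+E, 1-E)$, yet that is exactly where $A_{i,j}$ may lie when $F_{i,j} = *$, so Definition \ref{def:arank}'s condition $|B_{i,j}| \le 1$ on undefined entries is never verified. You are right that the essential extra ingredient is $\sup_{t \in [-1,1]} |q(t)| \le 1$ and that this is what the partial-matrix version requires beyond Klauck--Sherstov's total-matrix statement.

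Your sketch of how to obtain that globally bounded $q$ could be tightened: replacing $p(t)$ by $p(\lambda t)$ does not by itself control $p$ on the middle interval (the unconstrained region is relocated, not removed), so the burden really falls on the alternative ``Chebyshev-type construction'' you mention. The cleanest route is to observe that one can take the sign-approximator $p$ to be odd and monotone nondecreasing on $[-1,1]$; then its values on $(-1+E,1-E)$ are trapped between $p(-(1-E))$ and $p(1-E)$, giving $\sup_{[-1,1]}|p| \le 1 + \epsilon/2$, and the multiplicative rescaling $q = p/(1+\epsilon/2)$ delivers all three of your required properties at once. Fact \ref{poly_const} as stated does not record monotonicity, so this property needs to be either cited from the source or built in directly. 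With a correct bounded $q$ in hand, your chain $B \in \mathcal{F}_\epsilon$, $\rank(B) \le (c+1)\rank(A)^c$, $\log\arank_\epsilon(F) = O\inparen{\log\arank_E(F)}$ is exactly right.
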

\begin{proof}
Let $c$ be any constant positive integer such that
\[
8\sqrt{c}\inparen{1-\frac{(1-E)^2}{16}}^c \le \epsilon/2.
\]
By \Cref{poly_const}, there is a degree-$d$ polynomial $p(t)$ such that for any $1-E \le |t| \le 1$, 
\[
|p(t)-\sign(t)|\le \epsilon/2.
\]
Let $q(t) = p(t)-\epsilon/2$. Then for any $1-E \le |t| \le 1$, we have
\[
|q(t) - \sign(t)| \le \epsilon, |q(t)| \le 1.
\]
Let $A$ be a real matrix such that $A \in \mathcal{F}_{E}$ and $\rank\inparen{A} = \arank_E\inparen{F}$, where $\mathcal{F}_{E}$ be the set of real matrices that approximates $F$ with error $E$ as \Cref{def:arank}. Then the matrix $B = [q\inparen{A_{i,j}}]_{i,j}$ satisfies that $B \in \mathcal{F}_{\epsilon}.$ By \Cref{rank_poly}, we have $\rank(B) \le (c+1)\rank(A)^c$. Thus,
\[
\begin{aligned}
\arank_{\epsilon}(F) &\le \rank(B) \\
&\le (c+1)\rank(A)^c \\
&= (c+1)\arank_{E}(F)^c.
\end{aligned}
\]
Thus, we have $\log\arank_{\epsilon}(F) = O\inparen{\log\arank_{E}(F)}$.
\end{proof}

Next, 
we first give some useful definitions, facts, and lemmas about the pattern matrix method \cite{She11}.
\begin{definition}[Norm of matrices] 
    For a matrix $A \in \mathbb{R}^{m \times n}$, let the singular values of $A$ be $\sigma_1(A) \ge \sigma_2(A) \ge \cdots \ge \sigma_{\min\inparen{m,n}}(A) \ge 0$.
    The 
    spectral norm and trace norm are given by
    \[
    \begin{aligned}
        ||A|| &\coloneqq \sigma_1(A),\\
        ||A||_{\Sigma} &\coloneqq \sum_i \sigma_i(A).
    \end{aligned}
    \]
\end{definition}
\begin{definition}[Approximate trace norm]\label{app_trace_norm}
    For any matrix $F \in \inbrace{-1,1,*}^{m\times n}$ and $0\leq \epsilon < 1$, the $\epsilon$-approximate trace norm of $F$ is 
    \[
    ||F||_{\Sigma,\epsilon} \coloneqq \min_{A\in\mathcal{F}_{\epsilon}}||A||_{\Sigma},
    \]
    where $\mathcal{F}_{\epsilon}$ is defined in \Cref{def:arank}.
\end{definition}
\begin{definition}\label{def:real_rep}
    For any incomplete matrix $F \in \inbrace{-1,1,*}^{m\times n}$ and $0\leq \epsilon < 1$, we say a real matrix $A$ is a real representation of $F$ if:
\begin{enumerate}
\item[(1)]  $A_{i,j} = F_{i,j}$ for any $i \in [m], j\in [n]$ such that $F_{i,j} \neq *$;
\item[(2)]  $|A_{i,j}|\leq 1$ for all $i\in [m], j\in [n]$.
\end{enumerate}
Let $\mathcal{F}$ be the set of all real representations of $F$.
\end{definition}
By \Cref{def:real_rep,app_trace_norm}, we have 
$||F||_{\Sigma,\epsilon} = \min_{A\in\mathcal{F}}||A||_{\Sigma,\epsilon}$.

\begin{fact}[\cite{She11}, Proposition 2.2]\label{norm_inequality}
    Let $F \in \Re^{m \times n}$ and $\epsilon \ge 0$. Then
    \[
    ||F||_{\Sigma,\epsilon} \ge \sup_{\Psi\in\Re^{m \times n}, ||\Psi|| \neq 0}\frac{\langle F,\Psi\rangle -\epsilon||\Psi||_1}{||\Psi||}.
    \]
\end{fact}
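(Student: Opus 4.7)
The plan is to combine two ingredients: the classical duality between the trace norm and the spectral norm, which yields $\langle A, \Psi\rangle \le ||A||_{\Sigma} \cdot ||\Psi||$ for any real matrices $A, \Psi$ of matching dimensions, together with an entrywise H\"older-type bound controlling the perturbation introduced by approximation.

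First I would fix an arbitrary nonzero $\Psi \in \Re^{m\times n}$ and an arbitrary matrix $A \in \mathcal{F}_{\epsilon}$, so that $A$ approximates $F$ with entrywise error at most $\epsilon$. Writing $A = F + \Delta$ with $|\Delta_{i,j}|\le \epsilon$ for all $i,j$, bilinearity of the Frobenius inner product gives $\langle A,\Psi\rangle = \langle F, \Psi\rangle + \langle \Delta,\Psi\rangle$, and the entrywise bound on $\Delta$ together with H\"older's inequality (in the $(\ell_{\infty},\ell_{1})$ pairing on entries) yields $|\langle\Delta,\Psi\rangle|\le \epsilon\cdot ||\Psi||_{1}$.

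Combining these with the duality inequality produces
\[
||A||_{\Sigma} \;\ge\; \frac{\langle A,\Psi\rangle}{||\Psi||} \;\ge\; \frac{\langle F,\Psi\rangle - \epsilon\cdot ||\Psi||_{1}}{||\Psi||}.
\]
Since this bound holds uniformly over $A\in\mathcal{F}_{\epsilon}$, passing to the infimum on the left, which equals $||F||_{\Sigma,\epsilon}$ by \Cref{app_trace_norm}, and then taking the supremum over nonzero $\Psi$ finishes the argument.

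There is no substantive obstacle here. The only technical step worth confirming is the trace-spectral duality itself, which one verifies by writing $A=\sum_{k}\sigma_{k} u_{k} v_{k}^{\top}$ in its singular value decomposition and bounding $|u_{k}^{\top}\Psi v_{k}|\le ||\Psi||$, so that $\langle A,\Psi\rangle = \sum_{k} \sigma_{k}\, u_{k}^{\top}\Psi v_{k} \le ||\Psi|| \cdot \sum_{k} \sigma_{k} = ||\Psi||\cdot ||A||_{\Sigma}$; this is a standard fact that can equally well be invoked by citation.
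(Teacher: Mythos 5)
Your proof is correct. The paper states this as a cited fact from [She11, Proposition~2.2] without reproving it, and your argument — trace–spectral duality $\langle A,\Psi\rangle \le \|A\|_\Sigma\|\Psi\|$ combined with the entrywise H\"older bound $|\langle A-F,\Psi\rangle| \le \epsilon\|\Psi\|_1$, then optimizing over $A$ and $\Psi$ — is precisely the standard argument used in that reference.
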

As a corollary of \Cref{norm_inequality}, we have the following lemma:
\begin{lemma}\label{app_trace_norm_lowerbound}
    Let $F \in \inbrace{-1,1,*}^{m\times n}$ and $0 \le \epsilon < 1$. Then
for any $\Psi\in\Re^{m \times n}$ such that $||\Psi||\neq 0$, we have
    \[
    \frac{\sum_{(x,y)\in \dom F} \Psi_{x,y}F_{x,y}-\sum_{(x,y)\notin \dom F}|\Psi_{x,y}|-\epsilon||\Psi||_1}{||\Psi||}
    \]
   is a lower bound of $||F||_{\Sigma,\epsilon}$, where $\dom F := \{(x,y): F_{x,y} \in \{-1,1\}\}$.
\end{lemma}
\begin{proof}
  By \Cref{norm_inequality}, we have
  \[
  \begin{aligned}
  ||F||_{\Sigma,\epsilon} &= \min_{A\in\mathcal{F}}||A||_{\Sigma,\epsilon} \\
  &\ge \min_{A\in\mathcal{F}}\sup_{\Psi\in\Re^{m \times n}, ||\Psi|| \neq 0}\frac{\langle A,\Psi\rangle -\epsilon||\Psi||_1}{||\Psi||}\\
   &= \sup_{\Psi\in\Re^{m \times n}, ||\Psi|| \neq 0}\frac{1}{||\Psi||}\\
   &\inparen{\sum_{(x,y)\in \dom F}\Psi_{x,y}F_{x,y}-\sum_{(x,y)\notin \dom F} |\Psi_{x,y}|-\epsilon||\Psi||_1}.  
  \end{aligned}
  \]
\end{proof}
\begin{lemma}\label{arank_fro_lowerbound}
       Let $F \in \inbrace{-1,1,*}^{m\times n}$ and $0 \le \epsilon < 1$. Then
       \[
       \arank_{\epsilon}(F) \ge \frac{||F||_{\Sigma,\epsilon}^2}{\inparen{1+\epsilon}^2mn}.
       \]
\end{lemma}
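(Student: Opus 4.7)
\medskip

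\noindent\textbf{Proof proposal.} The plan is to combine the standard rank–trace-norm–Frobenius-norm Cauchy–Schwarz inequality with an entry-wise Frobenius bound. Concretely, let $A^\star$ be a real matrix achieving the minimum in the definition of $\arank_\epsilon(F)$, i.e., $A^\star \in \mathcal{F}_\epsilon$ and $\rank(A^\star) = \arank_\epsilon(F)$. Writing $r \coloneqq \rank(A^\star)$ and letting $\sigma_1(A^\star) \ge \cdots \ge \sigma_r(A^\star) > 0$ denote the nonzero singular values of $A^\star$, I would apply Cauchy–Schwarz on the vector of singular values:
\[
\|A^\star\|_\Sigma \;=\; \sum_{i=1}^{r} \sigma_i(A^\star) \;\le\; \sqrt{r}\cdot\sqrt{\sum_{i=1}^{r} \sigma_i(A^\star)^2} \;=\; \sqrt{\rank(A^\star)}\cdot \|A^\star\|_F,
\]
which rearranges to $\rank(A^\star) \ge \|A^\star\|_\Sigma^2 / \|A^\star\|_F^2$.

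Next, I would bound the two norms separately. For the trace norm, since $A^\star \in \mathcal{F}_\epsilon$, the definition of the approximate trace norm (\Cref{app_trace_norm}) gives $\|A^\star\|_\Sigma \ge \|F\|_{\Sigma,\epsilon}$ immediately. For the Frobenius norm, I would use the fact that every entry of $A^\star$ satisfies $|A^\star_{i,j}| \le 1+\epsilon$ (this follows from $|A^\star_{i,j} - F_{i,j}| \le \epsilon$ on defined entries, and from $|A^\star_{i,j}| \le 1 \le 1+\epsilon$ on undefined entries, per conditions (1)–(2) of \Cref{def:arank}). Summing over all $mn$ entries yields
\[
\|A^\star\|_F^2 \;=\; \sum_{i,j} (A^\star_{i,j})^2 \;\le\; (1+\epsilon)^2\, mn.
\]

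Plugging these two bounds into the Cauchy–Schwarz inequality gives
\[
\arank_\epsilon(F) \;=\; \rank(A^\star) \;\ge\; \frac{\|A^\star\|_\Sigma^2}{\|A^\star\|_F^2} \;\ge\; \frac{\|F\|_{\Sigma,\epsilon}^2}{(1+\epsilon)^2\, mn},
\]
as required. There is no serious obstacle: the only point that requires a moment of care is that the minimizer $A^\star$ of the rank is not the same as the minimizer of the trace norm, so one has to invoke the inequality $\|A^\star\|_\Sigma \ge \|F\|_{\Sigma,\epsilon}$ (rather than equality). The proof is entirely free of communication-complexity machinery and uses only the two-line Cauchy–Schwarz argument plus the entry-wise bound.
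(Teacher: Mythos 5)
Your proof is correct and takes essentially the same approach as the paper: both pick the rank minimizer $A^\star\in\mathcal{F}_\epsilon$, apply the Cauchy--Schwarz bound $\|A^\star\|_\Sigma \le \sqrt{\rank(A^\star)}\,\|A^\star\|_F$, bound $\|A^\star\|_F^2\le(1+\epsilon)^2 mn$ entrywise, and use $\|F\|_{\Sigma,\epsilon}\le\|A^\star\|_\Sigma$. Your write-up is, if anything, slightly cleaner in making explicit that the minimizer of rank need not be the minimizer of trace norm, which the paper glosses over.
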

\begin{proof}
Same as Proposition 2.3 in \cite{She11}, for any real matrix $A$ that approximates $F$ with $\epsilon$, we have
\[
\begin{aligned}
    ||F||_{\Sigma,\epsilon} &=\min_{A \in \mathcal{F}_{\epsilon}}||A||_{\Sigma}\\
    &\le ||A||_{F}\sqrt{\rank{A}}\\
    &\le \sqrt{\sum_{i,j}A_{i,j}^2} \sqrt{\rank{A}} \\
    &\le \sqrt{\inparen{1+\epsilon}^2mn}\sqrt{\rank{A}}.
\end{aligned}
\]
Thus, 
\[
 \arank_{\epsilon}(F)\ge \rank{A} \ge \frac{||F||_{\Sigma,\epsilon}^2}{\inparen{1+\epsilon}^2mn}.
\]
\end{proof}
\begin{fact}[\cite{BCG+21}, Theorem 10]\label{Psi}
        Let $F$ be the $(n,t,f)$-pattern matrix, where $f:\B^t \rightarrow \inbrace{-1,1,*}$ is given. Suppose $d = \adeg_{\epsilon}(f)$. Then there exists $\Psi$ be a $(n,t,2^{-n}(n/t)^{-t}\psi)$-pattern matrix for some function $\psi:\B^t \rightarrow \mathbb{R}$ such that  
        \begin{equation}\label{eq:Psi}
        \begin{aligned}
           ||\Psi||_1 &= 1,\\
            \epsilon &<\sum_{(x,y)\in \dom F}F_{x,y}\Psi_{x,y}-\sum_{(x,y)\notin \dom F}|\Psi_{x,y}|,\\
        ||\Psi|| &\le \inparen{\frac{t}{n}}^{d/2}\inparen{2^{n+t}\inparen{\frac{n}{t}}^t}^{-1/2},
        \end{aligned}
        \end{equation}
        where $\dom F := \{(x,y): F_{x,y} \in \{-1,1\}\}$.
\end{fact}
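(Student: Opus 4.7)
The plan is to construct $\Psi$ explicitly from a dual witness for the approximate degree of $f$ and then verify the three required properties by Fourier analysis on the Boolean cube, in the spirit of Sherstov's original pattern matrix method. First I would invoke LP duality applied to the definition of $\adeg_{\epsilon}(f)=d$: since no real polynomial of degree less than $d$ can $\epsilon$-approximate $f$ on $\dom f$ while remaining bounded by $1$ on all of $\B^{t}$, the dual linear program produces a function $\psi\colon\B^{t}\to\mathbb{R}$ with (i) $\|\psi\|_{1}=1$, (ii) $\sum_{x\in\dom f} f(x)\psi(x)-\sum_{x\notin\dom f}|\psi(x)|>\epsilon$, and (iii) the pure-high-degree condition $\widehat{\psi}(S)=0$ for all $|S|<d$.

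Next I would take $\Psi$ to be the $(n,t,\phi)$-pattern matrix with generator $\phi(z)\coloneqq 2^{-n}(n/t)^{-t}\psi(z)$, so that $\Psi_{x,(V,w)}=2^{-n}(n/t)^{-t}\psi(x|_{V}\oplus w)$. For the $\ell_{1}$-normalization, fix $V\in\mathcal{V}(n,t)$ and $w\in\B^{t}$ and observe that as $x$ ranges over $\B^{n}$, the projected value $x|_{V}\oplus w$ hits each point of $\B^{t}$ exactly $2^{n-t}$ times; summing over all $x,V,w$ then gives $(n/t)^{t}\cdot 2^{n-t}\cdot 2^{t}\cdot\|\psi\|_{1}=(n/t)^{t}\,2^{n}$, which precisely cancels the scaling factor and yields $\|\Psi\|_{1}=1$. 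The same counting, now keeping signs and splitting entries according to whether $F_{x,y}=*$ or not, transfers property (ii) of $\psi$ verbatim to the required correlation bound for $\Psi$.

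The main obstacle is the spectral-norm estimate, which is the technical heart of the pattern matrix method. Here I would use Sherstov's block diagonalization: whenever the generator $\phi$ of a pattern matrix has Fourier support contained in $\inbrace{S:|S|\ge d}$,
\[
\|\Psi\|^{2}\le 2^{n+t}\inparen{n/t}^{t}\max_{|S|\ge d}\inparen{n/t}^{-|S|}\widehat{\phi}(S)^{2}.
\]
The orthogonality condition (iii) transfers from $\psi$ to $\phi$, so the maximum is indeed over $|S|\ge d$. Substituting $\widehat{\phi}(S)=2^{-n}(n/t)^{-t}\widehat{\psi}(S)$, bounding $|\widehat{\psi}(S)|\le 2^{-t}\|\psi\|_{1}=2^{-t}$, and using $(n/t)^{-|S|}\le(t/n)^{d}$ for $|S|\ge d$ collapses the right-hand side to $(t/n)^{d}\inparen{2^{n+t}(n/t)^{t}}^{-1}$, which on taking square roots is exactly the claimed bound. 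The delicate point, and the step I expect to cost the most bookkeeping, is tracking the normalization $2^{-n}(n/t)^{-t}$ consistently through the $\ell_{1}$ calculation, the correlation identity, and the spectral estimate simultaneously; the cited theorem of Ben-David et al. packages this bookkeeping cleanly, and my proof would essentially be a self-contained verification of their computation with the partial-function correlation formulation used here.
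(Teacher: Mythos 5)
This statement appears in the paper as an imported fact, cited to [BCG+21, Theorem~10] without any accompanying proof, so there is no internal argument to compare your proposal against. Your reconstruction is correct and is exactly the argument underlying the cited theorem: obtain a dual witness $\psi$ for $\adeg_\epsilon(f)$ by LP duality (giving $\|\psi\|_1=1$, the one-sided correlation $>\epsilon$ against $f$ on $\dom f$ penalized by $|\psi|$ off $\dom f$, and $\hat\psi(S)=0$ for $|S|<d$); set $\Psi$ to be the pattern matrix with generator $\phi=2^{-n}(n/t)^{-t}\psi$; verify the $\ell_1$ and correlation properties via the fact that for each fixed column index $(V,w)$ the map $x\mapsto x|_V\oplus w$ is $2^{n-t}$-to-one onto $\B^t$; and invoke Sherstov's exact spectral-norm formula for pattern matrices, whose maximum collapses to $|S|\ge d$ by the high-degree support of $\psi$. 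Your bookkeeping through the normalization (using $|\hat\psi(S)|\le 2^{-t}$ and $(t/n)^{|S|}\le(t/n)^{d}$) folds cleanly to give $\|\Psi\|\le(t/n)^{d/2}\inparen{2^{n+t}(n/t)^t}^{-1/2}$ as claimed, so the proposal checks out.
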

Finally, we give the following lower bound of approximate rank via the pattern matrix method by \Cref{arank_lowerbound}.
\begin{restatable}[A generalized version of Theorem 1.4 in \cite{She11}]{lemma}{aranklowerbound}\label{arank_lowerbound}
    Let $F$ be the $(n,t,f)$-pattern matrix, where $f:\B^t \rightarrow \inbrace{-1,1,*}$ is given. Then for every $\epsilon \in [0,1)$ and every $\delta \in [0,\epsilon]$,
    \[
    \arank_{\delta}(F) \ge \inparen{\frac{\epsilon-\delta}{1+\delta}}^2\inparen{\frac{n}{t}}^{\adeg_{\epsilon}(f)}.
    \]
\end{restatable}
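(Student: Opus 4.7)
The plan is to combine the three results immediately preceding the statement: the dual-norm lower bound on the approximate trace norm (\Cref{app_trace_norm_lowerbound}), the Frobenius-style conversion from approximate trace norm to approximate rank (\Cref{arank_fro_lowerbound}), and the explicit dual witness for pattern matrices (\Cref{Psi}). The strategy mirrors Sherstov's original proof of the pattern matrix method for total functions, but we carry the partial-domain correction terms through each step.

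First, let $d = \adeg_{\epsilon}(f)$ and invoke \Cref{Psi} to obtain a pattern matrix $\Psi$ satisfying the three properties in \Cref{eq:Psi}. Note that $F$ has dimensions $m \coloneqq 2^n$ by $N \coloneqq (n/t)^t 2^t$, so $mN = 2^{n+t}(n/t)^t$, and the spectral bound from \Cref{Psi} reads $\|\Psi\|^2 \le (t/n)^d/(mN)$. Next, I will plug $\Psi$ into \Cref{app_trace_norm_lowerbound} (applied at error level $\delta$ rather than $\epsilon$) to get
\[
\|F\|_{\Sigma,\delta} \;\ge\; \frac{\sum_{(x,y)\in\dom F} F_{x,y}\Psi_{x,y} - \sum_{(x,y)\notin\dom F} |\Psi_{x,y}| - \delta\|\Psi\|_1}{\|\Psi\|} \;>\; \frac{\epsilon - \delta}{\|\Psi\|},
\]
where the strict inequality uses the second property of $\Psi$ together with $\|\Psi\|_1 = 1$.

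Finally, I will apply \Cref{arank_fro_lowerbound} at error $\delta$ and substitute the two bounds:
\[
\arank_{\delta}(F) \;\ge\; \frac{\|F\|_{\Sigma,\delta}^{2}}{(1+\delta)^2\,mN} \;>\; \frac{(\epsilon-\delta)^2}{(1+\delta)^2\,mN\,\|\Psi\|^{2}} \;\ge\; \left(\frac{\epsilon-\delta}{1+\delta}\right)^{2}\left(\frac{n}{t}\right)^{d},
\]
which is exactly the claimed bound after recalling $d = \adeg_{\epsilon}(f)$.

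The only conceptually delicate step is the second: it is where the partial nature of $f$ (and hence of $F$) enters. For total functions one would just write $\langle F,\Psi\rangle - \delta\|\Psi\|_1$ in the numerator, but here the $*$-entries of $F$ force the split $\sum_{\dom F} F_{x,y}\Psi_{x,y} - \sum_{\notin \dom F} |\Psi_{x,y}|$ coming from \Cref{app_trace_norm_lowerbound}. I expect this to be the main obstacle in making the argument rigorous, and it is precisely the reason we need the stronger second property of $\Psi$ in \Cref{Psi} (which already absorbs the undefined entries as an $\ell_1$-penalty) rather than the simpler total-function version. Once those two facts are set up, the rest is a one-line chain of substitutions.
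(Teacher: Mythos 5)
Your proposal is correct and follows exactly the paper's own proof: combine the dual witness from \Cref{Psi} with the trace-norm lower bound \Cref{app_trace_norm_lowerbound} at error level $\delta$, then convert to an approximate-rank bound via \Cref{arank_fro_lowerbound}, substituting $mN = 2^{n+t}(n/t)^t$ and the spectral bound on $\Psi$. The only difference is presentational—the paper compresses the first two substitutions into a single displayed inequality—while your observation about where the partial-domain $\ell_1$-penalty enters is exactly the right reason the generalized \Cref{Psi} is needed.
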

\begin{proof}
By \Cref{app_trace_norm_lowerbound} and (\ref{eq:Psi}), we have
    \begin{equation}\label{eq:fro_lowerbound}
    ||F||_{\Sigma,\delta} \ge (\epsilon-\delta)\inparen{\frac{n}{t}}^{\adeg_{\epsilon}(f)/2}\sqrt{2^{n+t}(n/t)^t}.    
    \end{equation}
   Since $F$ is a $2^n \times (n/t)^t2^t$ matrix, by \Cref{arank_fro_lowerbound} and (\ref{eq:fro_lowerbound}), we have
    \[
    \begin{aligned}
        \arank_{\delta}(F) 
        &\ge \frac{||F||_{\Sigma,\delta}^2}{\inparen{1+\delta}^2 2^{n+t}(n/t)^t}\\
        &\ge \inparen{\frac{\epsilon-\delta}{1+\delta}}^2\inparen{\frac{n}{t}}^{\adeg_{\epsilon}(f)}.\\
    \end{aligned}
    \]
\end{proof}

\subsection{Sampling in Communication Model}\label{appendix_cc}
First, we give Fact \ref{le:BH95}. Then we restate and prove its randomized version, Fact \ref{fact:sample_oplus}.
\begin{fact}[\cite{brodal1995communication}, Proposition 2]
\label{le:BH95}
Suppose $x,y \in \{0,1\}^n$ are inputs of Alice and Bob such that $|x| \neq |y|$. Alice and Bob can find an index $i$ such that $x_i \neq y_i$ using $O(\log n)$ bits of communication.
\end{fact}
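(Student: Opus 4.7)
The plan is to use a binary search over coordinates, with Alice and Bob maintaining the invariant that the partial Hamming weights of their inputs disagree on the current interval. Initially the interval is $\{1,\dots,n\}$ and the invariant holds by the hypothesis $|x| \ne |y|$. In each round, the two parties split the interval $I$ into halves $L, R$; Alice sends Bob the partial sum $\sum_{i \in L} x_i$, and Bob compares it with his own $\sum_{i \in L} y_i$. Because $\sum_{i \in L}(x_i - y_i) + \sum_{i \in R}(x_i - y_i) \ne 0$, at least one of the two halves preserves the invariant, so Bob picks such a half (say $L$ if it works, otherwise $R$) and announces his choice with one bit. After $\lceil \log_2 n \rceil$ rounds the interval shrinks to a single coordinate $i$, which must satisfy $x_i \ne y_i$.

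The main obstacle is the bit accounting. In round $k$ the subinterval has size $n/2^k$, so Alice's partial-sum message fits in $\log(n/2^k)$ bits; summed across $\Theta(\log n)$ rounds the naive implementation yields only $O(\log^2 n)$ total bits, which is one logarithmic factor worse than the claimed bound. To close the gap, I would encode each round's transmission more compactly by exploiting the arithmetic constraints that earlier-round disclosures impose on the admissible values of later partial sums, amortising the per-round cost to $O(1)$ bits. The structural halving-with-invariant argument described above is common to both the naive and the optimised versions, so the only additional ingredient required to match the bound of \cite{brodal1995communication} is this refined encoding, which I would adopt directly from that reference.
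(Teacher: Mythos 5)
Your binary-search skeleton and the invariant you maintain match the informal description the paper gives of the cited protocol, and you are right to flag that transmitting the partial Hamming weight in full each round costs $\Theta(\log^2 n)$ bits. However, the fix you sketch---compressing later rounds ``by exploiting the arithmetic constraints that earlier-round disclosures impose''---does not actually save a logarithmic factor. After round $k$ both parties know $a_k\coloneqq|x|_{I_k}$ for the current interval $I_k$ of length $n/2^k$; the only constraints on the next partial sum $a_{k+1}$ (over a half of $I_k$) are $\max(0,a_k-n/2^{k+1})\le a_{k+1}\le\min(a_k,n/2^{k+1})$, a range of size up to $n/2^{k+1}+1$. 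That maximum is attained whenever $a_k=n/2^{k+1}$, and this can hold at every level simultaneously (take $x$ so that each dyadic subinterval you descend into contains exactly half its length in ones), so even the constrained encoding needs roughly $\log n-k$ bits in round $k$ and $\Theta(\log^2 n)$ in total. Amortizing the per-round cost to $O(1)$ therefore requires a genuinely different idea, not merely a tighter range for the same partial sums. Note also that the paper does not reprove this statement; it is invoked as Proposition~2 of \cite{brodal1995communication}, so a self-contained argument would have to reproduce that reference's actual protocol rather than the generic weight-comparison binary search you describe.
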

\unisample*
\begin{proof}
The protocol of \Cref{le:BH95} is a deterministic protocol based on binary search. Initially, the search space is $\inbrace{0,1,...,n-1}$. In each round, Alice and Bob shrink the search space to one-half of the original search space. Finally, Alice and Bob find one index $i \in \inbrace{0,1,...,n-1}$ such $x_i \neq y_i$ by $O(\log n)$ rounds. The deterministic protocol can be adjusted to a randomized protocol easily. 
By making the same random permutation to the bits of the inputs using public coins, Alice and Bob can find an index $i$ such that $x_i \neq y_i$ uniformly.
\end{proof}

\subsection{Quantum Amplitude Amplification}
\begin{fact}[\cite{HM19}, Merging Theorem 3 and item (4) in Corollary 4]\label{le:amp_est}
Given a unitary $U$ such that $U\ket{0} = \ket{\psi}$ and an orthogonal projector $\Pi$, there exists a quantum  algorithm outputting an estimation $\tilde{p}$ of $p = \langle \psi | \Pi | \psi \rangle$ such that 
\begin{equation*}
    |\tilde{p}-p| \le \frac{1}{3}\epsilon
\end{equation*}
by $O\inparen{\sqrt{p}/\epsilon}$ calls to (the controlled versions of) $U,U^{\dag}$ and $I-2\Pi$ with bounded error.
\end{fact}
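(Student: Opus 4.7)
The plan is to invoke the standard quantum amplitude estimation framework of Brassard, Hoyer, Mosca, and Tapp (the cited reference refines the constants and success-probability analysis). The estimator $\tilde p$ is produced by running quantum phase estimation on a suitable Grover-like unitary built from $U$ and $I - 2\Pi$.

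First, I would decompose the state with respect to $\Pi$ as $\ket{\psi} = \sin\inparen{\theta}\ket{\psi_+} + \cos\inparen{\theta}\ket{\psi_-}$, where $\ket{\psi_+}$ and $\ket{\psi_-}$ are the normalizations of $\Pi\ket{\psi}$ and $\inparen{I-\Pi}\ket{\psi}$ respectively, so that $\sin^2\inparen{\theta} = p$ and estimating $p$ reduces to estimating $\theta$. Next, I would define the Grover-type operator $Q \coloneqq -U\inparen{I - 2\ket{0}\bra{0}}U^\dagger\inparen{I - 2\Pi}$. A direct calculation shows that the two-dimensional subspace $\text{span}\inbrace{\ket{\psi_+}, \ket{\psi_-}}$ is $Q$-invariant and that $Q$ acts on it as the composition of two reflections, hence as a rotation by angle $2\theta$; in particular its eigenvalues on this subspace are $e^{\pm 2i\theta}$. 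Crucially, one application of $Q$ uses exactly two calls to $U$ or $U^\dagger$ and one call to $I - 2\Pi$.

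Then I would run quantum phase estimation on controlled-$Q$ with initial state $U\ket{0}$ and with $M = \Theta\inparen{\sqrt{p}/\epsilon}$ applications of $Q$. With constant probability this returns an estimate $\tilde\theta$ satisfying $\abs{\tilde\theta - \theta} \le \pi/M$. Finally, the algorithm outputs $\tilde p \coloneqq \sin^2\inparen{\tilde\theta}$. Using the elementary inequality $\abs{\sin^2\inparen{\tilde\theta} - \sin^2\inparen{\theta}} \le \abs{\tilde\theta - \theta}\inparen{2\sin\theta + \abs{\tilde\theta - \theta}}$, one obtains $\abs{\tilde p - p} \le \pi\inparen{2\sqrt{p} + \pi/M}/M$, and choosing the hidden constant in $M$ large enough forces this to be at most $\epsilon/3$.

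The delicate point is balancing the two contributions to the error: the linear term $O\inparen{\sqrt{p}/M}$ dictates $M = \Omega\inparen{\sqrt{p}/\epsilon}$, while the quadratic term $O\inparen{1/M^2}$ is automatically smaller in the interesting regime $\epsilon \lesssim \sqrt{p}$ (and can otherwise be absorbed into a slightly larger constant). Boosting the success probability from a constant to the desired bounded-error level uses the standard median-of-$O\inparen{1}$-repetitions trick and does not affect the asymptotic query count of $O\inparen{\sqrt{p}/\epsilon}$ calls to $U$, $U^\dagger$, and $I - 2\Pi$.
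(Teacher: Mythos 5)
The paper does not prove this statement; it is cited verbatim from \cite{HM19} (Hamoudi--Magniez) with no argument of its own, so there is no in-paper proof to compare against. I will therefore assess your proposal on its own merits.

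Your proposal is the standard Brassard--H{\o}yer--Mosca--Tapp (BHMT) amplitude-estimation argument: decompose $\ket{\psi}$, build the Grover rotation $Q$, run phase estimation, and convert the phase error into an amplitude error. That part is fine and the elementary inequality you invoke is correct. However, there is a genuine gap at the crucial step: you set $M = \Theta\inparen{\sqrt{p}/\epsilon}$, i.e.\ the number of phase-estimation applications depends on the unknown quantity $p$ that the algorithm is supposed to be estimating. Plain BHMT phase estimation does not give an instance-dependent query count; it gives an error bound $\abs{\tilde p - p} \le 2\pi\sqrt{p(1-p)}/M + \pi^2/M^2$ for a \emph{pre-chosen} $M$. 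To obtain $O\inparen{\sqrt{p}/\epsilon}$ queries without knowing $p$, one needs an adaptive (exponential-search-style) scheme with a carefully designed stopping rule, and establishing its correctness and query count is precisely the technical content of \cite{HM19}'s Theorem~3 and Corollary~4. Your proof as written simply assumes the conclusion of that scheme.

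A secondary, smaller inaccuracy: you assert that the quadratic term $O\inparen{1/M^2}$ is dominated in the regime $\epsilon \lesssim \sqrt{p}$. With $M = c\sqrt{p}/\epsilon$, the quadratic term is $\pi^2\epsilon^2/(c^2 p)$, which is at most $\epsilon/6$ precisely when $\epsilon \lesssim p$ (not $\epsilon \lesssim \sqrt{p}$). Since the proof needs to handle all $p$ and $\epsilon$ (or at least justify why the complementary regime is benign), the phrase ``can otherwise be absorbed into a slightly larger constant'' is not a substitute for an argument: when $p$ is tiny relative to $\epsilon$, $M = \Theta\inparen{\sqrt{p}/\epsilon}$ may be too small for any useful phase estimation, and one must instead argue that in that regime the trivial estimate already meets the accuracy target and that the adaptive procedure terminates without overshooting the claimed query bound. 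Both of these points are resolved in \cite{HM19}, and since the paper leans on that reference rather than reproving it, the cleanest fix to your write-up is to invoke the adaptive result there directly rather than the non-adaptive BHMT bound.
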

As a direct corollary of \Cref{le:amp_est}, we give the quantum version of \Cref{le:sample} as follows.
\begin{fact}
\label{le:quantum_sample}
Given a set $S$, suppose there is a subset $S'$ of $S$ such that $\frac{|S'|}{|S|} \le \beta-\epsilon$ or $\frac{|S'|}{|S|} \ge \beta+\epsilon$. Suppose we have a quantum sampler such that $\mathcal{S}\ket{0} = \frac{1}{\sqrt{|S|}}\sum_{i \in S}\ket{i}$.
Let orthogonal projector $\Pi$ be defined as $\Pi = \sum_{i \in S'}\ket{i}\bra{i}$. Since $\frac{|S'|}{|S|} = \bra{0}\mathcal{S}^{\dag} \Pi \mathcal{S}\ket{0}$, we can decide whether $\frac{|S'|}{|S|} \le \beta-\epsilon$ or $\frac{|S'|}{|S|} \ge \beta+\epsilon$ by $O\inparen{\sqrt{\beta}/\epsilon}$ calls of (the controlled versions of) $\mathcal{S},\mathcal{S}^{\dag}$ and $I-2\Pi$ with bounded error.
\end{fact}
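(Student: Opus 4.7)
The plan is to apply Fact~\ref{le:amp_est} as a black box, with the unitary $U := \mathcal{S}$ and the projector $\Pi := \sum_{i \in S'} \ket{i}\bra{i}$ exactly as stated. First I would verify by direct computation that the scalar $p$ appearing in Fact~\ref{le:amp_est} coincides with $|S'|/|S|$, since for $\ket{\psi} := \mathcal{S}\ket{0} = \frac{1}{\sqrt{|S|}} \sum_{i \in S} \ket{i}$,
\[
\langle \psi | \Pi | \psi \rangle = \sum_{i \in S'} |\langle i | \psi \rangle|^2 = \sum_{i \in S'} \frac{1}{|S|} = \frac{|S'|}{|S|}.
\]
This identifies the quantity that amplitude estimation produces with the ratio that the promise in the statement constrains.

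Next I would invoke Fact~\ref{le:amp_est} with precision set to a suitable constant multiple of $\epsilon$, so as to obtain an estimate $\tilde{p}$ satisfying $|\tilde{p} - p| \le \epsilon/3$ with bounded error, and threshold it against $\beta$: declare $|S'|/|S| \le \beta - \epsilon$ if $\tilde{p} < \beta$, and $|S'|/|S| \ge \beta + \epsilon$ otherwise. Under the promise, $p \le \beta - \epsilon$ yields $\tilde{p} \le \beta - \epsilon + \epsilon/3 < \beta$, while $p \ge \beta + \epsilon$ yields $\tilde{p} \ge \beta + \epsilon - \epsilon/3 > \beta$, so the decision is correct up to the bounded error inherited from amplitude estimation.

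The one place I expect to need care is the query bound $O(\sqrt{\beta}/\epsilon)$. Fact~\ref{le:amp_est} supplies $O(\sqrt{p}/\epsilon)$ calls to the controlled versions of $\mathcal{S}$, $\mathcal{S}^{\dag}$ and $I-2\Pi$, which in the ``small'' branch is already $O(\sqrt{\beta}/\epsilon)$ since $p \le \beta - \epsilon \le \beta$. In the ``large'' branch $p$ could be as large as $1$, so I would run a standard upper-bound-aware version of amplitude estimation with query budget a priori fixed at $O(\sqrt{\beta}/\epsilon)$ and have it halt early and output ``large'' as soon as the running estimate stably exceeds $\beta$. This routine bookkeeping is the only subtlety; with it in place the statement is a direct quantum translation of the classical sampling bound Fact~\ref{le:sample}.
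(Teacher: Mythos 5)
Your proposal is correct and in fact goes further than the paper, which offers no argument at all here: the Fact is presented only with the remark ``as a direct corollary of Fact~\ref{le:amp_est}.'' Your computation identifying $p = \bra{0}\mathcal{S}^\dagger\Pi\mathcal{S}\ket{0} = |S'|/|S|$ and the thresholding against $\beta$ are exactly right, and, importantly, you have put your finger on a real subtlety that the paper's one-line remark glosses over: Fact~\ref{le:amp_est} as stated supplies $O(\sqrt{p}/\epsilon)$ queries, and on the branch $|S'|/|S| \ge \beta+\epsilon$ the true $p$ can be $\Theta(1)$ while $\beta$ is tiny, so $O(\sqrt{p}/\epsilon)$ is not a priori $O(\sqrt{\beta}/\epsilon)$. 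Your budget-cap-and-halt-early fix is sound in spirit, though the phrase ``stably exceeds $\beta$'' is left informal. A cleaner, non-adaptive way to close the same gap is to run fixed-budget amplitude estimation with $M = c\,\sqrt{\beta}/\epsilon$ iterations for a suitable constant $c$ and use the standard error bound $|\tilde{p}-p| \le 2\pi\sqrt{p(1-p)}/M + \pi^2/M^2$: when $p \le \beta-\epsilon$ this is $O(\epsilon)$, so $\tilde{p} < \beta$; when $p \ge \beta+\epsilon$, writing $\delta := p-\beta \ge \epsilon$ and using $\epsilon < \beta$ one checks $2\pi\sqrt{p}/M + \pi^2/M^2 = O\bigl(\epsilon(\sqrt{\beta}+\sqrt{\delta})/\sqrt{\beta} + \epsilon^2/\beta\bigr) < \delta/2$, so $\tilde{p} > \beta$. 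Thus the threshold test succeeds with a fixed $O(\sqrt{\beta}/\epsilon)$ budget and no early-stopping machinery.
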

\begin{remark}\label{re:quantum_sample}
Fact \ref{le:quantum_sample} is also correct if  $\mathcal{S}\ket{0}\ket{0} = \frac{1}{\sqrt{|S|}}\sum_{i \in S}\ket{\psi_i}\ket{i}$ and $\Pi = I \otimes \sum_{i \in S'}\ket{i}\bra{i}$, where $\inbrace{\ket{\psi_i}}$ is an orthonormal basis. 
\end{remark}

\section{Quantum and Randomized Communication Complexity of Permutation-Invariant Functions}\label{appendix_qrcc}
\subsection{The Lower Bound on Quantum Communication Complexity}\label{appendix_qcc_lowerbound}
In this section, we restate and prove the following lemmas:
\QSecInc*
\begin{proof}
\begin{enumerate}
    \item Suppose there exists a quantum protocol $\mathcal{P}$ that solves $\mathrm{ESetInc}_{a+\ell_1 + \ell_3, b + \ell_2 + \ell_3, c+\ell_3, g}^{n+\ell}$. To solve $\mathrm{ESetInc}_{a, b, c, g}^{n}$, Alice and Bob can append some bit strings to their input, respectively. First, Alice appends $\ell_1$ 1's to her input string, and Bob appends $\ell_1$ 0's in the same positions. Second, Bob appends $\ell_2$ 1's to his input, and Alice appends $\ell_2$ 0's in the same positions. Third, Alice and Bob append $\ell_3$ 1's at the same time. Finally, Alice and Bob run the protocol for $\mathrm{ESetInc}_{a+\ell_1 + \ell_3, b + \ell_2 + \ell_3, c+\ell_3, g}^{n+\ell}$.
    \item Suppose there exists a quantum protocol solving $\mathrm{ESetInc}_{a, b, c, g}^{n}$. To solve $\mathrm{ESetInc}_{a, n-b, a-c, g}^{n}$, Bob can first flip each bit of his input, and then Alice and Bob run the protocol for $\mathrm{ESetInc}_{a, b, c, g}^{n}$. Thus $Q\inparen{\mathrm{ESetInc}_{a, n-b, a-c, g}^{n}} \le Q\inparen{\mathrm{ESetInc}_{a, b, c, g}^{n}}$. By a similar argument, we have
   \[
   \begin{aligned}
Q\inparen{\mathrm{ESetInc}_{a, b, c, g}^{n}} &\le Q\inparen{\mathrm{ESetInc}_{a, n-b, a-c, g}^{n}}, \\
Q\inparen{\mathrm{ESetInc}_{a, b, c, g}^{n}} &\le Q\inparen{\mathrm{ESetInc}_{n-a, b, b-c, g}^{n}}, \\
Q\inparen{\mathrm{ESetInc}_{n-a, b, b-c, g}^{n}} &\le 
Q\inparen{\mathrm{ESetInc}_{a, b, c, g}^{n}}.
   \end{aligned}
   \] 
Thus, $Q\inparen{\mathrm{ESetInc}_{a, b, c, g}^{n}} = Q\inparen{\mathrm{ESetInc}_{a, n-b, a-c, g}^{n}} = Q\inparen{\mathrm{ESetInc}_{n-a, b, b-c, g}^{n}}$.
    \item Suppose there exists a quantum protocol solving $\mathrm{ESetInc}_{ka, kb, kc, kg}^{kn}$. To solve $\mathrm{ESetInc}_{a, b, c, g}^{n}$, Alice and Bob can repeat their input strings $k$ times and run the protocol for $\mathrm{ESetInc}_{ka, kb, kc, kg}^{kn}$.
\end{enumerate}
\end{proof}

\lemmapartial*
\begin{proof}
We consider the function $f_{k,l}:\B^k \rightarrow \inbrace{-1,1,*}$ given by 
\begin{align}\label{def:fkl}
f_{k,l}(x)\coloneqq
\begin{cases}
-1, &\text{if $|x|=l-1/2,$}\\
1, &\text{if $|x|=l+1/2,$}\\
*, &\text{otherwise.}
\end{cases}
\end{align}
Let $D$ be a Boolean predicate such that $D(|x|) = f(x)$ for any $x \in \B^k$. By \Cref{fact:paturi}, we have 
\[
\Gamma(D) = |2(l-1/2)-k+1| = |2l-k| = k-2l,
\]
and thus 
\begin{equation}\label{eq:adeg}
\adeg_{1/3}\inparen{f_{k,l}} = \Omega\inparen{\sqrt{k(k-\Gamma(D))}} = \Omega\inparen{\sqrt{kl}}.
\end{equation}
Let $P$ be the $\inparen{2k,k,f_{k,l}}$-pattern matrix defined as \Cref{def:pm}. \Cref{fact:pmm} implies that  $Q_{1/3}\inparen{P} = \Omega\inparen{\sqrt{kl}}$, where 
\begin{equation*}
\begin{aligned}
    P &= \Big[f_{k,l}(x|_V\oplus
w)\Big]_{x\in\B^{2k},\,(V,w)\in\mathcal{V}(2k,k)\times\B^k} \\ 
&= \Big[f_{k,l}(x_1\overline x_1 x_2\overline x_2\ldots
x_{2k}\overline{x_{2k}}|_V)\Big]_{x\in\B^{2k},\,V\in\mathcal{V}(4k,k)}.
\end{aligned}
\end{equation*}
For any $x \in \B^{2k}$, we have $x_1\overline x_1 x_2\overline x_2\ldots
x_{2k}\overline{x_{2k}} \in \B^{4k}$ and $|x_1\overline x_1 x_2\overline x_2\ldots
x_{2k}\overline{x_{2k}}| = 2k$; for any $V\in\mathcal{V}(4k,k)$, we have $V \in \B^{4k}$ and $|V|= k$. Thus, $P$ is a submatrix of $\mathrm{ESetInc}^{4k}_{2k,k,l,1/2}$, which is defined as 
\begin{align*}
&\mathrm{ESetInc}^{4k}_{2k,k,l,1/2}(x, y)\\
&\coloneqq\begin{cases}
-1, &\text{if $|x|=2k,|y|=k$  and  $|x \land y|=l-1/2$}, \\
1, &\text{if $|x|=2k,|y|=k$  and  $|x \land y|=l+1/2$}, \\
*, &\text{otherwise.}
\end{cases}
\end{align*}
As a result, we have 
$Q_{1/3}\inparen{\mathrm{ESetInc}^{4k}_{2k,k,l,1/2}} \ge Q_{1/3}\inparen{P} =  \Omega\inparen{\sqrt{kl}}$.
\end{proof}

\subsection{Quantum Upper Bound of Set-Inclusion Problem}\label{appendix_quantum_speedup}
We restate and prove \Cref{lemma:quantum_upper_bound_setinc} as follows:
\QuanUpperSetInc*
\begin{proof}
The proof of Lemma \ref{lemma:quantum_upper_bound_setinc} 
 is similar to the proof of Lemma \ref{le:extiXandY}. The only essential difference is that  \Cref{le:sample} is replaced by its quantum speedup version, \Cref{le:quantum_sample}.
For simplicity, we discuss the following two cases as the proof of \Cref{le:extiXandY}. For other cases, we can obtain the same result similarly. In the following proof, for $x \in \B^n$, let $S_x \coloneqq \inbrace{i:x_i = 1}$. Let $\Pi \coloneqq \sum_{i \in S_{x \land y}}\ket{i}\bra{i}$.

\begin{enumerate}
    \item \label{item:case1_quantum_upper_bound_setinc}
    $n_1 = c, n_2 = a-c$. In this case, the problem to compute $\mathrm{SetInc}_{a, b, c, g}^{n}(x,y)$ can be reduced equivalently to estimate $\frac{|x \land y|}{|x|}$. We first prove some special unitary operations can be constructed with $O(\log n)$ qubits communication. 
    \begin{enumerate}
        \item Since Alice knows all information about $x$, she can implement $S_x$ such that 
$\mathcal{S}_{x}\ket{0} = \frac{1}{\sqrt{a}}\sum_{i \in S_x}\ket{i}$
by herself without communication. Let $\ket{\phi_1} \coloneqq \mathcal{S}_{x}\ket{0}$. Then we have
\[
\bra{\phi_1}\Pi\ket{\phi_1} =     \frac{|x\land y|}{|x|}.
\]
\item For inputs $x,y \in \B^n$ and any $i \in [n]$, let unitary operator $O_x,O_y,O_{x \land y}$ be defined as
\[
\begin{aligned}
    O_x \ket{i} &\coloneqq (-1)^{x_i}\ket{i},\\
    O_y \ket{i} &\coloneqq (-1)^{y_i}\ket{i},\\
    O_{x \land y}\ket{i} &\coloneqq (-1)^{x_i y_i}\ket{i}.\\
\end{aligned}
\]
Then $I-2\Pi = O_{x \land y}$. For any $\ket{\phi} \coloneqq \sum_{i \in [n]}\alpha_i \ket{i}$, Alice and Bob can perform $I-2\Pi$ to any state $\ket{\phi}$ using $O(\log n)$ qubits of quantum communication as follows: i) Alice performs $O_x$ to $\ket{\phi}$ and sends $\ket{\phi'} = O_x \ket{\phi} = \sum_{i \in [n]}\alpha_i (-1)^{x_i} \ket{i}$ to Bob; ii) Bob performs $O_y$ to $\ket{\phi'}$ and obtains $\sum_{i \in [n]}\alpha_i (-1)^{x_iy_i} \ket{i} = O_{x \land y}\ket{\phi}$.
\end{enumerate}
By \Cref{le:quantum_sample}, a quantum speedup version of 
\Cref{le:sample}, and using an argument similar to Lemma \ref{le:extiXandY}, we can obtain the following conclusion:  using $O\inparen{\frac{\sqrt{n_1n_2}}{g}}$ unitary operations $\mathcal{S}_x$ and $I-2\Pi$, Alice and Bob can estimate $\frac{|x \land y|}{|x|}$ with errors at most $O\inparen{\frac{g}{a}}$. Furthermore, they can decide whether $|x\land y| \ge c+g$ or $|x \land y| \le c-g$ with success probability at least  $1- 1/(6\log n)$ using $O\inparen{\frac{\sqrt{n_1n_2}}{g} \log n \log\log n}$ qubits communication.
\item $n_1 =c$, $n_2 = n-a-b+c$ and $a+b < n$. In this case, the problem to compute $\mathrm{SetInc}_{a, b, c, g}^{n}(x,y)$ can be reduced equivalently to estimate $\frac{|x \land y|}{|\overline{x} \oplus y|}$. Similar to Case \ref{item:case1_quantum_upper_bound_setinc}, we first prove Alice and Bob can construct a quantum sampler $\mathcal{S}_{\overline{x} \oplus y}$ such that
\[
\mathcal{S}_{\overline{x}\oplus y}\ket{0}\ket{0} = \frac{1}{\sqrt{|N|}}\sum_{f \in [N]}\ket{ff}\ket{f(x,y)} = \ket{\phi_2},
\]
using $O\inparen{\log n}$ qubits communication, where $N = C \cdot \frac{n_1n_2}{g^2}$ for some enough large constant $C$ and each $f$ encodes a random function such that $f(x,y)$ is a uniform distribution of the elements in $S_{\overline{x} \oplus y}$.
Specifically, the construction procedure is as follows:
First, Alice and Bob use $O\inparen{\log\frac{n_1n_2}{g^2}}$ qubits communication to transform $\ket{0}$ into a maximum entanglement bipartite state $\frac{1}{\sqrt{N}}\sum_{f \in [N]}\ket{ff}$. Moreover, both Alice and Bob hold one part of the state. Second, by Fact \ref{fact:sample_oplus}, Alice and Bob can sample an element from $S_{\overline{x} \oplus y}$ uniformly using $O(\log n)$ bits of communication. Equivalently, Alice and Bob can generate a random function $f$ using public coins, and output $f(x,y)$ using $O(\log n)$ bits of communication such that the distribution of the output is a uniform distribution of the elements in $S_{\overline{x} \oplus y}$. Since quantum circuits can simulate classical circuits efficiently, Alice and Bob can perform a uniform operation that transforms $\ket{ff}\ket{b}$ into $\ket{ff}\ket{f(x,y)+b}$ for any $f \in N$ using $O\inparen{\log n}$ qubits communication, where both Alice and Bob hold one part of $\ket{ff}$,
$b \in \inbrace{0,1,...,n-1}$ and the addition is with modulo $n$. As a whole, Alice and Bob can construct the above quantum sampler $\mathcal{S}_{\overline{x}\oplus y}$ using $O\inparen{\log \frac{n_1n_2}{g^2} + \log n} = O\inparen{\log n}$ qubits communication.
Furthermore, since $f(x,y)$ is a uniform distribution of the elements in $S_{\overline{x} \oplus y}$ for any $f \in \inbrace{0,1,...,N-1}$,
$N = C \cdot \frac{n_1n_2}{g^2}$ for some enough large constant $C$, and 
\[
\bra{\phi_2}\inparen{I \otimes \Pi}\ket{\phi_2} = \frac{|\inbrace{f:f(x,y) \in S_{x \land y}}|}{N},
\]
we have 
\begin{equation}\label{eq:innerproduct_phi_two}
\left|\bra{\phi_2}\inparen{I \otimes \Pi}\ket{\phi_2}-\frac{|x\land y|}{|\overline{x} \oplus y|}\right|
  = O\inparen{\frac{g}{m}},
\end{equation}
similar to the analysis of Case \ref{a_b_less_n}  of Lemma \ref{le:extiXandY},
where $m = n_1+n_2$.
\end{enumerate}
By \Cref{le:quantum_sample}, a quantum speedup version of 
\Cref{le:sample}, and using an argument similar to Lemma \ref{le:extiXandY}, we can obtain the following conclusion,
using $O\inparen{\frac{\sqrt{n_1n_2}}{g}}$ unitary operations $\mathcal{S}_{|\overline{x} \oplus y|}$ and $I-2\Pi$, Alice and Bob can estimate  $\bra{\phi_2}\inparen{I \otimes \Pi}\ket{\phi_2}$ with errors at most $O\inparen{\frac{g}{m}}$. By (\ref{eq:innerproduct_phi_two}), Alice and Bob also can estimate $\frac{|x \land y|}{|\overline{x} \oplus y|}$ with errors at most $O\inparen{\frac{g}{m}}$. Next, using the same argument as \Cref{le:extiXandY}, we can prove that Alice and Bob can decide whether $|x\land y| \ge c+g$ or $|x \land y| \le c-g$ with success probability at least $1- 1/(6\log n)$ using $O\inparen{\frac{\sqrt{n_1n_2}}{g} \log n \log\log n}$ qubits communication.
\end{proof}


\section{Log-Approximate-Rank Conjecture of Permutation-Invariant Functions}\label{appendix_log_arank}
In this section, we restate and prove \Cref{lemma:rank_SecInc,lemma:lower_rank2}.
\rankSecInc*
\begin{proof}
By the definition of approximate rank (\Cref{def:arank}), to prove $\arank(A) \le \arank(B)$, it suffices to prove $A$ is a submatrix of $B$. Then we discuss three cases:
    \begin{enumerate}
        \item 
        For $x,y \in \B^n$,
        let 
        \[
        \begin{aligned}
            x' &\coloneqq x\underbrace{1\cdots 1}_{l_1}\underbrace{0\cdots 0}_{l_2}\underbrace{1\cdots 1}_{l_3}\underbrace{0\cdots 0}_{l-\inparen{l_1+l_2+l_3}},\\
            y' &\coloneqq y\underbrace{0\cdots 0}_{l_1}\underbrace{1\cdots 1}_{l_2}\underbrace{1\cdots 1}_{l_3}\underbrace{0\cdots 0}_{l-\inparen{l_1+l_2+l_3}}.\\
        \end{aligned}
        \]
        Then 
        $\mathrm{ESetInc}_{a, b, c, g}^{n}\inparen{x,y} = \mathrm{ESetInc}_{a+\ell_1 + \ell_3, b + \ell_2 + \ell_3, c+\ell_3, g}^{n+\ell}\inparen{x',y'}$.
        Thus, $\mathrm{ESetInc}_{a, b, c, g}^{n}$ is a submatrix of $\mathrm{ESetInc}_{a+\ell_1 + \ell_3, b + \ell_2 + \ell_3, c+\ell_3, g}^{n+\ell}$.
        \item For $x,y\in \B^n$, we have 
        \[
        \begin{aligned}
    \mathrm{ESetInc}_{a, b, c, g}^{n}(x,y) &= \mathrm{ESetInc}_{a, n-b, a-c, g}^{n}(x,\overline{y})\\
    &= \mathrm{ESetInc}_{n-a, b, b-c, g}^{n}(\overline{x},y).
    \end{aligned}
        \]
        \item Since $\mathrm{ESetInc}_{a, b, c, g}^{n}(x,y) =  \mathrm{ESetInc}_{ka, kb, kc, kg}^{kn}(\underbrace{x\cdots
        x}_{k},\underbrace{y\cdots y}_k)$, we have $\mathrm{ESetInc}_{a, b, c, g}^{n}$ is a submatrix of $\mathrm{ESetInc}_{ka, kb, kc, kg}^{kn}$.
    \end{enumerate}
\end{proof}
\lowerrank*
\begin{proof}
Let $P$ be the $\inparen{2k,k,f_{k,l}}$-pattern matrix, where $f_{k,l}$ is defined in (\ref{def:fkl}). By \Cref{lemma:partial}, $P$ is a submatrix of  $\mathrm{ESetInc}^{4k}_{2k,k,l,1/2}$. Thus, we have
\[
\log\inparen{\arank\inparen{\mathrm{ESetInc}^{4k}_{2k,k,l,1/2}}} \ge \log\inparen{\arank\inparen{P}}.
\]
By Facts \ref{reduction} and \ref{arank_lowerbound} and (\ref{eq:adeg}), we have
\[
\log\inparen{\arank\inparen{P}} =  
\Omega\inparen{\adeg\inparen{f_{k,l}}}.
\]
Thus, 
\[
\begin{aligned}
\log\inparen{\arank\inparen{\mathrm{ESetInc}^{4k}_{2k,k,l,1/2}}} &= \Omega\inparen{\adeg\inparen{f_{k,l}}} \\
&= \Omega\inparen{\sqrt{kl}},
\end{aligned}
\]
where the second equality comes from (\ref{eq:adeg}).
\end{proof}
\section{Communciation Complexity of Gap-Hamming-Distance Problem}\label{appendix_ghd}
Ref. \cite{GKS16} gave the lower and upper bounds on the randomized communication complexity of Gap-Hamming-Distance as \Cref{lemma_gks_lower,lemma_gks_upper}. We show \Cref{lemma_gks_lower,lemma_gks_upper} can be expressed as \Cref{lemma_gks_lower_2,lemma_gks_upper_2} equivalently.
\begin{lemma}[Lemma 3.3 in \cite{GKS16}]\label{lemma_gks_lower}
Fix $n\in{\mathbb{Z^+}}$. Consider $a,b \in \inbrace{1,...,n-1}$ and $c-g,c+g$ are achievable Hamming distances of $\Delta(x,y)$ when $|x| = a, |y| = b$.  Then 
	\begin{equation}\label{gks_upper}
    \begin{aligned}
		R(\GHD_{a,b,c,g}^n) &= \Omega\inparen{\frac{\min\inbrace{a,b,c,n-a,n-b,n-c}}{g}},\\
        R(\GHD_{a,b,c,g}^n) &= \Omega\inparen{\log\inparen{\frac{\min\inbrace{c,n-c}}{g}}}.
	\end{aligned}
    \end{equation}
\end{lemma}

\begin{lemma}\label{lemma_gks_lower_2}
Fix $n\in{\mathbb{Z^+}}$. Consider $a,b \in \inbrace{1,...,n-1}$ and $c-g,c+g$ are achievable Hamming weights of $|x \land y|$ when $|x| = a, |y| = b$. 
Let $n_1\coloneqq \min\{[a-c,c,b-c,n-a-b+c]\}$ and $n_2\coloneqq \min\left(\{[a-c,c,b-c,n-a-b+c]\}\setminus \inbrace{n_1}\right)$. Then $R(\SetInc_{a,b,c,g}^n)$ has two lower bounds: $\Omega\inparen{\frac{n_2}{g}}$ and
\[
\Omega\inparen{\log\frac{\min\inbrace{a+b-2c,n-a-b+2c}}{g}}.
\]
\end{lemma}
\begin{proof}
	By \Cref{def:GHD,def:setinc}, $\SetInc_{a,b,c,d}^n$ and $\GHD_{a,b,a+b-2c,2g}^n$ are the same problems. Thus, we have 
    \[
    R(\SetInc_{a,b,c,d}^n) = R(\GHD_{a,b,a+b-2c,2g}^n). 
	\]
    By \Cref{lemma_gks_lower}, $R(\GHD_{a,b,a+b-2c,2g}^n)$ has two lower bounds:
    \[
   \Omega\inparen{\frac{\min\inbrace{a,b,a+b-2c,n-a,n-b,n-(a+b-2c)}}{g}},\\
	\]
    and
        \[
    \Omega\inparen{\log\frac{\min\inbrace{a+b-2c,n-a-b+2c}}{g}}.
    \]
	Since 
	\begin{equation*}
		\begin{aligned}
			a &= (a-c)+c, \\ 
			b &= (b-c)+c,\\
			a+b-2c &= (a-c)+(b-c),\\
			n-a &= (n-a-b+c)+(b-c), \\
			n-b &= (n-a-b+c)+(a-c),\\
			n-(a+b-2c) &= (n-a-b+c)+c, \\   
		\end{aligned}
	\end{equation*}
	and $n_1,n_2$ are smallest two numbers in $a-c,c,b-c,n-a-b+c$,
	we have
	\[
    \begin{aligned}
&\min\inbrace{a,b,a+b-2c,n-a,n-b,n-(a+b-2c)} \\
&= n_1+n_2\\
&\ge n_2.
	\end{aligned}
    \]
\end{proof} 
 
\begin{lemma}[Lemma 3.4 in \cite{GKS16}]\label{lemma_gks_upper}
Fix $n\in{\mathbb{Z^+}}$. Consider $a,b \in \inbrace{1,...,n-1}$ and $c-g,c+g$ are achievable Hamming distances of $\Delta(x,y)$ when $|x| = a, |y| = b$. If $a \le b \le n/2$, then $R(\GHD_{a,b,c,g}^n)$ is
	\begin{equation}
		\label{gks_lower}
		 O\inparen{\min\inbrace{\inparen{\frac{a}{g}}^2\log \inparen{\frac{b}{g}},\inparen{\frac{c}{g}}^2,\inparen{\frac{n-c}{g}}^2}}.
	\end{equation}
\end{lemma}

\begin{lemma}\label{lemma_gks_upper_2}
Fix $n\in{\mathbb{Z^+}}$. Consider $a,b \in \inbrace{1,...,n-1}$ and $c-g,c+g$ are achievable Hamming weights of $|x \land y|$ when $|x| = a, |y| = b$. If $a \le b \le n/2$, then
\[		R(\SetInc_{a,b,c,g}^n) = O\inparen{\inparen{\frac{n_2}{g}}^2\log \frac{b}{g}}.
\]
\end{lemma}
\begin{proof}
Since $a \le b \le n/2$,	 similar to the proof of \Cref{lemma_gks_lower_2}, we have
	\[
	\begin{aligned}
	&\min\inbrace{a,a+b-2c,n-(a+b-2c)}\\
    &= 
	\min\inbrace{a,b,a+b-2c,n-a,n-b,n-(a+b-2c)} \\
	&=n_1+n_2 \\
	&\le 2n_2.
	\end{aligned}
	\]
	 By \Cref{lemma_gks_upper}, we have
	 	\[
	\begin{aligned}
	&R(\SetInc_{a,b,c,d}^n) \\
    &= R(\GHD_{a,b,a+b-2c,2g}^n) \\
	&=  O\inparen{\frac{\min\inbrace{a,a+b-2c,n-(a+b-2c)}^2}{(2g)^2}\log \inparen{\frac{b}{2g}}}\\
	&= O\inparen{\inparen{\frac{n_2}{g}}^2\log \inparen{\frac{b}{g}}}.\\
		\end{aligned}
	\]
\end{proof}

\section*{Acknowledgement}
	The authors  would like to thank the editor and anonymous reviewers for
their helpful comments and suggestions.
We also would like to express our gratitute to Arkadey Chattopadhyay for drawing our attention to the proper attribution of the resolution of the Log-Approximate-Rank Conjecture for $\mathrm{XOR}$-symmetric functions. Accordingly, we have revised the manuscript to correctly acknowledge the work of Chattopadhyay and Mande~\cite{CM17}, who first established this result.



\bibliography{ref}
\end{document}